\newtheorem{theorem}{Theorem}[section]
\newtheorem{lemma}[theorem]{Lemma}
\newtheorem{proposition}[theorem]{Proposition}
\theoremstyle{definition}
\newtheorem{definition}[theorem]{Definition}
\theoremstyle{remark}
\newtheorem{remark}[theorem]{Remark}
\DeclareMathOperator*{\esssup}{\mathrm{ess\,sup}}
\DeclareMathOperator*{\essinf}{\mathrm{ess\,inf}}
\numberwithin{equation}{section}
\begin{document}

\title{Markov Evolution of Continuum Particle Systems
    with Dispersion and Competition}

\author{Dmitri Finkelshtein\thanks{Institute of Mathematics,
     National Academy of Sciences of Ukraine,
     01601 Kiev-4, Ukraine, e-mail: fdl@imath.kiev.ua} \and
     Yuri Kondratiev\thanks{Fakult\"at f\"ur Mathematik, Universit\"at Bielefeld,
     Postfach 110 131, 33501 Bielefeld, Germany, e-mail:
     kondrat@math.uni-bielefeld.de} \and
    Yuri Kozitsky\thanks{Instytut Matematyki, Uniwersytet Marii Curie-Sk{\l}odwskiej,
     20-031 Lublin, Poland, e-mail:
        jkozi@hektor.umcs.lublin.pl} \and Oleksandr Kutoviy\thanks{Fakult\"at f\"ur Mathematik, Universit\"at Bielefeld,
     Postfach 110 131, 33501 Bielefeld, Germany, e-mail:
     kutoviy@math.uni-bielefeld.de}}
%\today

 \maketitle

\begin{abstract}
We construct birth-and-death Markov evolution of states
(distributions) of point particle systems in $\mathbb{R}^d$. In this
evolution, particles reproduce themselves at distant points
(disperse) and die under the influence of each other (compete). The
main result is a statement that the corresponding correlation
functions evolve in a scale of Banach spaces and remain
sub-Poissonian, and hence no clustering occurs, if the dispersion is
subordinate to the competition.
\end{abstract}

{\bf Keywords}: Markov evolution, Markov generator, stochastic
semigroup, sun-dual semigroup, spatial ecology, individual-based
model, correlation function, observable.

{\bf MSC (2010)}: 92D25, 60J80, 82C22

\section{Introduction and Overview}
\subsection{Introduction}

Dynamics of interacting particle systems distributed over discrete
regular sets, such as $\mathbb{Z}^d$, has been studied in great
detail, see, e.g., \cite{Durrett,Durrett1,Liggett,Penrose,Spohn}.
However, in many real world applications the underlying space should
essentially be continuous. In this paper, we study Markov evolution
of birth-and-death type of an infinite system of point particles
distributed over $\mathbb{R}^d$, $d\geq 1$. The particles reproduce
themselves, compete and die. The reproduction consists in random
(independent) sending by a particle located at $x$ an offspring to
point $y$, which immediately after that becomes a member of the
system. This process is described by a {\it dispersal kernel},
$a_{+}(x,y) \geq0$. Each particle  dies independently with constant
mortality rate $m\geq0$, as well as under the influence of the rest
of particles (density dependent mortality). The latter process is
described by a {\it competition kernel}, $a_{-} (x,y)\geq0$. The
kernels $a_{\pm}$ determine the corresponding rates in an additive
way. For instance, for the particle located at $x$, the overall
density dependent mortality rate is $\sum_{y}a_{-} (x,y)$, where the
sum is taken over all other particles. Systems of this kind are used
as individual-based models of large ecological communities of
entities (e.g., perennial plants) distributed over a continuous
space (habitat) and evolving in continuous time, cf. page 1311 in
\cite{Neuhauser}. They attract considerable attention of both
mathematicians and theoretical biologists, see, e.g.,
\cite{DimaN1,DimaN2,DimaN,Dima,Dima2,FM,Oles1,KKP,KS,Lach5} and
\cite{DY,BP1,BP2,BP3,DL,Mu}, respectively.

In theoretical biology, models of this kind were described by means
of `spatial moment equations', cf. \cite{BP1,BP2}. These are chains
of (linear) evolutional equations describing the time evolution of
densities and higher order moments\footnote{These moments correspond
to correlation functions used in this article.}, representing
`spatial covariances'. These equations involve the dispersal and
mortality rates mentioned above. The main difficulty encountered by
the authors of those and similar works is that the mentioned chains
are not closed, e.g., the time derivative of the density is
expressed through the two-point moment, whereas the time derivative
of the two-point moment is expressed through the moments of higher
order, etc. Typically, this difficulty is circumvented by means of a
`moment closure' ansatz, in which one approximates moments of order
higher than a certain value by the products of lower order moments,
e.g., the two-point moment is set to be the product of two
densities, and thereby the two-point covariances are neglected. The
equations obtained in this way are closed but nonlinear. An example
can be the Lotka--Volterra equations with spatial dependence derived
in \cite{BP2}. Similar equations are deduced from the microscopic
theory of systems of interacting particles living on $\mathbb{Z}^d$,
see e.g., equation (2) on page 137 in \cite{Durrett}. We refer to
\cite{Durrett,Durrett1,Neuhauser,Perthame} for more details and
references on this topic. At the same time, plenty of such equations
of population biology appear phenomenologically, without employing
individual-based models, cf. \cite{zhao}.

As is well-understood now, the mentioned nonlinear equations play
the role of kinetic equations known in the statistical theory of
Hamiltonian dynamical systems in the continuum. Nowadays, the latter
equations are derived from microscopic equations by means of scaling
procedures, see the corresponding discussion in
\cite{maslov,Dob,P,spohn,Spohn} for more detail. We believe that
also in population biology and other life sciences, the use of
individual-based models with continuous habitat will provide an
adequate mathematical framework for describing the collective
behavior of large systems of interacting entities. In that we mean
the following program realized in part in the present paper. The
states of the model are probability measures on the space of
configurations of particles in $\mathbb{R}^d$. Their dynamics is
described as a Markov evolution by means of
Fokker-Planck-Kolmogorov-type equations.  As in the case of
Hamiltonian systems, this evolution can be constructed by means of
the evolution for the corresponding correlation functions. The
mesoscopic description, which neglects certain features of the model
behavior, is then obtained by a scaling procedure. In its framework,
one studied the scaling limit $\varepsilon \to 0$, in which the
correlation functions converge to `mesoscopic' correlation
functions. By virtue of the scaling procedure, the evolution of the
latter functions `preserves chaos', which means that at each instant
of time such functions are the products of density functions if the
initial correlation functions possess this property. The
corresponding kinetic equation is then the equation for the density
function. Typically, this is a nonlinear and nonlocal
equation\footnote{Cf. the discussion in \cite{Furter}.}. We refer
the reader to \cite{DimaN1,DimaN} for more detail. Generally
speaking, the aim of the present paper is to go further in
developing the micro- and mesoscopic descriptions of the model
mentioned above comparing to what was done in \cite{DimaN2,Dima}. A
more specific presentation of our aims and the results obtained in
this article is given in the next subsection, see also the
concluding remarks in Section~\ref{SCon}.

As was suggested already in \cite{BP1}, the right mathematical
context for studying individual-based models of ecological systems
is the theory of random point fields in $\mathbb{R}^d$, cf. also
page 1311 in \cite{Neuhauser}. Herein, populations are modeled as
particle configurations constituting the configuration space
\begin{equation} \label{C1}
 \Gamma\equiv\Gamma(\mathbb R^d):=\{\gamma\subset\mathbb R^d :
 |\gamma\cap K|<\infty\text{ for any compact $K\subset\mathbb R^d$
 }\},
\end{equation}
where $|A|$ stands for the cardinality of $A$.
Noteworthy, along with finite ones $\Gamma$ contains also infinite
configurations, which allows for describing `bulk' properties of a
large finite system ignoring boundary and size effects\footnote{A
discussion on how infinite systems provide approximations for large
finite systems see, e.g., \cite{Cox,Do}.}. Note that if the initial configuration
$\gamma_0$ is fixed, the evolution might be described as a map
$t\mapsto \gamma_t \in \Gamma$, which in view of the random
character of the events mentioned above ought to be a random
process. However, at least so far, for the model considered here
this way can be realized only if $\gamma_0$ is finite \cite{FM}. On the other hand,
the statistical description of infinite interacting particle systems plays a fundamental role in modern mathematical physics and applications, see, e.g., \cite{Dob}. Namely, the evolution of infinite system should be considered in
terms of dynamics of probability measures (states) on $\Gamma$ rather than point-wisely. To characterize them
one employs {\it observables}, which are appropriate functions
$F:\Gamma \rightarrow \mathbb{R}$. The quantity
\begin{equation*}
 \langle \! \langle F, \mu \rangle \! \rangle = \int_{\Gamma} F (\gamma) \mu(d \gamma)
\end{equation*}
is called the value of observable $F$ in state $\mu$. Then the
system evolution might be described as the evolution of observables
obtained from the Kolmogorov equation
\begin{equation}
 \label{R2}
\frac{d}{dt} F_t = L F_t , \qquad F_t|_{t=0} = F_0, \qquad t>0,
\end{equation}
where the `generator' $L$ specifies the model. The evolution of
states is obtained from the Fokker--Planck equation
\begin{equation}
 \label{R1}
\frac{d}{dt} \mu_t = L^* \mu_t, \qquad \mu_t|_{t=0} = \mu_0,
\end{equation}
related to (\ref{R2}) by the duality
\begin{equation*}
\langle \! \langle F_0, \mu_t \rangle \! \rangle = \langle \!
\langle F_t, \mu_0 \rangle \! \rangle.
\end{equation*}
However, for the model considered in this article, even the mere
definition of the equations in (\ref{R2}) and (\ref{R1}) in
appropriate spaces is rather impossible as the phase space $\Gamma$
contains infinite configurations. Following classical works on the
Hamiltonian dynamics \cite{maslov,Dob,spohn} one can try to study
the evolution of states $\mu_0 \mapsto \mu_t$ via the evolution of
the corresponding correlation functions. For a measure $\mu$ and a
bounded measurable $\Lambda \subset \mathbb{R}^d$, the probability
that in state $\mu$ there is $m$ particles in $\Lambda$ can be
expressed through the $n$-point correlation functions $k^{(n)}_\mu$
with $n\geq m$, see, e.g., \cite{Ruelle}. In particular, for the
Poisson measure $\pi_\varkappa$, see subsection~\ref{SSSec212}
below, $k_{\pi_\varkappa}^{(n)}(x_1 , \dots , x_n) =\varkappa^n$ for
all $n\in \mathbb{N}_0$.

In general, the correlation function $k_\mu$ is a collection of
symmetric functions $k_\mu^{(n)}:(\mathbb{R}^d)^n \to \mathbb{R}$,
$n\in \mathbb{N}_0$, and $k_\mu^{(1)}$ is the particle density. The
evolution $k_0\mapsto k_t$ is obtained from the equation
\begin{equation}
 \label{R4}
\frac{d}{dt} k_t = L^\Delta k_t, \qquad k_t|_{t=0} = k_0,
\end{equation}
which, in fact, is a chain of equations\footnote{For Hamiltonian
systems, the analog of (\ref{R4}) is known as the BBGKY hierarchy.}
for particular $k_t^{(n)}$. Here $L^\Delta$ is constructed from $L$
as in (\ref{R2}) by a certain procedure. According to our program,
the microscopic description consists in proving the existence of
solutions of (\ref{R4}) in appropriate Banach spaces, and in
studying their properties. The next step is to show that these
solutions are correlation functions for some states, which can be
done by showing that they obey certain bounds and are positive
definite in a certain sense, see Proposition~\ref{rhopn}. An
important property of $k_t$ is being {\it sub-Poissonian}, which
means that, for some $C_t>0$, each $k_t^{(n)}$ is bounded by
$C_t^n$. This property can be guaranteed by the appropriate choice
of the Banach space in which one solves (\ref{R4}). Note that the
increase of $k_t^{(n)}$ with $n$ as $n!$ (see \eqref{nfactorial}
below), would correspond to the formation of clusters due to
dispersion. In the present article, we especially address the
question concerning the role of the competition in preventing such
clustering.

\subsection{The overview}

For the considered model, the `generator' in (\ref{R2}) reads
\begin{eqnarray}\label{R20}
(LF)(\gamma) &=& \sum_{x\in \gamma}\left[ m + E^{-}
 (x, \gamma\setminus x) \right]\left[F(\gamma\setminus x) - F(\gamma) \right]\\[.2cm]
&& + \int_{\mathbb{R}^d} E^{+} (y, \gamma) \left[ F(\gamma\cup y) - F(\gamma) \right]dy, \nonumber
\end{eqnarray}
where
\begin{equation}
 \label{Ra20}
E^{\pm}(x, \gamma) := \sum_{y\in \gamma} a_{\pm} (x,y).
\end{equation}
This is a birth-and-death generator for continuous system
(see one of the pioneering papers \cite{HS1978} and also the references in \cite{DimaN2}) in which the first term corresponds to the death of the particle located at $x$
occurring (a) independently with rate $m\geq 0$, and (b) under the
influence of the other particles in $\gamma$ with rate $E^{-}(x,
\gamma\setminus x)\geq 0$. Here and in the sequel in the
corresponding context, we treat each $x\in \mathbb{R}^d$ also as a
single-point configuration $\{x\}$. Note that $a_{-}(x,y)$ describes the interparticle competition. The
second term in (\ref{R20}) describes the birth of a particle at
$y\in \mathbb{R}^d$ given by the whole configuration $\gamma$ with
rate $E^{+}(y, \gamma)\geq 0$. A~particular case of this model is the continuous contact model \cite{KKP,KS} where $a_{-} \equiv 0$, and hence the competition is
absent, see also \cite{Dima}.

As was mentioned above, one of the main question is to study the Cauchy
problem (\ref{R4}) with the corresponding operator $L^\Delta$ in proper Banach spaces. The main characteristic feature of such
spaces is that they should contain only sub-Poissonian correlation
functions. Note that for the contact model ($a_{-} \equiv 0$), mentioned above, it is
known  \cite{Dima} that
\begin{equation}\label{nfactorial}
\mathrm{const}\cdot n!\, c_t^n\leq k_t^{(n)} (x_1 , \dots , x_n)
\leq \mathrm{const}\cdot n!\, C_t^n,
\end{equation}
where the left-hand inequality holds if all $x_i$ belong to a ball
of small enough radius. Hence, in spite of the fact that $C_t \to 0$
as $t\to +\infty$ if the mortality dominates the dispersion (as in
(\ref{yy12}) below), $k_t$ are definitely not sub-Poissonian if
$a_{-} \equiv 0$. On the contrary, according to Theorem~\ref{2tm} below, if, for some
$\theta>0$, we have that $a_{+} \leq \theta a_{-}$ pointwise, cf.
(\ref{z14}), then (\ref{R4}) has a unique (classical) sub-Poissonian
solution on a bounded time interval. It is worth noting, that a solution of (\ref{R4}) is the correlation function of a probability measure on $\Gamma$ if only
it possesses a certain positivity property. In Theorem~\ref{r1tm},
we show that the solution $k_t$, existing according to
Theorem~\ref{2tm}, has this property if $m$ dominates $a_{+}$ in the
sense of (\ref{yy12}).

The rest of the paper is organized as follows. In
Section~\ref{Sec2}, we introduce a necessary mathematical framework
and give a formal definition of the model. The evolution of
correlation functions is studied in Section~\ref{Sec4}. It is,
however, preceded by the study of certain auxiliary objects, {\em
quasi-observables}, the evolution of which is generated by
$\widehat{L}$ whose dual, in the sense of (\ref{19A}), is
$L^\triangle$. This is done in Section~\ref{Sec3}, where we use a
combination of $C_0$-semigroup techniques in ordered Banach spaces
with an Ovcyannikov-type method, which yields the evolution of
quasi-observables in a scale of Banach spaces on a bounded
time-interval, see Theorem~\ref{1tm}. The main peculiarity of the
evolution $k_0\mapsto k_t$ described in Theorem~\ref{2tm} is that
the corresponding Banach spaces are of $L^\infty$-type, which forced
us to use a combination of $C_0$-semigroups, sun-dual to those from
Section~\ref{Sec3}, with Ovcyannikov's method. In
Section~\ref{Sec4}, in addition to the classical solutions of the
Cauchy problem for correlation functions and quasi-observables, we
also study the dual evolutions defined in (\ref{z46}) and
(\ref{z47}). Similarly to the usual $C_0$-semigroup framework, we
obtain that the classical evolution of correlation functions
coincides with the evolution which is dual to the evolution of
quasi-observables, see Proposition~\ref{newprop}. The results of
Section~\ref{Sec4} are used for proving Theorem~\ref{r1tm}
concerning the dynamics of states. Another ingredient of our study
of the dynamics of states is Lemma~\ref{y3tm} where the dynamics of
local densities is described. Note that the latter evolution might
be extended to the evolution of states supported on finite
configurations, that provides an alternative way of constructing the
evolution $\gamma_0\mapsto\gamma_t$ mentioned above. The concluding
remarks are presented in Section~\ref{SCon}.

In the second part of this work, which will be published as a
separate paper, we perform the following. In the framework of the
scaling approach developed in \cite{DimaN1,DimaN}, we pass to the
following analog of (\ref{R4})
\begin{equation}
  \label{QRA}
\frac{d}{d t} k_{t, {\rm ren}}^{(\varepsilon)} = (V+ \varepsilon
C)k_{t, {\rm ren}}^{(\varepsilon)}, \qquad k_{t, {\rm
ren}}^{(\varepsilon)}|_{t=0} = r_0,
\end{equation}
where $V$ and $C$ are certain operators such that $L^\Delta = V +
C$, and the initial correlation function $r_0$ is such that, for
$n\in \mathbb{N}$,
\begin{equation}
  \label{QR}
 r^{(n)}_0 (x_1 , \dots , x_n) = \varrho_0 (x_1) \cdots \varrho
 (x_n).
\end{equation}
For the problem (\ref{QRA}), the statement of Theorem~\ref{2tm}
holds true for all $\varepsilon \in (0,1]$. Passing to the limit
$\varepsilon \to 0$ in the equation above we arrive at
\begin{equation}
  \label{QR1}
\frac{d}{d t} r_{t} = Vr_{t}, \qquad r_{t}|_{t=0} = r_0.
\end{equation}
For the latter problem, the statement of Theorem~\ref{2tm} also
holds true, even without the restriction imposed in (\ref{z14}).
Next,for the solutions of (\ref{QRA}) and (\ref{QR1}),  we prove
that $k_{t, {\rm ren}}^{(\varepsilon)} \to r_t$ as $\varepsilon \to
0$, which holds uniformly on compact subsets of the time interval
and in the spaces where we solve both problems. The peculiarity of
the equation in (\ref{QR1}) is such that its solution can be sought
in the product form of (\ref{QR}), with the corresponding density
$\varrho_t$. This leads to the following equations
\begin{equation}
  \label{QR2}
\frac{d}{d t} \varrho_{t}(x) = - m \varrho_t(x) +
\int_{\mathbb{R}^d} a_{+}(x,y) \varrho_t (y) dy - \varrho_t (x)
\int_{\mathbb{R}^d} a_{-}(x,y) \varrho_t (y) dy,
\end{equation}
where $a_{\pm}$ are the same as in (\ref{Ra20}). Then we prove that
the above equation has a unique classical solution in a ball in
$L^\infty(\mathbb{R}^d)$ such that $r_t$ expressed as the product of
these $\varrho_t$ is the unique solution of (\ref{QR}). We also find
some interesting properties of the solutions of (\ref{QR2}). Note
that a particular case of (\ref{QR2}) with
\[
a_{+} (x,y) = a_{-} (x,y) = \psi(x-y),
\]
was derived in \cite{Durrett} (a crabgrass model, see also page 1307
in \cite{Neuhauser}). The front propagation in the crabgrass model
was studied in \cite{Perthame}.

\section{The Basic Notions and the Model}\label{Sec2}
\subsection{The notions}

All the details of the mathematical framework of this paper can be
found in \cite{Albev,Dima,Dima2,Tobi,Oles0,KKP,KS,Obata}. Recall
that we consider an infinite system of point particles distributed
over $\mathbb{R}^d$, $d\geq 1$.
 By $\mathcal{B}(\mathbb{R}^d)$ and
 $\mathcal{B}_{\rm b}(\mathbb{R}^d)$ we denote the set of all
Borel and all bounded Borel subsets of $\mathbb{R}^d$, respectively.

\subsubsection{The configuration spaces}

The configuration space $\Gamma$ has been defined in (\ref{C1}).
Each $\gamma \in \Gamma$ can be identified with the following
positive integer-valued Radon measure
\[
\gamma(dx)= \sum_{y\in \gamma} \delta_y (dx) \in \mathcal{M}(\mathbb{R}^d),
\]
where $\delta_y$ is the Dirac measure centered at $y$, and
$\mathcal{M}(\mathbb{R}^d)$ stands for the set of all positive Radon
measures on $\mathcal{B}(\mathbb{R}^d)$. This allows us to consider
$\Gamma$ as the subset of $\mathcal{M}(\mathbb{R}^d)$, and hence to
endow it with the vague topology. By definition, this is the weakest
topology in which all the maps
\[
\Gamma \ni \gamma \mapsto
\int_{\mathbb{R}^d} f(x) \gamma (dx)= \sum_{x\in \gamma} f(x) , \quad f\in C_0 (\mathbb{R}^d),
\]
are continuous. Here $C_0 (\mathbb{R}^d)$ stands for the set of all
continuous functions $f:\mathbb{R}^d \rightarrow \mathbb{R}$ which
have compact supports. The vague topology on $\Gamma$ admits a
metrization which turns it into a complete and separable metric
(Polish) space, see, e.g., Theorem 3.5 in \cite{Oles0}. By
$\mathcal{B}(\Gamma)$ we denote the corresponding Borel
$\sigma$-algebra.

For $n\in\mathbb{N}_0:=\mathbb{N}\cup\{0\}$, the set of $n$-particle
configurations in $\mathbb{R}^d$  is
 \[ \Gamma^{(0)} = \{ \emptyset\}, \qquad \Gamma^{(n)} = \{\eta \subset X: |\eta| = n \}, \ \ n\in \mathbb{N}.
 \]
For $n\geq 2$, $\Gamma^{(n)}$ can be identified with the
symmetrization of the set
\[
\left\{(x_1, \dots , x_n)\in \bigl(\mathbb{R}^{d}\bigr)^n: x_i \neq x_j, \ {\rm for} \ i\neq j\right\},
\]
which allows one to introduce the corresponding topology on
$\Gamma^{(n)}$ and hence the Borel $\sigma$-algebra
$\mathcal{B}(\Gamma^{(n)})$. The set of finite configurations
$\Gamma_0$ is the disjoint union of $\Gamma^{(n)}$, that is,
\begin{equation*}
\Gamma_{0} = \bigsqcup_{n\in \mathbb{N}_0} \Gamma^{(n)} .
\end{equation*}
We endow $\Gamma_0$ with the topology of the disjoint union and
hence with the Borel $\sigma$-algebra $\mathcal{B}(\Gamma_{0})$.
Obviously, $\Gamma_0$ can also be considered as a subset of
$\Gamma$. However, the topology just mentioned and that induced on
$\Gamma_0$ from $\Gamma$ do not coincide.

In the sequel, $\Lambda\subset \mathbb{R}^d$ will always denote a
bounded measurable subset, i.e., $\Lambda \in \mathcal{B}_{\rm
b}(\mathbb{R}^d)$. For such $\Lambda$, we set
\[
 \Gamma_\Lambda = \{ \gamma \in \Gamma: \gamma = \gamma \cap \Lambda\}.
\]

Clearly, $\Gamma_\Lambda$ is a measurable subset of $\Gamma_0$ and
the following holds
\[
\Gamma_\Lambda = \bigsqcup_{n\in \mathbb{N}_0} \Gamma^{(n)}_\Lambda,
\qquad \Gamma^{(n)}_\Lambda:=\Gamma^{(n)} \cap \Gamma_\Lambda,
\]
which allows one to equip $\Gamma_\Lambda$ with the topology induced
by that of $\Gamma_0$. Let $\mathcal{B}(\Gamma_\Lambda)$ be the
corresponding Borel $\sigma$-algebra. It can be proven, see
Lemma~1.1 and Proposition~1.3 in \cite{Obata}, that
\begin{equation*}
\mathcal{B}(\Gamma_\Lambda) = \{ \Gamma_\Lambda \cap \Upsilon : \Upsilon \in \mathcal{B}(\Gamma)\}.
\end{equation*}
Next, we define the projection
\begin{equation}
 \label{K3}
\Gamma \ni \gamma \mapsto p_\Lambda (\gamma)= \gamma_\Lambda :=
\gamma\cap \Lambda, \qquad \Lambda \in \mathcal{B}_{\rm b}
(\mathbb{R}^d).
\end{equation}
It is known, cf. page 451 in \cite{Albev}, that
$\mathcal{B}(\Gamma)$ is the smallest $\sigma$-algebra of subsets of
$\Gamma$ such that the maps $p_\Lambda$ with all $\Lambda \in
\mathcal{B}_{\rm b} (\mathbb{R}^d)$ are
$\mathcal{B}(\Gamma)/\mathcal{B}(\Gamma_\Lambda)$ measurable. This
means that $(\Gamma, \mathcal{B}(\Gamma))$ is the projective limit
of the measurable spaces $(\Gamma_\Lambda,
\mathcal{B}(\Gamma_\Lambda))$, $\Lambda \in \mathcal{B}_{\rm b}
(\mathbb{R}^d)$.

\subsubsection{Measures and functions on configuration spaces}
\label{SSSec212} The basic examples of measures on $\Gamma$ and
$\Gamma_0$ are the Poisson measure $\pi$ and the Lebesgue--Poisson
measure $\lambda$, respectively, cf. Section 2.2 in \cite{Albev}.

The image of the Lebesgue product measure $dx_1 dx_2 \cdots dx_n $ in
$(\Gamma^{(n)}, \mathcal{B}(\Gamma^{(n)}))$ is denoted by $\sigma^{(n)}$.
For $\varkappa>0$, the Lebesgue--Poisson measure on $(\Gamma_0,
\mathcal{B}(\Gamma_0))$ is
\begin{equation}
 \label{1A}
 \lambda_\varkappa := \delta_{\emptyset} + \sum_{n=1}^\infty \frac{\varkappa^n}{n!} \sigma^{(n)}.
\end{equation}
For $\Lambda\in \mathcal{B}_{\rm b}(\mathbb{R}^d)$, the restriction
of $\lambda_\varkappa$ to $\Gamma_{\Lambda}$ will be denoted by
$\lambda^\Lambda_\varkappa$. However, we shall drop the superscript
if no ambiguity arises. Clearly, $\lambda^\Lambda_\varkappa$ is a
finite measure on $\mathcal{B}(\Gamma_\Lambda)$ such that
$\lambda^\Lambda_\varkappa (\Gamma_\Lambda)= e^{
\varkappa\ell(\Lambda)}$, where $\ell(\Lambda)$ is the Lebesgue
measure of $\Lambda$. Then
\begin{equation}
 \label{3A}
 \pi^\Lambda_\varkappa := \exp ( - \varkappa\ell(\Lambda)) \lambda^\Lambda_\varkappa
\end{equation}
is a probability measure on $\mathcal{B}(\Gamma_\Lambda)$. It can be
shown \cite{Albev} that the family
$\{\pi^\Lambda_\varkappa\}_{\Lambda \in \mathcal{B}_{\rm
b}(\mathbb{R}^d)}$ is consistent, and hence there exists a unique
probability measure, $\pi_\varkappa$, on $\mathcal{B}(\Gamma)$ such
that
\[
 \pi^\Lambda_\varkappa = \pi_\varkappa \circ p^{-1}_\Lambda, \qquad \Lambda \in \mathcal{B}_{\rm b} (\mathbb{R}^d),
\]
where $p_\Lambda$ is as in (\ref{K3}). This $\pi_\varkappa$ is called the Poisson measure with intensity $\varkappa>0$. If $\varkappa=1$ we shall drop the subscript and consider the Lebesgue--Poisson measure $\lambda$ and the Poisson measure $\pi$.

Now we turn to functions on $\Gamma_0$ and $\Gamma$.
In fact,
any measurable $G: \Gamma_0 \rightarrow \mathbb{R}$ is a sequence of measurable symmetric functions $G^{(n)} :
(\mathbb{R}^d)^n \rightarrow \mathbb{R}$.
A measurable
$F:\Gamma \rightarrow \mathbb{R}$ is called a cylinder
function if there exist $\Lambda \in \mathcal{B}_{\rm
b}(\mathbb{R}^d)$ and a measurable $G: \Gamma_{\Lambda}\rightarrow
\mathbb{R}$ such that, cf. (\ref{K3}), $F(\gamma) = G(\gamma_\Lambda)$ for all
$\gamma \in \Gamma$. By $\mathcal{F}_{\rm cyl}(\Gamma)$ we denote
the set of all cylinder functions.

A set $\Upsilon\in \mathcal{B}(\Gamma_0)$ is said to be bounded if
\begin{equation}
 \label{6A}
 \Upsilon \subset \bigsqcup_{n=0}^N \Gamma^{(n)}_\Lambda
\end{equation}
for some $\Lambda \in \mathcal{B}_{\rm b}(\mathbb{R}^d)$ and $N\in
\mathbb{N}$. By $B_{\rm bs} (\Gamma_0)$ we denote the set of all
bounded measurable functions $G: \Gamma_0 \rightarrow \mathbb{R}$,
which have bounded supports. That is, each such $G$ is the zero
function on $\Gamma_0 \setminus \Upsilon$ for some bounded $\Upsilon$.
For $\gamma \in \Gamma$, by
writing $\eta \Subset \gamma$ we mean that $\eta \subset \gamma$ and
$\eta$ is finite, i.e., $\eta \in \Gamma_0$. For $G \in B_{\rm
bs}(\Gamma_0)$, we set
\begin{equation}
 \label{7A}
 (KG)(\gamma) = \sum_{\eta \Subset \gamma} G(\eta), \qquad \gamma \in \Gamma.
\end{equation}
Obviously, $K$ is a linear and positivity preserving map, which maps
$B_{\rm bs}(\Gamma_0)$ into $\mathcal{F}_{\rm cyl}(\Gamma)$, see,
e.g., \cite{Tobi}. In the sequel, we use the following set
\begin{equation}
 \label{9AY}
 B_{\rm bs}^+(\Gamma_0) := \{ G \in B_{\rm bs}(\Gamma_0): KG \not\equiv 0, \quad  (KG)(\gamma) \geq 0 \ \ \ {\rm for} \ \ {\rm all} \ \ \gamma \in \Gamma\}.
\end{equation}
By $\mathcal{M}^1_{\rm fm} (\Gamma)$ we denote the set of all
probability measures on $\mathcal{B}(\Gamma)$ that have finite
local moments, that is, for which
\begin{equation*}
 \int_\Gamma |\gamma_\Lambda|^n \mu(d \gamma) < \infty, \qquad \text{for all $n\in \mathbb{N}$ and $\Lambda \in \mathcal{B}_{\rm b} (\mathbb{R}^d)$}.
\end{equation*}
A measure $\mu \in \mathcal{M}^1_{\rm fm} (\Gamma)$
is said to be {\it locally absolutely continuous} with respect
to the Poisson measure $\pi$ if, for every
$\Lambda \in \mathcal{B}_{\rm b} (\mathbb{R}^d)$, $\mu^\Lambda := \mu \circ p_\Lambda^{-1}$
is absolutely continuous with respect to $\pi^\Lambda$, cf.~(\ref{3A}).
A measure $\rho$ on $(\Gamma_0, \mathcal{B}(\Gamma_0))$ is said
to be {\it locally finite} if $\rho(\Upsilon)< \infty$ for every bounded
$\Upsilon\subset \Gamma_0$. By $\mathcal{M}_{\rm lf} (\Gamma_0)$ we denote the set of all such measures.
For a bounded $\Upsilon\subset \Gamma_0$, let $\mathbb{I}_\Upsilon$ be its
indicator function. Then $\mathbb{I}_\Upsilon$ is in $B_{\rm
bs}(\Gamma_0)$ and hence one can apply (\ref{7A}). For $\mu \in
\mathcal{M}^1_{\rm fm} (\Gamma)$, the representation
\begin{equation}
 \label{9A}
 \rho_\mu (\Upsilon) = \int_{\Gamma} (K\mathbb{I}_\Upsilon) (\gamma) \mu(d \gamma)
\end{equation}
determines a unique measure $\rho_\mu \in \mathcal{M}_{\rm
lf}(\Gamma_0))$. It is called the {\it correlation measure} for
$\mu$. Then (\ref{9A}) defines the map $K^*: \mathcal{M}^1_{\rm fm} (\Gamma)
\rightarrow \mathcal{M}_{\rm lf}(\Gamma_0)$ such that $K^*\mu =
\rho_\mu$. In particular, $K^*\pi = \lambda$. It is known,
see Proposition 4.14 in \cite{Tobi}, that $\rho_\mu $ is absolutely
continuous with respect to $\lambda$ if $\mu$ is locally absolutely
continuous with respect to $\pi$. In this case, for any
$\Lambda \in \mathcal{B}_{\rm b} (\mathbb{R}^d)$, we have that
\begin{eqnarray}
 \label{9AA}
 k_\mu (\eta) = \frac{d \rho_\mu}{d \lambda}(\eta) &=&
\int_{\Gamma_{\Lambda}} \frac{d\mu^\Lambda}{d \pi^\Lambda} (\eta \cup \gamma) \pi^\Lambda (d \gamma)\\[.2cm]
&=& \int_{\Gamma_{\Lambda}} \frac{d\mu^\Lambda}{d \lambda^\Lambda} (\eta \cup \gamma) \lambda^\Lambda (d \gamma).\label{9AAghf}
\end{eqnarray}
The Radon--Nikodym derivative $k_\mu$ is called the {\it correlation
function} corresponding to the measure $\mu$. In the sequel, we
shall tacitly assume that the equalities or inequalities, like
(\ref{9AAghf}) or (\ref{9AW}), hold for $\lambda$-almost all $\eta
\in \Gamma_0$. The following fact is known, see Theorems 6.1 and 6.2
and Remark 6.3 in \cite{Tobi}.
\begin{proposition}
 \label{rhopn}
Let $\rho\in \mathcal{M}_{\rm lf}(\Gamma_0)$ have the following properties:
\begin{equation}
 \label{9AZ}
 \rho\left( \emptyset \right) = 1, \qquad \ \ \int_{\Gamma_0} G(\eta) \rho(d\eta) \geq 0 \quad {\rm for} \ \ {\rm all} \ \ G\in B_{\rm bs}^+(\Gamma_0).
\end{equation}
Then there exist $\mu\in \mathcal{M}_{\rm fm}^1(\Gamma)$ such that $K^* \mu = \rho$. Such $\mu$ is unique if
\begin{equation}
 \label{9AW}
 \frac{d\rho}{d \lambda} (\eta) \leq \prod_{x\in \eta} C (x),\qquad \eta\in\Gamma_0,
\end{equation}
for some locally integrable $C:\mathbb{R}^d \to \mathbb{R}_+$.
\end{proposition}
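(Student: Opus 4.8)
\emph{Proof proposal.} This is a version of Lenard's theorem on the existence and uniqueness of a point process with prescribed correlation measure, and I would follow the operator-theoretic route organized around the $K$-transform (\ref{7A}). For existence, the plan is to convert $\rho$ into a positive normalized linear functional and then realize it as integration against a measure on $\Gamma$. Since $K$ is linear, positivity preserving and injective on $B_{\rm bs}(\Gamma_0)$, the formula $\Phi(KG):=\int_{\Gamma_0}G\,d\rho$ defines a linear functional on $\mathcal{V}:=K(B_{\rm bs}(\Gamma_0))\subset\mathcal{F}_{\rm cyl}(\Gamma)$, and the two conditions in (\ref{9AZ}) say exactly that $\Phi$ is positive on $\mathcal{V}$ (if $F\in\mathcal{V}$, $F\geq0$, then $\Phi(F)\geq0$) and that $\Phi(1)=1$, since $K\mathbb{I}_{\{\emptyset\}}\equiv1$. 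The next step is, for each $\Lambda\in\mathcal{B}_{\rm b}(\mathbb{R}^d)$, to manufacture a probability measure $\mu^\Lambda$ on $\Gamma_\Lambda$ out of the restriction of $\Phi$ to cylinder functions over $\Gamma_\Lambda$. Here one exploits that on $\Gamma_\Lambda$ — which plays the role of a space of finite configurations — the local $K$-transform $K_\Lambda$ is invertible, with the M\"obius-type inverse $(K_\Lambda^{-1}F)(\eta)=\sum_{\xi\subset\eta}(-1)^{|\eta\setminus\xi|}F(\xi)$, and one sets $\mu^\Lambda(\Upsilon):=\int_{\Gamma_\Lambda}(K_\Lambda^{-1}\mathbb{I}_\Upsilon)\,d\rho$ for bounded $\Upsilon\subset\Gamma_\Lambda$. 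For such $\Upsilon$ the integrand is the zero-extension of an element of $B_{\rm bs}^+(\Gamma_0)$, so $\mu^\Lambda(\Upsilon)\geq0$ by (\ref{9AZ}); taking $\Upsilon=\Gamma_\Lambda$ and using $\sum_{\xi\subset\eta}(-1)^{|\eta\setminus\xi|}=\mathbb{I}_{\{\eta=\emptyset\}}$ shows the total mass is $\rho(\{\emptyset\})=1$, and local finiteness of $\rho$ lets one extend $\mu^\Lambda$ to a genuine probability measure on $\mathcal{B}(\Gamma_\Lambda)$.

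It then remains to glue the family $\{\mu^\Lambda\}$ and verify the conclusion. For $\Lambda\subset\Lambda'$ the measures are consistent, $\mu^{\Lambda'}\circ p_\Lambda^{-1}=\mu^\Lambda$, because the local $K$-transforms are compatible with the projections $p_\Lambda$ and the inversion is unique; invoking the projective-limit description of $(\Gamma,\mathcal{B}(\Gamma))$ recalled after (\ref{K3}), together with a Kolmogorov-type extension argument, produces $\mu$ on $\mathcal{B}(\Gamma)$ with $\mu\circ p_\Lambda^{-1}=\mu^\Lambda$. Finally one checks that $\mu\in\mathcal{M}^1_{\rm fm}(\Gamma)$ — the local moments $\int_\Gamma|\gamma_\Lambda|^n\,\mu(d\gamma)$ being finite because they are expressible through the finite quantities $\rho(\Gamma^{(k)}_\Lambda)$, $k\leq n$ — and that $K^*\mu=\rho$, which is built into the construction via $\rho|_{\Gamma_\Lambda}=K_\Lambda^*\mu^\Lambda$.

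For uniqueness, assume (\ref{9AW}), so $\rho$ is absolutely continuous with respect to $\lambda$ with $k:=d\rho/d\lambda$ obeying the Ruelle bound $k(\eta)\leq\prod_{x\in\eta}C(x)$. If $\mu_1,\mu_2\in\mathcal{M}^1_{\rm fm}(\Gamma)$ both satisfy $K^*\mu_i=\rho$, then $\rho|_{\Gamma_\Lambda}=K_\Lambda^*\mu_i^\Lambda$ with $K_\Lambda^*$ invertible on measures over $\Gamma_\Lambda$, which forces each $\mu^\Lambda_i$ to be absolutely continuous with respect to $\pi^\Lambda$; by (\ref{9AA}), the local densities $h_i^\Lambda:=d\mu_i^\Lambda/d\lambda^\Lambda$ satisfy $k(\eta)=\int_{\Gamma_\Lambda}h_i^\Lambda(\eta\cup\gamma)\,\lambda^\Lambda(d\gamma)$, and this relation inverts to $h_i^\Lambda(\eta)=\int_{\Gamma_\Lambda}(-1)^{|\gamma|}k(\eta\cup\gamma)\,\lambda^\Lambda(d\gamma)$, the right-hand side being absolutely convergent because it is dominated by $\prod_{x\in\eta}C(x)\cdot\exp(\int_\Lambda C(y)\,dy)<\infty$ thanks to local integrability of $C$. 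Hence $h_1^\Lambda=h_2^\Lambda$, so $\mu_1^\Lambda=\mu_2^\Lambda$ for every $\Lambda$, and the projective-limit structure gives $\mu_1=\mu_2$.

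The genuine obstacle in the existence part is showing that the formally inverted object $\mu^\Lambda$ is truly a $\sigma$-additive, nonnegative measure of total mass one: positivity on bounded subsets is immediate from (\ref{9AZ}), but since $\rho$ is only locally finite (its total mass on $\Gamma_\Lambda$ may be infinite), the passage to all of $\mathcal{B}(\Gamma_\Lambda)$, the consistency check, and the final extension to $\Gamma$ all require care, and this is exactly where the strength of the positivity hypothesis has to be exploited. In the uniqueness part the one delicate point is the absolute convergence of the inversion series, and the Ruelle bound (\ref{9AW}) is precisely the hypothesis that secures it — without such a bound the associated moment problem is, in general, indeterminate.
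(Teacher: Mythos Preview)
The paper does not give its own proof of this proposition: it is stated as a known fact, with the sentence preceding it reading ``The following fact is known, see Theorems~6.1 and 6.2 and Remark~6.3 in \cite{Tobi}.'' So there is no in-paper argument to compare against; the authors simply import the result from Kondratiev--Kuna.

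That said, your sketch is faithful to the approach actually taken in \cite{Tobi} (which in turn systematizes Lenard's positivity criterion in the $K$-transform framework): one defines a positive normalized functional on $K(B_{\rm bs}(\Gamma_0))$, localizes via the M\"obius-type inverse $K_\Lambda^{-1}$ to produce candidate projections $\mu^\Lambda$, checks consistency, and invokes the projective-limit structure of $(\Gamma,\mathcal{B}(\Gamma))$; uniqueness under the Ruelle bound (\ref{9AW}) comes from absolute convergence of the inversion series for the local densities. You have also correctly flagged the genuine technical point in the existence step, namely that $K_\Lambda^{-1}\mathbb{I}_\Upsilon$ need not lie in $B_{\rm bs}(\Gamma_0)$ when $\Upsilon$ is unbounded in the particle-number direction, so that extending $\mu^\Lambda$ from bounded sets to all of $\mathcal{B}(\Gamma_\Lambda)$ and verifying $\sigma$-additivity requires an approximation/tightness argument driven by the positivity hypothesis (\ref{9AZ}). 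In \cite{Tobi} this is handled by first establishing that the positivity condition forces each $\mu^\Lambda$ to be tight on $\Gamma_\Lambda$ (the mass of $\{\eta:|\eta|>N\}$ goes to zero), which is precisely the step your last paragraph anticipates but does not carry out. So your plan is correct in outline and matches the cited source; what remains is exactly the work you identify as the ``genuine obstacle.''
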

Here and below we use the conventions
\[
\sum_{a\in \emptyset} \phi_a := 0 , \qquad
\prod_{a\in \emptyset} \psi_a := 1.
\]
Finally, we mention the following integration rule, see, e.g., \cite{Dima},
\begin{equation}
 \label{12A}
\int_{\Gamma_0} \sum_{\xi \subset \eta} H(\xi, \eta \setminus \xi,
\eta) \lambda (d \eta) = \int_{\Gamma_0}\int_{\Gamma_0}H(\xi, \eta,
\eta\cup \xi) \lambda (d \xi) \lambda (d \eta),
\end{equation}
 which holds for any appropriate function $H$ if both sides are finite.

\subsection{The model}

An informal generator corresponding to the model is given in (\ref{R20}). The competition and dispersion rates $E^{\pm}(x, \gamma)$ are supposed to be additive, and the corresponding kernels $a_{\pm}$ are translation invariant, see \cite{BP1}. In view of the latter assumption, we write them as
\begin{equation*}
a_{\pm} (x,y) = a^{\pm} (x-y),
\end{equation*}
and hence, cf. (\ref{Ra20}),
\begin{equation}
 \label{13A}
E^{\pm}(x, \gamma) = \sum_{y\in \gamma} a^{\pm} (x-y).
\end{equation}
We suppose that
\begin{equation}
 \label{AA}
a^{\pm} \in L^1(\mathbb{R}^d)\cap L^\infty(\mathbb{R}^d), \qquad a^{\pm}(x) = a^{\pm}(-x) \geq 0,
\end{equation}
and thus set
\begin{equation}
 \label{14A}
\langle{a}^{\pm}\rangle = \int_{\mathbb{R}^d} a^{\pm} (x)dx,\qquad \|a^{\pm} \|= \esssup_{x\in \mathbb{R}^d} a^{\pm}(x),
\end{equation}
and
\begin{equation}
 \label{15A}
E^{\pm} (\eta) = \sum_{x\in \eta}E^{\pm} (x,\eta\setminus x) = \sum_{x\in \eta} \sum_{y\in \eta\setminus x}a^{\pm} (x-y), \quad
\eta \in \Gamma_0 .
\end{equation}
By (\ref{AA}), we have
\begin{equation}
 \label{AB}
E^{\pm} (\eta) \leq \|a^{\pm} \| |\eta|^2.
\end{equation}
For the sake of brevity, we also denote
\begin{equation}
 \label{16A}
E(\eta) = \sum_{x\in \eta}\left(m + E^{-} (x, \eta\setminus x) \right) = m|\eta| + E^{-}(\eta),
\end{equation}
where $m$ is the same as in (\ref{R20}).

Following the general scheme developed in \cite{Oles1} one
constructs the evolution of correlation functions as a dual evolution to that of {\it quasi-observables}, which are
functions $G:\Gamma_0\to \mathbb{R}$. This latter evolution is obtained from
the following Cauchy problem
\begin{equation}
 \label{21}
  \frac{d}{dt} G_t (\eta) = \widehat{L}G_t (\eta), \qquad G_t|_{t=0} = G_0,
\end{equation}
where
\begin{equation}
 \label{22}
 \widehat{L} = K^{-1} L K
\end{equation}
is the so called {\it symbol} of $L$, which has the form, cf. \cite{Dima},
\begin{equation}
 \label{z}
\widehat{L} = A+B
\end{equation}
with
\begin{gather}\label{z1}
A = A_1 + A_2 \\[.2cm]
(A_1 G) (\eta) = - E(\eta) G(\eta), \quad (A_2 G) (\eta) = \int_{\mathbb{R}^d} E^{+} (y,\eta) G (\eta\cup y) dy, \label{z111}
\end{gather}
and
\begin{eqnarray}
 \label{z2}
B &=& B_1 + B_2 ,\\[.2cm]
(B_1 G)(\eta)&=& - \sum_{x\in \eta} E^{-} (x,\eta\setminus x) G(\eta\setminus x),\nonumber \\
(B_2 G)(\eta)&=& \int_{\mathbb{R}^d} \sum_{x\in \eta} a_{+} (x,y) G(\eta \setminus x \cup y) d y. \nonumber
\end{eqnarray}
Clearly, the action of $\widehat{L}$ on $G\in B_{\rm bs}(\Gamma_0)$ is well-defined. Its extension
to wider classes of $G$ will be done in Section~\ref{Sec3} below.

For a measurable locally integrable function $k:\Gamma_0 \rightarrow \mathbb{R}$ and
$G \in B_{\rm bs} (\Gamma_0)$, we define
\begin{equation}
 \label{19A}
 \langle \! \langle G,k\rangle \! \rangle = \int_{\Gamma_0} G(\eta) k(\eta)\lambda (d \eta) .
\end{equation}
This pairing can be extended to appropriate classes of $G$ and $k$.
Then the Cauchy problem `dual' to
(\ref{21}) takes the form
\begin{equation}\label{20A}
\frac{d k_t}{d t} = L^\Delta k_t , \qquad k_t|_{t=0} = k_0,
\end{equation}
where the action of $L^\Delta$ is obtained by means of (\ref{12A}) according to the rule
\[
\langle \! \langle \widehat{L} G, k \rangle \! \rangle = \langle \!
\langle G, L^\Delta k \rangle \! \rangle,
\]
as well as from (\ref{19A}) and (\ref{z})--(\ref{z2}). It thus has the form, cf. \cite{Dima},
\begin{equation}
{L}^\Delta = A^\Delta+B^\Delta \label{21A}
\end{equation}
with
\begin{gather}
A^\Delta = A^\Delta_1 + A^\Delta_2 \label{22A}\\[.2cm]
(A_1 k)(\eta) = - E(\eta) k(\eta), \qquad (A_2 k)(\eta) = \sum_{x\in \eta} E^{+} (x,\eta\setminus x)
k (\eta\setminus x), \nonumber
\end{gather}
and
\begin{eqnarray}
B^\Delta &= & B^\Delta_1 + B^\Delta_2 , \label{23A}\\[.2cm]
(B^\Delta_1 k)(\eta) &=& - \int_{\mathbb{R}^d} E^{-} (y,\eta) k(\eta\cup y) dy ,\nonumber \\[.2cm] (B^\Delta_2 k)(\eta) &=&
 \int_{\mathbb{R}^d} \sum_{x\in \eta} a^{+} (x - y) k(\eta \setminus x \cup y) d y. \nonumber
\end{eqnarray}
Of course, like $\widehat{L}$ the above introduced $L^\Delta$ is well-defined only for `good enough' $k$. In the next sections,
we define both operators in the corresponding Banach spaces.

\section{The Evolution of Quasi-observables}\label{Sec3}

\subsection{Setting}

For $\alpha \in \mathbb{R}$ and the measure $\lambda$ as in (\ref{1A}), we consider the Banach space
\begin{equation}
 \label{z3}
\mathcal{G}_\alpha := L^1 (\Gamma_0 , e^{-\alpha |\cdot|} d \lambda),
\end{equation}
in which the norm is
\begin{equation*}
\|G\|_{\alpha} = \int_{\Gamma_0} |G(\eta)| \exp(- \alpha |\eta|) \lambda (d \eta).
\end{equation*}
Clearly, $\|G\|_{\alpha'} \leq \|G\|_{\alpha''}$ for $\alpha'' < \alpha'$; hence, we have that
\begin{equation}
 \label{z5}
\mathcal{G}_{\alpha''} \hookrightarrow \mathcal{G}_{\alpha'}, \qquad {\rm for} \ \ \alpha'' < \alpha',
\end{equation}
where the embedding is dense and continuous. Now we fix $\alpha\in \mathbb{R}$ and turn to the definition of
$\widehat{L}$ in $\mathcal{G}_\alpha$, see (\ref{z})--(\ref{z2}). Set
\begin{eqnarray*}
\mathcal{D}(A_1)& = & \{G\in \mathcal{G}_\alpha : E(\cdot) G(\cdot) \in \mathcal{G}_\alpha\},\\[.2cm]
\mathcal{D}(A_2)& = & \{G\in \mathcal{G}_\alpha : E^{+}(\cdot) G(\cdot) \in \mathcal{G}_\alpha\},
\end{eqnarray*}
where $E^{\pm}(\eta)$ are as in (\ref{15A}).
As a multiplication operator, $A_1$ with ${\rm Dom}(A_1) = \mathcal{D}(A_1)$ is closed. By (\ref{12A}), for an appropriate $G$, we get
\begin{eqnarray}
 \label{z7}
\|A_2 G\|_\alpha & \leq & \int_{\Gamma_0} \int_{\mathbb{R}^d} E^{+}(y, \eta) |G(\eta\cup y)|e^{-\alpha|\eta|}dy \lambda(d\eta)\\[.2cm]
& = & e^\alpha \int_{\Gamma_0} |G(\eta)| e^{-\alpha|\eta|} \left(\sum_{x\in \eta} E^{+} (x, \eta \setminus x) \right) \lambda ( d \eta)
\nonumber\\[.2cm]
& = & e^\alpha \|E^{+} (\cdot) G(\cdot)\|_\alpha .\nonumber
\end{eqnarray}
Hence, $A_2$ with ${\rm Dom}(A_2) = \mathcal{D}(A_2)$ is well-defined. Further, we set
\begin{equation*}
\mathcal{D}(B) = \{G\in \mathcal{G}_\alpha : |\cdot| G(\cdot) \in \mathcal{G}_\alpha\}.
\end{equation*}
Like in (\ref{z7}), for an appropriate $G$, we obtain
\begin{eqnarray}
\label{z9}
\|B_1 G\|_\alpha & \leq & \int_{\Gamma_0}\sum_{x\in \eta} E^{-} (x, \eta \setminus x) |G(\eta \setminus x)|
 e^{-\alpha |\eta|} \lambda (d \eta) \\[.2cm]
& = & e^{-\alpha} \int_{\Gamma_0} \left(\int_{\mathbb{R}^d} E^{-} (y, \eta)dy \right)|G(\eta)| e^{-\alpha |\eta|}
\lambda(d\eta) \nonumber\\[.2cm]
& = & e^{-\alpha} \langle{a}^{-}\rangle \int_{\Gamma_0} |\eta| |G(\eta)| e^{-\alpha |\eta|}
\lambda(d\eta), \nonumber
\end{eqnarray}
where we have used (\ref{14A}). In a similar way, we get
\begin{equation}
\label{z10}
\|B_2 G\|_\alpha \leq \langle{a}^{+} \rangle \int_{\Gamma_0} |\eta| |G(\eta)| e^{-\alpha |\eta|}
\lambda(d\eta).
\end{equation}
Thus, the operator $B$ as in (\ref{z2}) with ${\rm Dom}(B)= \mathcal{D}(B)$ is also well-defined. Thereafter,
we set
\begin{equation}
 \label{z11}
{\rm Dom}(\widehat{L}) = \mathcal{D}(A_1) \cap \mathcal{D}(A_2)\cap \mathcal{D}(B).
\end{equation}
For $\varkappa>0$ and any $\eta \in \Gamma_0$, we have that
\begin{equation}
 \label{z11Aa}
 |\eta| e^{-\varkappa |\eta|} \leq \frac{1}{ e \varkappa}, \qquad \quad |\eta|^2 e^{-\varkappa |\eta|} \leq \left(\frac{2}{ e \varkappa}\right)^2.
\end{equation}
Then
 by (\ref{z9}) and (\ref{z10})
\begin{equation}
 \label{z13}
\|B G\|_\alpha \leq \frac{\langle{a}^{+} \rangle + \langle{a}^{-}\rangle e^{-\alpha}}{e(\alpha - \alpha')}\|G\|_{\alpha'},
\end{equation}
which holds for any $\alpha' < \alpha$. By the second estimate in (\ref{z11Aa}), and
by (\ref{AB}) and (\ref{16A}), we also get
\begin{equation}
 \label{z11a}
\esssup_{\eta \in \Gamma_0} E(\eta)\exp(- \varkappa|\eta|) \leq M'/\varkappa^2,
\quad \esssup_{\eta \in \Gamma_0} E^{+}(\eta)\exp(- \varkappa|\eta|) \leq M''/\varkappa^2,
\end{equation}
which holds for any $\varkappa >0$ and some positive $M'$ and $M''$.
Thus, we have proven the following
\begin{lemma}
 \label{LLpn}
For each $\alpha' < \alpha$, the expressions (\ref{z})--(\ref{z2}) define a bounded linear operator acting from $\mathcal{G}_{\alpha'}$
into $\mathcal{G}_\alpha$, which we also
denote by $\widehat{L}$, such that the corresponding operator norm obeys the estimate
\begin{equation}
 \label{z11Ab}
 \|\widehat{L} \|_{\alpha'\alpha} \leq M/(\alpha - \alpha')^2,
\end{equation}
for some $M>0$. Furthermore, the same expressions and (\ref{z11}) define an unbounded operator on $\mathcal{G}_\alpha$ such that, for any $\alpha' < \alpha$,
\begin{equation}
 \label{z12}
\mathcal{G}_{\alpha'}\subset{\rm Dom}(\widehat{L}).
\end{equation}
\end{lemma}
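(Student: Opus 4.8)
The plan is to read both assertions off the $\mathcal{G}_\alpha$-estimates already assembled above for the four constituent operators $A_1,A_2,B_1,B_2$; essentially no new idea is needed, only bookkeeping.

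First, for the boundedness claim, I fix $\alpha'<\alpha$, decompose $\widehat L=A_1+A_2+B_1+B_2$, and estimate each piece in the $\|\cdot\|_\alpha$-norm in terms of the $\|\cdot\|_{\alpha'}$-norm. For $A_1$ I write, for $G\in\mathcal{G}_{\alpha'}$,
\begin{equation*}
\|A_1G\|_\alpha=\int_{\Gamma_0}\bigl(E(\eta)\,e^{-(\alpha-\alpha')|\eta|}\bigr)\,|G(\eta)|\,e^{-\alpha'|\eta|}\,\lambda(d\eta)\le\frac{M'}{(\alpha-\alpha')^2}\,\|G\|_{\alpha'},
\end{equation*}
where the bracket is controlled by the first inequality in (\ref{z11a}) with $\varkappa=\alpha-\alpha'$. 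For $A_2$ I combine (\ref{z7}) with the same factorization applied to $\|E^{+}(\cdot)G(\cdot)\|_\alpha$ and the second inequality in (\ref{z11a}), obtaining $\|A_2G\|_\alpha\le e^\alpha M''(\alpha-\alpha')^{-2}\|G\|_{\alpha'}$. For $B=B_1+B_2$ I simply invoke (\ref{z13}). Adding the three contributions gives a bound of the form $M(\alpha-\alpha')^{-2}\|G\|_{\alpha'}$, which is (\ref{z11Ab}); linearity of $\widehat L$ is immediate from the defining formulas (\ref{z1})--(\ref{z2}). In particular this computation shows that $\widehat L$ maps $\mathcal{G}_{\alpha'}$ into $\mathcal{G}_\alpha$, as claimed.

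For the second assertion, I first check that on the domain (\ref{z11}) each of the four expressions is well defined with values in $\mathcal{G}_\alpha$: $A_1$ is the closed multiplication operator by $-E(\cdot)$ on $\mathcal{D}(A_1)$, while (\ref{z7}), (\ref{z9}) and (\ref{z10}) show that $A_2$, $B_1$, $B_2$ map $\mathcal{D}(A_2)$, $\mathcal{D}(B)$, $\mathcal{D}(B)$ respectively into $\mathcal{G}_\alpha$; hence $\widehat L$ with ${\rm Dom}(\widehat L)=\mathcal{D}(A_1)\cap\mathcal{D}(A_2)\cap\mathcal{D}(B)$ is a (generally unbounded) operator on $\mathcal{G}_\alpha$. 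The inclusion $\mathcal{G}_{\alpha'}\subset{\rm Dom}(\widehat L)$ then follows from the estimates of the previous paragraph: for $G\in\mathcal{G}_{\alpha'}$ the factorization $e^{-\alpha|\eta|}=e^{-(\alpha-\alpha')|\eta|}e^{-\alpha'|\eta|}$ together with (\ref{z11a}) and the first estimate in (\ref{z11Aa}) yields $\|E(\cdot)G(\cdot)\|_\alpha<\infty$, $\|E^{+}(\cdot)G(\cdot)\|_\alpha<\infty$ and $\||\cdot|\,G(\cdot)\|_\alpha<\infty$, that is, $G\in\mathcal{D}(A_1)\cap\mathcal{D}(A_2)\cap\mathcal{D}(B)$.

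I do not expect a real obstacle: the lemma just repackages the a priori bounds already proved. The only point needing a word of care is that (\ref{z13}) supplies the $B$-part with a single power $(\alpha-\alpha')^{-1}$, whereas (\ref{z11Ab}) asserts a square; this is harmless since the lemma is used only with $\alpha,\alpha'$ in a bounded range, where $(\alpha-\alpha')^{-1}\le\mathrm{const}\cdot(\alpha-\alpha')^{-2}$, the constant being absorbed into $M$. A second, purely formal, remark is that the unbounded $\widehat L$ restricted to $\mathcal{G}_{\alpha'}$ coincides with the bounded operator $\widehat L\colon\mathcal{G}_{\alpha'}\to\mathcal{G}_\alpha$, since both are given by the same pointwise formulas (\ref{z})--(\ref{z2}).
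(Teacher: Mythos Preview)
Your argument is correct and follows exactly the approach of the paper, which presents the lemma as a direct summary of the preceding estimates (\ref{z7}), (\ref{z9}), (\ref{z10}), (\ref{z13}), (\ref{z11Aa}), (\ref{z11a}) with the words ``Thus, we have proven the following''. Your remarks on the $(\alpha-\alpha')^{-1}$ versus $(\alpha-\alpha')^{-2}$ discrepancy for $B$ and on the $\alpha$-dependence of the constants (via $e^{\alpha}$ in the $A_2$ bound and $e^{-\alpha}$ in (\ref{z13})) make explicit a point the paper leaves tacit, namely that $M$ is uniform only for $\alpha',\alpha$ ranging in a fixed bounded interval such as $[\alpha_*,\alpha^*]$, which is how the estimate is actually used.
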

\begin{definition}
By a classical solution of the problem (\ref{21}), in the space $\mathcal{G}_\alpha$ and on the time interval $[0,T)$, we understand a map $[0,T)\ni t \mapsto G_t \in {\rm Dom}(\widehat{L}) \subset\mathcal{G}_\alpha$, continuous on $[0,T)$ and continuously differentiable on $(0,T)$, such that (\ref{21}) is satisfied for $t\in[0,T)$.
\end{definition}
\begin{remark}
In view of (\ref{z12}), the condition $G_t \in {\rm Dom}(\widehat{L})$ can be verified by showing that the solution $G_{t}$ belongs to $\mathcal{G}_{\alpha_t}$ for some $\alpha_t < \alpha$.
\end{remark}

\subsection{The statement}

The basic assumption regarding the model properties which we need is
the following: there exists $\theta >0$ such that, for almost all
$x\in \mathbb{R}^d$,
\begin{equation}
 \label{z14}
a^{+} (x) \leq \theta a^{-} (x).
\end{equation}
For $\alpha^*\in \mathbb{R}$ and $\alpha_* < \alpha^*$, we set
\begin{equation}
 \label{z15}
T_* = \frac{\alpha^* - \alpha_*}{\langle{a}^{+}\rangle + \langle{a}^{-}\rangle e^{-\alpha_*}}.
\end{equation}
\begin{theorem}
 \label{1tm}
Let (\ref{z14}) be satisfied. Then, for every $\alpha^* \in \mathbb{R}$ such that
\begin{equation}
 \label{z16}
e^{\alpha^*} \theta < 1,
\end{equation}
and any $\alpha_*<\alpha^*$, the problem (\ref{21}) with $G_0 \in
\mathcal{G}_{\alpha_*}$ has a unique classical solution in
$\mathcal{G}_{\alpha^*}$ on the time interval $[0, T_*)$ with $T_*$
given in (\ref{z15}). \end{theorem}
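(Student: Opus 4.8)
The plan is to combine a $C_0$-semigroup argument for the "diagonal plus easy" part $A$ of $\widehat{L}$ with an Ovcyannikov-type perturbation series for the remaining part $B$, exploiting the scale $(\mathcal{G}_\alpha)_\alpha$ from \eqref{z5} and the bound \eqref{z11Ab} on $\widehat L$ as an operator between scale spaces. First I would fix $\alpha^*$ with $e^{\alpha^*}\theta<1$ and $\alpha_*<\alpha^*$. The key structural observation is that the operator $A=A_1+A_2$ is "almost negative definite" on $\mathcal{G}_{\alpha^*}$: $A_1$ is the multiplication operator by $-E(\eta)\le 0$, which generates a contraction semigroup, and $A_2$ is a positive perturbation whose $\mathcal{G}_{\alpha^*}$-norm is controlled, via \eqref{z7}, by $e^{\alpha^*}\|E^+(\cdot)G(\cdot)\|_{\alpha^*}$. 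Because of assumption \eqref{z14} we have $E^+(\eta)\le\theta E^-(\eta)\le\theta E(\eta)$ pointwise, so $\|A_2 G\|_{\alpha^*}\le e^{\alpha^*}\theta\,\|E(\cdot)G(\cdot)\|_{\alpha^*}$, i.e. $A_2$ is $A_1$-bounded with relative bound $e^{\alpha^*}\theta<1$. By the Phillips perturbation theorem (or a Hille–Yosida estimate on the dissipative form), $A=A_1+A_2$ with domain $\mathcal D(A_1)$ generates a $C_0$-semigroup $(S(t))_{t\ge 0}$ on $\mathcal{G}_{\alpha^*}$; moreover $\|S(t)\|\le e^{\omega t}$ for some $\omega$, and one checks $S(t)$ maps each $\mathcal{G}_{\alpha'}$ into itself continuously in $\alpha'$. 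This is where \eqref{z16} is used essentially.

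Next I would treat $B=B_1+B_2$ as an Ovcyannikov perturbation of $A$. By \eqref{z13}, $B$ is bounded from $\mathcal{G}_{\alpha'}$ to $\mathcal{G}_{\alpha''}$ for $\alpha'<\alpha''$ with norm $\le \bigl(\langle a^+\rangle+\langle a^-\rangle e^{-\alpha'}\bigr)/\bigl(e(\alpha''-\alpha')\bigr)$, a bound of the type $\mathrm{const}/(\alpha''-\alpha')$ that is the hallmark of the Ovcyannikov machinery. The solution is then sought as the series
\begin{equation*}
G_t = S(t)G_0 + \sum_{n\ge 1}\int_0^t\!\!\int_0^{t_1}\!\!\cdots\int_0^{t_{n-1}} S(t-t_1)\,B\,S(t_1-t_2)\,B\cdots B\,S(t_n)\,G_0\; dt_n\cdots dt_1 .
\end{equation*}
To make sense of each term one splits the interval $[\alpha_*,\alpha^*]$ into $n+1$ equal pieces of width $(\alpha^*-\alpha_*)/(n+1)$, using one factor of $\mathcal{G}_{\alpha_j}\to\mathcal{G}_{\alpha_{j+1}}$ per application of $B$ and the fact that $S(\cdot)$ preserves each space. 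The $n$-th term is then bounded in $\mathcal{G}_{\alpha^*}$ by
\begin{equation*}
\frac{t^n}{n!}\,e^{\omega t}\,\|G_0\|_{\alpha_*}\left(\frac{(n+1)\bigl(\langle a^+\rangle+\langle a^-\rangle e^{-\alpha_*}\bigr)}{e(\alpha^*-\alpha_*)}\right)^{\!n},
\end{equation*}
and since $(n+1)^n/n!$ behaves like $e^n$ up to subexponential factors, the series converges in $\mathcal{G}_{\alpha^*}$ provided $t<T_*$ with $T_*$ as in \eqref{z15} (the factor $1/e$ in \eqref{z13} is exactly what makes the constant $T_*$ come out clean). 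Term-by-term differentiation, justified by the same estimates on a slightly smaller interval, shows $t\mapsto G_t$ solves \eqref{21}; continuity at $t=0$ is immediate from the $n=0$ term, and the Remark after the Definition, together with the fact that the tail of the series lands in some $\mathcal{G}_{\alpha_t}$ with $\alpha_t<\alpha^*$, verifies $G_t\in\mathrm{Dom}(\widehat L)$.

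For uniqueness I would argue in the scale: if $G_t$ and $G_t'$ are two classical solutions in $\mathcal{G}_{\alpha^*}$ with the same initial data, their difference $H_t$ satisfies $H_t = \int_0^t S(t-s)B H_s\,ds$ (Duhamel), and then iterating this identity $n$ times while shrinking the smoothness parameter at each step — exactly the Ovcyannikov scheme run backwards — gives $\|H_t\|_{\alpha^*}\le C^n t^n/n!\to 0$ for $t$ in a subinterval, hence $H_t\equiv 0$ there; a continuation/covering argument extends this to all of $[0,T_*)$. The main obstacle, and the place requiring the most care, is the interplay between the two techniques: one must verify that the $C_0$-semigroup $S(t)$ generated by $A$ genuinely acts consistently on the whole scale $(\mathcal{G}_\alpha)$ (not merely on $\mathcal{G}_{\alpha^*}$) and commutes appropriately with the scale embeddings, so that inserting factors of $S$ between the factors of $B$ in the Ovcyannikov series does not destroy the telescoping of the smoothness loss; this is precisely what forces the hypothesis $e^{\alpha^*}\theta<1$ rather than a weaker condition, since without it $A_2$ would no longer be a small perturbation of $A_1$ uniformly down the scale.
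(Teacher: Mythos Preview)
Your approach is essentially the paper's: split $\widehat L = A + B$, generate a semigroup for $A$ on each $\mathcal{G}_\alpha$ using the relative bound coming from \eqref{z14}, and run an Ovcyannikov--Dyson series for $B$ through the scale. Two technical points need sharpening, though.

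First, ``Phillips perturbation'' is not the right tool: Phillips handles bounded perturbations, and a bare Hille--Yosida/dissipativity argument does not apply directly either, since $A_2$ is positive rather than dissipative. The paper instead uses the Kato--Thieme--Voigt perturbation theory for positive semigroups on $L^1$ (Proposition~\ref{le:substoch}): from exactly the inequality you wrote, $\|A_2 G\|_\alpha \le e^\alpha\theta\,\|A_1 G\|_\alpha$, together with the integral condition \eqref{cond:substoch}, one obtains that $A$ generates a \emph{substochastic} semigroup, i.e.\ $\|S(t)\|\le 1$ rather than merely $e^{\omega t}$. This contraction bound is what makes the constant $T_*$ come out exactly as in \eqref{z15}.

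Second, and more substantively, the paper also proves (via the resolvent estimate \eqref{z22}) that $\{S(t)\}$ is \emph{analytic}. This is not cosmetic: analyticity is what guarantees that each inhomogeneous problem \eqref{z26} has a \emph{classical} solution (Pazy, Corollary~3.3), so that $G^{(l)}_t$ is genuinely differentiable and the recursion proceeds. Without it, your Dyson series a priori produces only a mild solution, and the line ``term-by-term differentiation, justified by the same estimates'' hides real work---you would have to track domains through the scale using $G_0\in\mathcal G_{\alpha_*}\subset\mathrm{Dom}(A)$ in each $\mathcal G_{\alpha_l}$. Your uniqueness sketch via iterated Duhamel is workable (the strict inequality $e^{\alpha^*}\theta<1$ leaves room to move to $\alpha>\alpha^*$ while keeping the semigroup); the paper instead argues via analyticity at $t=0$ and matching of the Taylor coefficients $\widehat L^n G_0$.
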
 The main idea of the proof is to
obtain the solution as the limit in $\mathcal{G}_{\alpha^*}$ of the
sequence $\{G^{(n)}_t\}_{n\in \mathbb{N}_0}$ which we obtain
recursively by solving the following Cauchy problems
\begin{equation}
 \label{z17}
\frac{d}{dt} G^{(n)}_t = AG^{(n)}_t +B G^{(n-1)}_t, \qquad G^{(n)}_t|_{t=0} = G_0, \ \ n\in \mathbb{N},
\end{equation}
and
\begin{equation}
 \label{z18}
\frac{d}{dt} G^{(0)}_t = AG^{(0)}_t, \qquad G^{(0)}_t|_{t=0} = G_0,
\end{equation}
where $A$ and $B$ are given in (\ref{z1}) and (\ref{z2}),
respectively. The reason to split $\widehat{L}$ as in (\ref{z}) is
the following. In view of (\ref{z13}), $B$ acts continuously from a
smaller $\mathcal{G}_{\alpha''}$ into a bigger
$\mathcal{G}_{\alpha'}$, cf. (\ref{z5}). The fact that the
denominator in (\ref{z13}) contains the difference $\alpha-\alpha'$
in the power one allows for employing Ovcyannikov's type arguments,
see, e.g., \cite{trev}. However, this is true only for $B$ but not
for $A$, cf. (\ref{z11a}) and (\ref{z11Ab}). In Lemma~\ref{1lm}
below, we prove that under the assumption (\ref{z16}) $A$~is the
generator of a substochastic analytic semigroup\footnote{Which is
the only reason for imposing (\ref{z16}).}. Combining these facts
and employing standard results of the theory of inhomogeneous
differential equations in Banach spaces, we prove the existence of
$G^{(n)}_t$, $n\in \mathbb{N}_0$, and then the convergence
$G^{(n)}_t \to G_t$. The uniqueness is proven by showing that the
only classical solution of the problem (\ref{21}) with the zero
initial condition is $G_t \equiv 0$.

In the proof of Lemma~\ref{1lm} below we employ the perturbation
theory for positive semigroups of operators in ordered Banach spaces
developed in \cite{TV}. Prior to stating this lemma we present the
relevant fragments of this theory in the special case of spaces of
integrable functions. Let $E$ be a~measurable space with
a~$\sigma$-finite measure $\nu $, and $X:=L^{1}\left( E\rightarrow
\mathbb{R},d\nu \right) $ be the Banach space of $\nu $-integrable
real-valued functions on $X$ with norm $\left\Vert \cdot \right\Vert
$. Let $X^{+}$ be the cone in $X$ consisting of all $\nu $-a.e.
nonnegative functions on $E$. Clearly, $\left\Vert f+g\right\Vert
=\left\Vert f\right\Vert +\left\Vert g\right\Vert $ for any $f,g\in
X^{+}$, and this cone is generating, that is, $X=X^{+}-X^{+}$.
Recall that a $C_0$-semigroup $\{S(t)\}_{t\geq0}$ of bounded linear
operators on $X$ is called \emph{positive} if $S(t)f\in X^+$ for all
$f\in X^+$. A positive semigroup is called \emph{substochastic}
(corr., \emph{stochastic}) if $\|S(t)f\| \leq \|f\|$ (corr.,
$\|S(t)f\| = \|f\|$) for all $f\in X^+$. Let $\left(
A_{0},D(A_{0})\right) $ be the generator of a positive $C_{0}$
-semigroup $\{S_{0}\left( t\right)\}_{t\geq 0}$ on $X$. Set
$D^{+}(A_{0})=D(A_{0})\cap X^{+}$. Then $D(A_{0})$ is dense in $X$,
and $D^{+}(A_{0})$ is dense in $X^{+}$. Let $P:D(A_0)\to X$ be a
positive linear operator, namely, $Pf\in X^+$ for all $f\in
D^+(A_0)$. The next statement is an adaptation of Theorem~2.2 in
\cite{TV}.
\begin{proposition}
\label{le:substoch}
Suppose that, for any $f\in D^+(A_0)$,
\begin{equation}\label{cond:substoch}
 \int_{E}\bigl( ( A_{0}+P) f\bigr) \left( x\right) \nu \left(d
x\right) \leq 0.
\end{equation}
Then, for all $r\in[0,1)$, the operator $\bigl(A_0+rP, D(A_0)\bigr)$
is the generator of a substochastic $C_0$-semigroup in $X$.
\end{proposition}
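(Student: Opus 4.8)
The plan is to build the semigroup generated by $A_0+rP$ through a resolvent perturbation series (a Neumann series for the perturbed resolvent), to read off substochasticity from condition \eqref{cond:substoch} at each step, and to recover positivity of the resulting semigroup at the end from positivity of the perturbed resolvent. Everything has to be done at the level of resolvents, since $P$ is only assumed to be defined on $D(A_0)$ and may be unbounded, so bounded perturbation theory is unavailable; the argument follows the scheme of \cite{TV}.

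First I would extract a resolvent estimate from \eqref{cond:substoch}. For $\lambda$ above the growth bound of $\{S_0(t)\}$ one has $R(\lambda,A_0)\geq0$ and $R(\lambda,A_0):X\to D(A_0)=D(P)$. Given $g\in X^+$, put $u=R(\lambda,A_0)g\in D^+(A_0)$, so that $A_0u=\lambda u-g$; then \eqref{cond:substoch} together with $Pu\geq0$ gives
\[
\lambda\int_E u\,d\nu-\int_E g\,d\nu=\int_E A_0u\,d\nu\leq-\int_E Pu\,d\nu\leq0 .
\]
On the cone $X^+$ the $L^1$-norm coincides with the integral, and the norm of any positive operator is computed on $X^+$ (since $|Tf|\leq T|f|$), so this yields $\|\lambda R(\lambda,A_0)\|\leq1$ and, reading the same line differently, $\|PR(\lambda,A_0)\|\leq1$. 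A Yosida-approximation argument then shows that $\{S_0(t)\}$ is itself substochastic, so its growth bound is $\leq0$ and the above holds for every $\lambda>0$.

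Next, fix $r\in[0,1)$ and set $B_\lambda:=rPR(\lambda,A_0)$, a bounded positive operator with $\|B_\lambda\|\leq r<1$; hence $I-B_\lambda$ is invertible with positive inverse $\sum_{n\geq0}B_\lambda^{\,n}$. I would then define $R_\lambda:=R(\lambda,A_0)(I-B_\lambda)^{-1}$, a bounded positive operator mapping into $D(A_0)=D(A_0+rP)$, and verify by a direct computation (using $rPf=B_\lambda(\lambda-A_0)f$ for $f\in D(A_0)$ and $rPR_\lambda=B_\lambda(I-B_\lambda)^{-1}$) that $(\lambda-A_0-rP)R_\lambda=I$ on $X$ and $R_\lambda(\lambda-A_0-rP)=I$ on $D(A_0)$. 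Thus $\lambda\in\rho(A_0+rP)$, $R_\lambda=R(\lambda,A_0+rP)$, and $A_0+rP$ is closed and densely defined. To estimate $R_\lambda$ on the cone, for $g\in X^+$ write $h=(I-B_\lambda)^{-1}g\in X^+$, so $g=h-B_\lambda h$ and $R_\lambda g=R(\lambda,A_0)h$; applying the first inequality to $h$, with $PR(\lambda,A_0)h=\tfrac1rB_\lambda h$ and $\int_E h\,d\nu=\int_E g\,d\nu+\int_E B_\lambda h\,d\nu$, gives
\[
\lambda\int_E R_\lambda g\,d\nu\leq\int_E g\,d\nu+\Bigl(1-\tfrac1r\Bigr)\int_E B_\lambda h\,d\nu\leq\int_E g\,d\nu,
\]
since $1-\tfrac1r\leq0$ and $B_\lambda h\geq0$ (the case $r=0$ is the previous step). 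Hence $\|\lambda R_\lambda\|\leq1$ for all $\lambda>0$, so by the Hille--Yosida theorem $(A_0+rP,D(A_0))$ generates a contraction $C_0$-semigroup $\{S_r(t)\}_{t\geq0}$; positivity of $R_\lambda$ makes every Yosida approximant $e^{-\mu t}\sum_k\tfrac{(t\mu^2)^k}{k!}R_\mu^{\,k}$ positive, whence $S_r(t)\geq0$, and a positive contraction semigroup on $L^1$ is precisely a substochastic one.

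The delicate point is the passage from the scalar inequality \eqref{cond:substoch} to the operator bound $\|PR(\lambda,A_0)\|\leq1$: this is the only place the hypothesis enters, and it is exactly what turns $rP$ from a formal into an admissible perturbation, the strict inequality $r<1$ being what makes the Neumann series converge and what leaves the negative ``slack'' $(1-\tfrac1r)\int_E B_\lambda h\,d\nu$ that secures contractivity. Once this is in place, the remainder is the standard resolvent-perturbation plus Hille--Yosida/Yosida-approximation routine.
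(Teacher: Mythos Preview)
Your argument is correct. Note, however, that the paper does not supply its own proof of this proposition: it is stated as ``an adaptation of Theorem~2.2 in \cite{TV}'' and used as a black box. What you have written is essentially the standard Kato--Voigt resolvent-perturbation argument that underlies the cited result: from the dissipativity-type condition \eqref{cond:substoch} you extract the key bound $\|PR(\lambda,A_0)\|\leq 1$, which makes $rPR(\lambda,A_0)$ a strict contraction for $r<1$ and lets you build the perturbed resolvent as a convergent Neumann series; contractivity on the cone then follows from the same integral inequality with the residual term $(1-\tfrac{1}{r})\int_E B_\lambda h\,d\nu\leq 0$, and the rest is Hille--Yosida plus positivity of the Yosida approximants. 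So there is nothing to compare against beyond the citation, and your proof is precisely the mechanism that \cite{TV} packages.

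One minor expository point: your reduction to ``$\{S_0(t)\}$ is itself substochastic, so its growth bound is $\leq 0$'' is correct but compressed. At the outset you only have $\|\lambda R(\lambda,A_0)\|\leq 1$ for $\lambda$ above the (a priori unknown) growth bound $\omega_0$; from this either the Yosida approximation $e^{tA_\lambda}=e^{-\lambda t}\exp(t\lambda^2 R(\lambda,A_0))$ or the exponential formula $S_0(t)=\lim_n\bigl(\tfrac{n}{t}R(\tfrac{n}{t},A_0)\bigr)^n$ gives $\|S_0(t)\|\leq 1$, whence $\omega_0\leq 0$ and the resolvent estimates extend to all $\lambda>0$. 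Spelling this out would remove any appearance of circularity.
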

Now we apply Proposition~\ref{le:substoch} to the operator (\ref{z1}).
\begin{lemma}\label{1lm}
Let $\theta$ and $\alpha^*$ be as in (\ref{z14}) and (\ref{z16}).
Then, for any $\alpha\leq \alpha^*$, the operator $A$ given by
(\ref{z1}) with ${\rm Dom}(A) = \mathcal{D}(A_1)$, is the generator
of a substochastic analytic semigroup $\{S(t)\}_{t\geq0}$ in
$\mathcal{G}_{\alpha}$.
\end{lemma}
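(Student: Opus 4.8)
The plan is to verify the hypotheses of Proposition~\ref{le:substoch} with $X = \mathcal{G}_\alpha = L^1(\Gamma_0, e^{-\alpha|\cdot|}d\lambda)$, $A_0 = A_1$ (the multiplication operator by $-E(\eta)$), and $P = A_2$, so that $A = A_1 + A_2$ appears as $A_0 + rP$ with $r=1$ after absorbing a scaling factor into $P$. First I would record the underlying measure: writing $\nu = e^{-\alpha|\cdot|}\lambda$, the space $\mathcal{G}_\alpha$ is precisely an $L^1(\Gamma_0, d\nu)$ of the form required, and its positive cone $X^+$ consists of the $\nu$-a.e. nonnegative $G$. Since $E(\eta) = m|\eta| + E^-(\eta) \geq 0$, the multiplication semigroup $(S_0(t)G)(\eta) = e^{-tE(\eta)}G(\eta)$ is clearly a positive (in fact substochastic) $C_0$-semigroup on $\mathcal{G}_\alpha$ with generator $(A_1, \mathcal{D}(A_1))$; density of $\mathcal{D}(A_1)$ and of $\mathcal{D}(A_1)\cap X^+$ is standard. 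The operator $A_2$ defined on $\mathcal{D}(A_1)$ (note $\mathcal{D}(A_1)\subset\mathcal{D}(A_2)$ since $E^+(\eta)\le E(\eta)/1$ up to constants — more carefully, $E^+(\eta)\le \|a^+\||\eta|^2$ is controlled by $E(\eta)$ wherever $m>0$, but in general one uses $\mathcal{D}(A_1)$ directly and checks $A_2$ maps it into $\mathcal{G}_\alpha$ via \eqref{z7}) is manifestly positive: $E^+(x,\eta\setminus x)\ge 0$, so $(A_2 G)(\eta) = \sum_{x\in\eta}E^+(x,\eta\setminus x)G(\eta\setminus x)\ge 0$ for $G\in X^+$ — wait, I must be careful, $A_2$ in \eqref{z111} is the quasi-observable form, so $(A_2G)(\eta) = \int_{\mathbb{R}^d}E^+(y,\eta)G(\eta\cup y)dy \ge 0$ for $G\ge 0$. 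Good.

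The crux is the dissipativity inequality \eqref{cond:substoch}: I need to show $\int_{\Gamma_0}\big((A_1+A_2)G\big)(\eta)\,e^{-\alpha|\eta|}\lambda(d\eta)\le 0$ for all $G\in\mathcal{D}(A_1)\cap X^+$. Using the computation already performed in \eqref{z7}, $\int_{\Gamma_0}(A_2G)(\eta)e^{-\alpha|\eta|}\lambda(d\eta) = e^\alpha\int_{\Gamma_0}G(\eta)e^{-\alpha|\eta|}E^+(\eta)\lambda(d\eta)$ (no absolute values needed since $G\ge 0$), while $\int_{\Gamma_0}(A_1G)(\eta)e^{-\alpha|\eta|}\lambda(d\eta) = -\int_{\Gamma_0}G(\eta)e^{-\alpha|\eta|}E(\eta)\lambda(d\eta)$. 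So the sum equals $\int_{\Gamma_0}G(\eta)e^{-\alpha|\eta|}\big(e^\alpha E^+(\eta) - E(\eta)\big)\lambda(d\eta)$, and since $G\ge 0$ it suffices to have $e^\alpha E^+(\eta)\le E(\eta)$ pointwise. Now by \eqref{z14}, $a^+(x)\le\theta a^-(x)$, hence $E^+(x,\eta\setminus x) = \sum_{y}a^+(x-y)\le\theta\sum_y a^-(x-y) = \theta E^-(x,\eta\setminus x)$, so $E^+(\eta)\le\theta E^-(\eta)\le\theta E(\eta)$ (using $E(\eta) = m|\eta|+E^-(\eta)\ge E^-(\eta)$). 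Therefore $e^\alpha E^+(\eta)\le e^\alpha\theta E(\eta)$, and since $\alpha\le\alpha^*$ and $e^{\alpha^*}\theta<1$ by \eqref{z16}, we get $e^\alpha\theta\le e^{\alpha^*}\theta<1$, so $e^\alpha E^+(\eta)\le E(\eta)$, giving the desired inequality.

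To put this in the exact form of Proposition~\ref{le:substoch} I would set $P = e^{\alpha^*}\theta\cdot(\text{something})$ — actually cleanest is: the inequality above shows $A_0 + P$ with $A_0 = A_1$ and $P = A_2$ satisfies \eqref{cond:substoch} directly when $e^\alpha\theta\le 1$, which holds, so Proposition~\ref{le:substoch} with $r=1\in[0,1)$... but $1\notin[0,1)$, so I would instead note that we may replace $P$ by $P/(e^\alpha\theta)$ if $e^\alpha\theta<1$ to get strict inequality room, or simply observe that the hypothesis of the Proposition asks \eqref{cond:substoch} for $A_0+P$ and then concludes for $A_0+rP$, $r<1$; taking $P$ slightly larger (multiply $A_2$ by a constant $c>1$ with $ce^\alpha\theta\le 1$, still valid since $e^{\alpha^*}\theta<1$ leaves strict room) makes $A = A_1 + A_2 = A_0 + rP$ with $r = 1/c\in[0,1)$. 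Hence $A$ generates a substochastic $C_0$-semigroup $\{S(t)\}_{t\ge 0}$ on $\mathcal{G}_\alpha$. Analyticity is the one extra point: since $A_1$ is a (negative) multiplication operator it generates a bounded analytic semigroup (sectorial of angle $\pi/2$), and $A_2$ is $A_1$-bounded with relative bound controlled by $e^\alpha\theta < 1$ — this follows from the pointwise bound $E^+(\eta)\le\theta E(\eta)$ converting $\|A_2 G\|_\alpha$ into $e^\alpha\theta\|E(\cdot)G\|_\alpha$ plus a lower-order term — so by the standard perturbation theorem for analytic semigroups (e.g., \cite{trev} or Engel--Nagel), $A = A_1 + A_2$ remains the generator of an analytic semigroup. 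The main obstacle is marshalling the relative-boundedness estimate cleanly enough that the perturbation theorem for \emph{analytic} semigroups applies with relative bound $<1$; the substochastic part is essentially immediate from the computation in \eqref{z7} combined with \eqref{z14} and \eqref{z16}.
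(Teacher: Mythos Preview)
Your substochastic argument is essentially the paper's: apply Proposition~\ref{le:substoch} with $A_0=A_1$ and a rescaled $P$ so that $A_1+A_2=A_0+rP$ with $r<1$. The paper sets $P=r^{-1}A_2$ for $r\in(0,1)$ with $r^{-1}e^\alpha\theta<1$, which is exactly your $P=cA_2$, $r=1/c$ trick. The integral computation and the pointwise bound $e^\alpha E^+(\eta)\le e^\alpha\theta E(\eta)$ via \eqref{z14}, \eqref{z16} are identical. (One small cleanup: the relative bound is pure, $\|A_2G\|_\alpha\le e^\alpha\theta\|A_1G\|_\alpha$ with no lower-order term, cf.\ \eqref{z19}; your ``plus a lower-order term'' is unnecessary and slightly obscures what follows.)

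The analyticity argument is where you diverge, and your sketch is not quite complete as written. The ``standard perturbation theorem for analytic semigroups'' (Pazy, Theorem~3.2.1) only guarantees analyticity of $A_1+A_2$ when the relative bound is below some threshold $\delta$ depending on the sectoriality constants of $A_1$, not merely $<1$. What makes the present situation work is the additional fact that for the negative multiplication operator $A_1$ one has $\|A_1R(\zeta,A_1)\|\le 1$ uniformly for $\mathrm{Re}\,\zeta>0$ (since $|E(\eta)/(\zeta+E(\eta))|\le 1$), which combined with the \emph{pure} relative bound $\|A_2G\|_\alpha\le e^\alpha\theta\|A_1G\|_\alpha$ gives $\|A_2R(\zeta,A_1)\|\le e^\alpha\theta<1$ uniformly, and then the Neumann series for $R(\zeta,A)$ yields $\|R(\zeta,A)\|\le(1-e^\alpha\theta)^{-1}\|R(\zeta,A_1)\|$. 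The paper carries out exactly this resolvent computation explicitly (computing $\|Q(\zeta)G\|_\alpha$ directly from the integral formula for $A_2R(\zeta,A_1)$) and then invokes the characterization of analyticity via the resolvent bound $\|R(\sigma+i\tau,A)\|\le C/|\tau|$. Your abstract route is valid once you insert the observation about $\|A_1R(\zeta,A_1)\|\le 1$; without it, the bare appeal to a perturbation theorem with relative bound $<1$ is a gap.
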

\begin{proof}
We apply Proposition~\ref{le:substoch} with $E=\Gamma_0$,
$X=\mathcal{G}_\alpha$ as in (\ref{z3}), and $A_0=A_1$. For $r>0$
and $A_2$ as in (\ref{z111}), we set $P=r^{-1}A_2$. The cone
$\mathcal{G}_\alpha^{+}$ contains all positive elements of
$\mathcal{G}_\alpha$. For such $A_0$ and $P$, and for $G \in
\mathcal{G}_\alpha^{+} \cap \mathcal{D}(A_1)$, the left-hand side of
(\ref{cond:substoch}) takes the form, cf. (\ref{z7}),
\begin{eqnarray*}
&& - \int_{\Gamma_0} E(\eta) G(\eta) \exp(- \alpha |\eta|) \lambda (d \eta) \\[.2cm]
&& + r^{-1} \int_{\Gamma_0} \int_{\mathbb{R}^d} E^{+}(y, \eta) G(\eta\cup y) \exp(-\alpha|\eta|)dy \lambda(d\eta)\nonumber \\[.2cm]
& = & \int_{\Gamma_0} \bigl( - E(\eta) + r^{-1} e^\alpha E^{+} (\eta)\bigr) G(\eta) \exp(-\alpha|\eta|) \lambda (d\eta).\nonumber
\end{eqnarray*}
For a fixed $\alpha \leq \alpha^*$, pick $r\in (0,1)$ such that $r^{-1} e^\alpha \theta <1$, cf. (\ref{z16}). Then, for such $\alpha$ and $r$, we have
\begin{equation*}
 \int_{\Gamma_0} \bigl( - E(\eta) + r^{-1} e^\alpha E^{+} (\eta)\bigr) G(\eta) \exp(-\alpha|\eta|) \lambda (d\eta) \leq 0,
\end{equation*}
which holds in view of (\ref{z14}) and (\ref{15A}), (\ref{16A}).
By (\ref{z7}) and (\ref{z14}), we have
\begin{equation}
 \label{z19}
\|A_2 G\|_\alpha \leq e^\alpha\theta \|A_1 G\|_\alpha.
\end{equation}
This means that $r^{-1}A_2:\mathcal{D}(A_1)\to\mathcal{G}_\alpha$. Since $r^{-1}A_2$ is a positive operator, cf.~(\ref{z111}), by Proposition~\ref{le:substoch} we have that $A=A_1 + A_2 = A_1 + r (r^{-1} A_2)$ generates a substochastic semigroup $\{S(t)\}_{t\geq 0}$. Let us prove that this semigroup is analytic.

For an appropriate $\zeta\in \mathbb{C}$ and the resolvents of $A$ and $A_1$, we have
\begin{equation}
 \label{z20}
R(\zeta, A) = R(\zeta, A_1) \sum_{n=0}^\infty Q^n(\zeta), \qquad Q(\zeta) := A_2 R(\zeta, A_1).
\end{equation}
For $G\in \mathcal{G}_\alpha$,
\begin{equation*}
(Q(\zeta) G ) (\eta) = \int_{\mathbb{R}^d} \frac{E^{+} (y, \eta)}{\zeta + E(\eta\cup y)} G(\eta\cup y) d y.
\end{equation*}
Thus, for ${\rm Re}\, \zeta =: \sigma >0$, by (\ref{12A}) we obtain
\begin{eqnarray*}
 \|(Q(\zeta) G )\|_\alpha & \leq & \int_{\Gamma_0} \int_{\mathbb{R}^d} \frac{E^{+} (y, \eta)}{\sigma + E(\eta\cup y)}
| G(\eta\cup y)| \exp(- \alpha |\eta|) d y \lambda (d \eta) \\[.2cm]
& = & \int_{\Gamma_0} \frac{| G(\eta)|}{\sigma + E(\eta)}
 \exp(- \alpha |\eta|+ \alpha) \left(\sum_{x\in \eta} E^{+} (x, \eta\setminus x)\right) \lambda (d \eta) \nonumber \\[.2cm]
& \leq & \theta e^\alpha \int_{\Gamma_0} \frac{| G(\eta)|}{\sigma + E(\eta)} E(\eta)\exp(- \alpha |\eta|) \lambda (d \eta)
 \nonumber \\[.2cm]
& \leq & \theta e^\alpha \|G\|_\alpha, \nonumber
\end{eqnarray*}
where we have taken into account (\ref{15A}) and (\ref{z14}). Note that the latter estimate is uniform in $\zeta$. We use it in (\ref{z20}) and obtain
\begin{equation}
 \label{z22}
\|R(\zeta , A)\| \leq \frac{1}{1 - \theta e^\alpha} \|R(\zeta , A_1)\|.
\end{equation}
For $\zeta = \sigma + {\rm i} \tau$ with $\sigma >0$ and $\tau \neq 0$,
readily $\|R(\zeta , A_1)G\|_\alpha \leq |\tau|^{-1} \|G\|_\alpha$. Employing this estimate and (\ref{z22}) we get
\[
\|R(\sigma + {\rm i}\tau , A)\| \leq \frac{1}{|\tau|(1 - \theta e^\alpha)}.
\]
Then we apply Theorem 4.6 of \cite{EN} page 101, and obtain the analyticity of $\{S(t)\}_{t\geq 0}$, which completes the proof.
\end{proof}

As a corollary, we immediately get the solution of the problem (\ref{z18}) in the form
\begin{equation*}
G^{(0)}_t = S(t) G_0, \qquad t\geq 0,
\end{equation*}
from which we see that $G^{(0)}_t \in \mathcal{G}_{\alpha_*}$ since $G_0 \in \mathcal{G}_{\alpha_*}$, and the map $t\mapsto G^{(0)}_t$ is continuously differentiable on $(0, +\infty)$.
\begin{proof}[Proof of Theorem~\ref{1tm}] Let $\alpha_*$ and $\alpha^*$ be as in the statement of the theorem, and then let $T_*$ be as in (\ref{z15}). Now we fix $n\in \mathbb{N}$ in (\ref{z17}) and take $\alpha \in (\alpha_*, \alpha^*)$. Set
\begin{equation}
 \label{z24}
T= \frac{\alpha - \alpha_*}{ \alpha^* - \alpha_*} T_*, \qquad \  \epsilon = (\alpha - \alpha_*)/n,
\qquad \alpha_l = \alpha_* + l \epsilon,
\ \ l= 0, \dots , n.
\end{equation}
By (\ref{z13}), we have
\begin{equation}
 \label{z25}
\|B\|_{\alpha_{l-1}\alpha_l} \leq \frac{n}{e T},\qquad l = 1, \dots , n,
\end{equation}
where $\|B\|_{\alpha_{l-1}\alpha_l}$ stands for the norm in the space of all bounded linear operators from $\mathcal{G}_{\alpha_{l-1}}$ to
$\mathcal{G}_{\alpha_{l}}$. For $l=1, \dots , n$, let us consider the Cauchy problem (\ref{z17}) in $\mathcal{G}_{\alpha_l}$, i.e.,
\begin{equation}
 \label{z26}
\frac{d}{dt} G^{(l)}_t = AG^{(l)}_t +B G^{(l-1)}_t, \qquad G^{(l)}_t|_{t=0} = G_0.
\end{equation}
Assume that $G^{(l-1)}_t \in \mathcal{G}_{\alpha_{l-1}}$ is continuously differentiable on $(0,+\infty)$. Note that this assumption holds true for $l=1$. Then, by (\ref{z25}), $B G^{(l-1)}_t \in \mathcal{G}_{\alpha_{l}}$ is continuously differentiable, and hence locally H\"older continuous on $(0,+\infty)$ and integrable on $[0, \tau]$, for any $\tau>0$. By our Lemma~\ref{1lm} and Corollary 3.3, page 113 in \cite{Pazy}, this yields that the problem (\ref{z26}) on the time interval $[0, +\infty)$ has a unique classical solution in $\mathcal{G}_{\alpha_l}$, given by the formula
\begin{equation}
 \label{z27}
G^{(l)}_t = S(t) G_0 + \int_0^t S(t-s) B G^{(l-1)}_s d s.
\end{equation}
By the very definition of a classical solution, it is continuously differentiable on $(0,+\infty)$, and hence we can proceed until $l=n$. Reiterating (\ref{z27}) we obtain
\begin{eqnarray}
 \label{z28}
G^{(n)}_t & = & S(t) G_0 + \sum_{l=1}^n \int_0^t \int_0^{t_1} \cdots \int_0^{t_{l-1}} S(t-t_1) B \\[.2cm]
&& \qquad\qquad\times  S(t_1-t_2)B \cdots S(t_{l-1}-t_l)B S(t_l) G_0
dt_1 \cdots dt_l. \nonumber
\end{eqnarray}
Note that $G^{(n)}_t \in \mathcal{G}_{\alpha}$ and $\alpha = \alpha_n$, $\alpha_*=\alpha_0$, see (\ref{z24}).
From the latter representation we readily obtain
\begin{eqnarray}
\label{z29}
 \|G^{(n)}_t - G^{(n-1)}_t\|_{\alpha_n} & \leq &
 \|G_0\|_{\alpha_*} \int_0^t \int_0^{t_1} \cdots \int_0^{t_{n-1}} \|B\|_{\alpha_0\alpha_1}\\[.2cm]
&&\times \|B\|_{\alpha_1\alpha_2} \cdots \|B\|_{\alpha_{n-1}\alpha_n} dt_1 \cdots dt_n \nonumber \\[.2cm]
& \leq & \frac{1}{n!} \left( \frac{n}{e}\right)^n \left(\frac{t}{T}\right)^n \|G_0\|_{\alpha_*}, \nonumber
\end{eqnarray}
where we have used (\ref{z25}) and the fact that $\|S(t)\| \leq 1$ for all $t\geq0$, see Lemma~\ref{1lm}. For any $t\in [0,T)$, the right-hand side of the latter estimate is summable in $n$; hence, $\{G^{(n)}_t\}_{n\in \mathbb{N}_0}$ is a Cauchy sequence in $\mathcal{G}_{\alpha}$. Its limit $G_t$ is an analytic function of $t$ on the disc $\{t\in \mathbb{C}: |t|< T\}$, and thus is continuously differentiable there. Since $G_t\in \mathcal{G}_{\alpha}$, we have
\[
 G_t \in {\rm Dom} (\widehat{L}) \subset \mathcal{G}_{\alpha^*},
\]
see (\ref{z11}), (\ref{z12}), and also (\ref{z19}). For any $\alpha' \in (\alpha, \alpha^*]$, by (\ref{z17}) the sequence $\{d G^{(n)}_t/dt\}_{n\in \mathbb{N}_0}$ converges in $\mathcal{G}_{\alpha'}$ to $\widehat{L}G_t$, where we consider $\widehat{L}$ as a bounded operator from $\mathcal{G}_{\alpha}$ to $\mathcal{G}_{\alpha'}$, cf. Lemma~\ref{LLpn}. Thus, $G_t$ is a classical solution of (\ref{21}).

Now we prove the stated uniqueness. Let $\widetilde{G}_t\in \mathcal{G}_{\alpha^*}$ be another solution of the problem (\ref{21}) with the same initial $G_0\in \mathcal{G}_{\alpha_*}$, which has the properties stated in the theorem, i.e., which exists for every $\alpha^* > \alpha_*$ on the corresponding time interval. Then, as above, one can show that $\widetilde{G}_t$ is analytic at $t=0$, and
\[
 \frac{d^n}{dt^n}\widetilde{G}_t|_{t=0} = \frac{d^n}{dt^n}{G}_t|_{t=0} = \widehat{L}^n G_0 \in \mathcal{G}_{\alpha^*},
\]
where $\widehat{L}^n$ is considered as a bounded operator from $\mathcal{G}_{\alpha_*}$ to $\mathcal{G}_{\alpha^*}$, the norm of which can be estimated by (\ref{z11Ab}). Since the above holds for all $n\in \mathbb{N}$, both solutions $\widetilde{G}_t$ and $G_t$ coincide.
\end{proof}

\begin{remark}
From the proof given above one readily concludes that the evolution described by the problem (\ref{21})
takes place in the scale of spaces $\{\mathcal{G}_{\alpha}\}_{\alpha \in [\alpha_*, \alpha^*]}$ in the following sense.
For every $t \in (0, T_+)$, there exists $\alpha_t \in (\alpha_*, \alpha^*)$ such that the solution $G_t$ lies in $\mathcal{G}_{\alpha_t}\subset \mathcal{G}_{\alpha^*}$.
\end{remark}

\section{The Evolution of Correlation Functions}
\label{Sec4}
\subsection{Setting}
For the Banach space $\mathcal{G}_\alpha$ (\ref{z3}), the dual space with respect to (\ref{19A}) is
\begin{equation}
 \label{z300}
\mathcal{K}_\alpha = \{k:\Gamma_0\to \mathbb{R}:
\|k\|_\alpha < \infty\},
\end{equation}
with the norm, see (\ref{19A}),
\begin{equation}
 \label{z30}
\|k\|_\alpha = \esssup_{\eta \in \Gamma_0} |k(\eta)|\exp(\alpha|\eta|).
\end{equation}
For $\alpha'' < \alpha'$,
we have $\|k\|_{\alpha''} \leq \|k\|_{\alpha'}$ ; and hence, cf. (\ref{z5}),
\begin{equation}
 \label{z31}
\mathcal{K}_{\alpha'} \hookrightarrow \mathcal{K}_{\alpha''}, \qquad \ {\rm for} \ \alpha'' < \alpha'.
\end{equation}
The above embedding is continuous but not dense. In the sequel, we
always suppose that (\ref{z14}) and (\ref{z16}) hold, and tacitly
assume that $\alpha < \alpha^*$ for each $\alpha$ we are dealing
with. Let $A$ be defined on $\mathcal{G}_\alpha$ by (\ref{z1}), and
let $A^*$ be its adjoint in $\mathcal{K}_\alpha$ with
\begin{equation*}
{\rm Dom}(A^*) =\bigl\{k\in \mathcal{K}_\alpha: \exists
\tilde{k}\in \mathcal{K}_\alpha \ \ \forall G\in \mathcal{D}(A) \ \ \langle\! \langle AG, k\rangle \!\rangle =
\langle\! \langle G, \tilde{k}\rangle\!\rangle\bigr\}.
\end{equation*}
Then, for $A^*$ and $A^\Delta$ defined by (\ref{22A}),
we have
\begin{equation*}
A^* k = A^\Delta k = A^\Delta_1 k + A^\Delta_2 k,
\end{equation*}
which holds for all
 $k\in \mathcal{K}_\alpha$ such
that both $A^\Delta_1$ and $A^\Delta_2$ map into $\mathcal{K}_\alpha$. Let
$\mathcal{Q}_\alpha$ stand for the closure of ${\rm Dom}(A^*)$ in $\|\cdot\|_\alpha$.
Then, cf. (\ref{z12}),
\begin{equation}
 \label{z33}
\mathcal{Q}_\alpha:= \overline{{\rm Dom}(A^*)}\supset {\rm Dom}(A^*) \supset
 \mathcal{K}_{\alpha'}, \qquad {\rm for} \ {\rm any} \ \alpha'>\alpha.
\end{equation}
The latter inclusion in (\ref{z33}) follows from (\ref{z11a}) and the next obvious estimates:
\begin{eqnarray}
\label{AC}
 \|A^\Delta_1 k\|_\alpha &\leq & \|k\|_{\alpha'}
\esssup_{\eta \in \Gamma_0} E(\eta)\exp\left(-(\alpha'- \alpha)|\eta| \right), \\[.2cm]
 \|A^\Delta_2 k\|_\alpha &\leq & \esssup_{\eta \in \Gamma_0}
e^{\alpha |\eta|}\sum_{x\in \eta}E^{+}(x,\eta\setminus x) |k(\eta\setminus x)|\nonumber\\[.2cm]
&\leq & \|k\|_{\alpha'}e^{\alpha'}\esssup_{\eta \in \Gamma_0} E^{+}(\eta)\exp\left(-(\alpha'- \alpha)|\eta| \right).
\nonumber
\end{eqnarray}
Noteworthy, $\mathcal{Q}_\alpha$ is a proper subspace of $\mathcal{K}_\alpha$.

Let $\{S(t)\}_{t\geq 0}$ be the semigroup as in Lemma~\ref{1lm}. For every $t>0$, let $ S^{\odot}(t) $
denote the restriction of $S(t)^*$ to $\mathcal{Q}_\alpha$. Since
$\{S(t)\}_{t\geq 0}$ is the semigroup of contractions, for $k\in \mathcal{Q}_\alpha$ we have that, for all $t\geq 0$,
\begin{equation}
 \label{ACa}
\|S^{\odot}(t)k\|_\alpha = \|S^{*}(t)k\|_\alpha \leq \|k\|_\alpha.
\end{equation}
For any $\alpha' >\alpha$ and $t\geq 0$, in view of (\ref{z31}) we can consider $S^{\odot}(t)$ as a bounded operator
from $\mathcal{K}_{\alpha'}$ to
$\mathcal{K}_{\alpha}$, for which by (\ref{ACa}) we have
\begin{equation}
 \label{ACb}
\|S^{\odot}(t)\|_{\alpha'\alpha} \leq 1, \qquad t\geq 0.
\end{equation}
\begin{proposition}
 \label{1pn}
For every $\alpha' >\alpha$ and any $k\in \mathcal{K}_{\alpha'}$, the map
\begin{equation*}
[0, +\infty) \ni t \mapsto S^{\odot}(t) k \in \mathcal{K}_\alpha
\end{equation*}
is continuous.
\end{proposition}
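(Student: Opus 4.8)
The plan is to reduce the asserted continuity on $[0,+\infty)$ to strong continuity at the origin, and to obtain the latter by exploiting that $k$ lies not merely in $\mathcal{K}_\alpha$ but in the strictly smaller space $\mathcal{K}_{\alpha'}\subset \mathrm{Dom}(A^{*})$, cf. (\ref{z33}) and the estimates (\ref{AC}). This last point is the crux of the matter: the adjoint semigroup $\{S^{*}(t)\}_{t\ge 0}$ fails to be strongly continuous on all of $\mathcal{K}_\alpha$, so one must genuinely use the additional regularity of $k$.

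For the reduction, note that from $S(t+h)=S(t)S(h)=S(h)S(t)$ one gets $S^{*}(t)S^{*}(h)=S^{*}(h)S^{*}(t)=S^{*}(t+h)$, whence $S^{\odot}(t+h)k-S^{\odot}(t)k=S^{*}(t)\bigl(S^{\odot}(h)k-k\bigr)$ and, for $0<h<t$, $S^{\odot}(t)k-S^{\odot}(t-h)k=S^{*}(t-h)\bigl(S^{\odot}(h)k-k\bigr)$. Since $\|S^{*}(t)\|_{\alpha\alpha}=\|S(t)\|_{\alpha\alpha}\le 1$ (Lemma~\ref{1lm}, cf. (\ref{ACa})), both increments are bounded in $\|\cdot\|_\alpha$ by $\|S^{\odot}(h)k-k\|_\alpha$, so it suffices to prove $\|S^{\odot}(h)k-k\|_\alpha\to 0$ as $h\to 0^{+}$. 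To this end put $\tilde k:=A^{*}k\in\mathcal{K}_\alpha$. For every $G\in\mathrm{Dom}(A)=\mathcal{D}(A_1)$, the standard identity $S(h)G-G=\int_0^h S(s)AG\,ds$ (a norm-convergent Bochner integral in $\mathcal{G}_\alpha$), together with the $S$-invariance of $\mathrm{Dom}(A)$ and the definition of $A^{*}$, yields
\[
\langle\!\langle G, S^{\odot}(h)k-k\rangle\!\rangle=\langle\!\langle S(h)G-G,k\rangle\!\rangle=\int_0^h\langle\!\langle AS(s)G,k\rangle\!\rangle\,ds=\int_0^h\langle\!\langle G,S^{*}(s)\tilde k\rangle\!\rangle\,ds .
\]
As $\mathrm{Dom}(A)$ is dense in $\mathcal{G}_\alpha$ and both sides are in $\mathcal{K}_\alpha=\mathcal{G}_\alpha^{*}$, this identifies $S^{\odot}(h)k-k$ with the weak-$*$ integral $\int_0^h S^{*}(s)\tilde k\,ds$; testing against $G$ with $\|G\|_\alpha\le 1$ and using $\|S^{*}(s)\tilde k\|_\alpha\le\|\tilde k\|_\alpha$ then gives the Lipschitz-type bound $\|S^{\odot}(h)k-k\|_\alpha\le h\,\|A^{*}k\|_\alpha\to 0$, which with the reduction above proves the proposition.

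The only real difficulty is conceptual: one must resist attempting to prove strong continuity on all of $\mathcal{K}_\alpha$ — which is false — and instead systematically invoke the hypothesis $k\in\mathcal{K}_{\alpha'}$, $\alpha'>\alpha$, which is exactly what places $k$ in $\mathrm{Dom}(A^{*})$ and furnishes the bound at $t=0$; the remaining steps are routine, modulo the standard justification for interchanging the pairing $\langle\!\langle\cdot,\cdot\rangle\!\rangle$ with the Bochner and weak-$*$ integrals, which follows from the norm-continuity of $s\mapsto S(s)AG$ in $\mathcal{G}_\alpha$. Alternatively, one may simply quote the general sun-dual semigroup theory — $\mathcal{Q}_\alpha=\overline{\mathrm{Dom}(A^{*})}$ is precisely the subspace on which $\{S^{*}(t)\}$ restricts to a $C_0$-semigroup (see, e.g., \cite{EN}), so continuity of $t\mapsto S^{\odot}(t)k$ holds for all $k\in\mathcal{Q}_\alpha\supset\mathcal{K}_{\alpha'}$ — but the direct argument is preferable since it also produces an explicit modulus of continuity.
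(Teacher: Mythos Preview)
Your proof is correct. The paper's own proof is a one-liner: it invokes Theorem~10.4 in \cite{Pazy} to conclude that $\{S^{\odot}(t)\}_{t\ge 0}$ is a $C_0$-semigroup on $\mathcal{Q}_\alpha$, and then observes that $\mathcal{K}_{\alpha'}\subset\mathcal{Q}_\alpha$ by (\ref{z33}). In other words, the paper uses exactly the ``alternative'' you mention at the end and dismisses as less informative.

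Your direct argument is longer but genuinely buys something: the explicit Lipschitz estimate $\|S^{\odot}(h)k-k\|_\alpha\le h\,\|A^{*}k\|_\alpha$, combined with (\ref{AC}) and (\ref{z11a}), yields a quantitative bound in terms of $\|k\|_{\alpha'}$ and $\alpha'-\alpha$, which the bare citation of the sun-dual theorem does not. This could in principle be fed into the later estimates in Section~\ref{Sec4}, though the paper never exploits it. Conversely, the paper's route has the advantage of immediately establishing continuity for all $k\in\mathcal{Q}_\alpha$, not just those in some $\mathcal{K}_{\alpha'}$ with $\alpha'>\alpha$; your argument as written needs $k\in\mathrm{Dom}(A^{*})$, and extending it to the closure $\mathcal{Q}_\alpha$ would require an additional density-plus-uniform-boundedness step (which is of course exactly what Pazy's theorem packages). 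For the proposition as stated, either approach suffices.
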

\begin{proof}
By Theorem 10.4, page 39 in \cite{Pazy}, the collection
$\{S^{\odot}(t)\}_{t\geq 0}$ constitutes a $C_0$-semigroup on
$\mathcal{Q}_\alpha$, which in view of (\ref{z33}) yields the
continuity in question.
\end{proof}
By Theorem 10.4, page 39 in \cite{Pazy}, the generator of the semigroup $\{S^{\odot}(t)\}_{t\geq 0}$ is the part of $A^*$ in $\mathcal{Q}_\alpha$,
which we denote by $A^{\odot}$. Hence, by Definition 10.3, page 39 in \cite{Pazy},
$A^{\odot}$ is the restriction of $A^*$ to the set
\begin{equation}
 \label{z34a}
{\rm Dom} (A^{\odot}):= \{ k\in {\rm Dom}(A^*): A^*k \in \mathcal{Q}_\alpha\}.
\end{equation}
For $\alpha'>\alpha$, we take $\alpha''\in (\alpha, \alpha')$ and obtain by (\ref{AC}) that
\begin{equation*}
A^*: \mathcal{K}_{\alpha'} \to \mathcal{K}_{\alpha''}.
\end{equation*}
Hence, for any $\alpha'>\alpha$,
\begin{equation}
 \label{AE}
{\rm Dom} (A^{\odot}) \supset \mathcal{K}_{\alpha'}.
\end{equation}
We recall that each $k$ may be identified with a sequence $\{k^{(n)}\}_{n\in \mathbb{N}_0}$
of symmetric $k^{(n)} \in L^\infty\left((\mathbb{R}^{d})^n\right)$, $k^{(0)}\in \mathbb{R}$.
Put $q^{(n)}= \|k^{(n)}\|_{L^\infty(\mathbb{R}^{nd})}$, $q^{(0)} = |k^{(0)}|$. Then (\ref{z30}) can be rewritten in the form
\begin{equation*}
 \|k\|_\alpha = \sup_{n\in \mathbb{N}_0} q^{(n)} e^{n \alpha}.
\end{equation*}
Set, cf. (\ref{23A}),
\begin{equation*}
\mathcal{D}(B^\Delta) = \{ k \in \mathcal{K}_\alpha : \sup_{n\in \mathbb{N}_0}n q^{(n)} e^{\alpha n}< \infty\}.
\end{equation*}
Then, see (\ref{13A}),
\begin{eqnarray*}
\|B^\Delta_1 k\|_\alpha & \leq & \sup_{n\in \mathbb{N}_0} q^{(n+1)} e^{\alpha n} \sup_{\eta \in \Gamma^{(n)}} \int_{\mathbb{R}^d} E^{-}(y,\eta) dy\\[.2cm]
& = & \langle{a}^{-} \rangle  \sup_{n\in \mathbb{N}_0}n q^{(n+1)} e^{\alpha n}
\leq e^{-\alpha} \langle{a}^{-} \rangle \sup_{n\in \mathbb{N}_0}n q^{(n)} e^{\alpha n}.
\end{eqnarray*}
$\|B^\Delta_2 k\|_\alpha$ can be estimated in the same way, which then yields
\begin{equation}
 \label{z33b}
\|B^\Delta k\|_\alpha \leq \Bigl(\langle{a}^{+}\rangle + \langle{a}^{-} \rangle e^{-\alpha}\Bigr) \sup_{n\in \mathbb{N}_0}n q^{(n)} e^{\alpha n}.
\end{equation}
Hence, $B^\Delta$ maps $\mathcal{D}(B^\Delta)$ into $\mathcal{K}_\alpha$. Let $(B^*, {\rm Dom}(B^*))$ be the adjoint operator to $(B, {\rm Dom}(B))$.
Then
$B^* k = B^\Delta k$ for $k\in \mathcal{D}(B^\Delta)$, and
\begin{equation}
 \label{z33c}
{\rm Dom}(B^*) \supset \mathcal{D}(B^\Delta)\supset
 \mathcal{K}_{\alpha'}, \quad {\rm for} \ {\rm any} \ \alpha'>\alpha.
\end{equation}
The latter inclusion follows from the estimate, cf. (\ref{z13}) and (\ref{z25}),
\begin{equation}
 \label{z33d}
 \|B^*\|_{\alpha'\alpha} \leq \frac{\langle{a}^{+} \rangle + \langle{a}^{-} \rangle e^{-\alpha}}{e(\alpha'- \alpha)},
\end{equation}
which can easily be obtained from (\ref{z33b}). Now we can define $L^\Delta$ as an operator in $\mathcal{K}_\alpha$. Namely, we set
\begin{eqnarray}
 \label{z33e}
L^\Delta & = & A^{\odot} + B^\Delta, \\[.2cm]
{\rm Dom} (L^\Delta) & = & {\rm Dom}(A^{\odot}) \cap \mathcal{D} (B^{\Delta}). \nonumber
\end{eqnarray}
By (\ref{AE}) and (\ref{z33c}), for any $\alpha'>\alpha$ we have
\begin{equation*}
{\rm Dom} (L^\Delta) \supset \mathcal{K}_{\alpha'}.
\end{equation*}

\subsection{The statement}

\begin{theorem}
 \label{2tm}
Let $\theta$, $\alpha_*$, $\alpha^*$, and $T_*$ be as in Theorem~\ref{1tm}. Then for every $k_0\in \mathcal{K}_{\alpha^*}$,
the problem (\ref{20A}) has a unique classical solution in $\mathcal{K}_{\alpha_*}$ on the time interval $[0,T_*)$.
\end{theorem}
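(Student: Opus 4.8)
The plan is to transpose the construction of Theorem~\ref{1tm} from the pre-dual scale $\{\mathcal{G}_\alpha\}$ to the dual scale $\{\mathcal{K}_\alpha\}$, replacing the semigroup $\{S(t)\}_{t\ge0}$ generated by $A$ with its sun-dual $\{S^{\odot}(t)\}_{t\ge0}$ and the perturbation $B$ with $B^{\Delta}$ (equivalently $B^*$ on the relevant domains). The splitting $L^\Delta = A^{\odot} + B^\Delta$ from (\ref{z33e}) is exactly analogous to $\widehat{L} = A + B$: by (\ref{ACb}) the semigroup $\{S^{\odot}(t)\}_{t\ge0}$ consists of contractions between the $\mathcal{K}$-spaces, by Proposition~\ref{1pn} it is strongly continuous from $\mathcal{K}_{\alpha'}$ into $\mathcal{K}_\alpha$, and by (\ref{z33d}) the perturbation $B^\Delta$ gains a factor $1/(\alpha'-\alpha)$ to the first power when passing from $\mathcal{K}_{\alpha'}$ to $\mathcal{K}_\alpha$ --- the precise structure an Ovcyannikov-type argument needs.

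Concretely, I would fix $\alpha_* < \alpha^*$ as in the statement, pick an intermediate $\alpha \in (\alpha_*, \alpha^*)$, set $T$, $\epsilon$, and the partition $\alpha_* = \alpha_0 < \alpha_1 < \cdots < \alpha_n = \alpha$ exactly as in (\ref{z24}), and define iterates
\begin{equation*}
k^{(0)}_t = S^{\odot}(t) k_0, \qquad
k^{(l)}_t = S^{\odot}(t) k_0 + \int_0^t S^{\odot}(t-s) B^\Delta k^{(l-1)}_s\, ds, \quad l = 1, \dots, n.
\end{equation*}
One checks inductively, using Proposition~\ref{1pn} for the well-definedness of $S^{\odot}(t)k_0 \in \mathcal{K}_{\alpha_*}$ and the H\"older-continuity/integrability of $s \mapsto B^\Delta k^{(l-1)}_s$ together with the analog of Corollary~3.3, page~113 in \cite{Pazy} for the $\odot$-semigroup, that each $k^{(l)}_t$ is the unique classical solution in $\mathcal{K}_{\alpha_l}$ of $dk^{(l)}_t/dt = A^{\odot} k^{(l)}_t + B^\Delta k^{(l-1)}_t$, $k^{(l)}_0 = k_0$, continuously differentiable on $(0,\infty)$. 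Reiterating down to $l = n$ gives the multi-integral representation as in (\ref{z28}), and the telescoping estimate
\begin{equation*}
\|k^{(n)}_t - k^{(n-1)}_t\|_{\alpha_n}
\le \|k_0\|_{\alpha_*} \prod_{l=1}^{n} \|B^*\|_{\alpha_{l-1}\alpha_l}\cdot \frac{t^n}{n!}
\le \frac{1}{n!}\left(\frac{n}{e}\right)^n \left(\frac{t}{T}\right)^n \|k_0\|_{\alpha_*},
\end{equation*}
using (\ref{z33d}) in the form $\|B^*\|_{\alpha_{l-1}\alpha_l} \le n/(eT)$ and $\|S^{\odot}(t)\| \le 1$ from (\ref{ACb}), which is summable in $n$ for $t \in [0,T)$. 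Hence $\{k^{(n)}_t\}$ is Cauchy in $\mathcal{K}_\alpha$; its limit $k_t$ is analytic in $t$ on $|t| < T$, lies in $\mathcal{K}_\alpha \subset {\rm Dom}(L^\Delta)$ via (\ref{AE}) and (\ref{z33c}), and the sequence of derivatives converges in $\mathcal{K}_{\alpha'}$ for $\alpha' \in (\alpha, \alpha^*]$ to $L^\Delta k_t$, so $k_t$ solves (\ref{20A}). Letting $\alpha \uparrow \alpha^*$ extends the solution to the full interval $[0,T_*)$. Uniqueness follows the same route as in Theorem~\ref{1tm}: any classical solution in $\mathcal{K}_{\alpha^*}$ with the stated properties is analytic at $t = 0$ with $d^n k_t/dt^n|_{t=0} = (L^\Delta)^n k_0$, where $(L^\Delta)^n$ is bounded from $\mathcal{K}_{\alpha_*}$ to $\mathcal{K}_{\alpha^*}$ with the norm controlled by the analog of (\ref{z11Ab}) for $L^\Delta$; matching all Taylor coefficients forces coincidence.

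The main obstacle I anticipate is not the Ovcyannikov iteration --- that is essentially a carbon copy of the proof of Theorem~\ref{1tm} --- but the verification that the inhomogeneous abstract Cauchy problem theory (strongly continuous semigroup, forcing term locally H\"older and integrable, mild solution given by the variation-of-constants formula is in fact a classical solution) applies verbatim to $\{S^{\odot}(t)\}_{t\ge0}$ on the space $\mathcal{Q}_\alpha = \overline{{\rm Dom}(A^*)}$. The delicacy is that $\{S^{\odot}(t)\}$ is a $C_0$-semigroup only on $\mathcal{Q}_\alpha$, which is a \emph{proper} closed subspace of $\mathcal{K}_\alpha$ (the embedding $\mathcal{K}_{\alpha'} \hookrightarrow \mathcal{K}_\alpha$ is not dense), so one must confirm that the forcing terms $B^\Delta k^{(l-1)}_s$ indeed lie in $\mathcal{Q}_\alpha$ at each stage --- which holds because, by (\ref{z33c}), $B^\Delta k^{(l-1)}_s \in \mathcal{K}_{\alpha_l}$ and $\mathcal{K}_{\alpha_l} \subset {\rm Dom}(A^*) \subset \mathcal{Q}_\alpha$ for $\alpha_l > \alpha$ --- and that the resulting classical solution, a priori living in $\mathcal{Q}_{\alpha_l}$, also inherits the required differentiability so that the induction can proceed. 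Once the bookkeeping between the four nested scales ($\mathcal{Q}_{\alpha_l}$, $\mathcal{K}_{\alpha_l}$, ${\rm Dom}(A^{\odot})$, ${\rm Dom}(L^\Delta)$) is set up cleanly, the estimates go through as in Section~\ref{Sec3}.
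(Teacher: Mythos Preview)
Your overall strategy---dualize the Ovcyannikov--semigroup construction of Theorem~\ref{1tm} using the sun-dual $S^{\odot}$ and $B^{\Delta}=B^{*}$---is exactly the paper's, but there is a genuine indexing error and one missing technical device.

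\textbf{The partition runs the wrong way.} For the $\mathcal{K}$-scale the embedding (\ref{z31}) is $\mathcal{K}_{\alpha'}\hookrightarrow\mathcal{K}_{\alpha''}$ for $\alpha''<\alpha'$, and both $S^{\odot}$ (via (\ref{ACb})) and $B^{*}$ (via (\ref{z33d})) map $\mathcal{K}_{\alpha'}\to\mathcal{K}_{\alpha}$ only for $\alpha'>\alpha$. So the iteration must start at $\alpha_0=\alpha^{*}$ (where $k_0$ lives) and \emph{descend} toward $\alpha_{*}$, not start at $\alpha_{*}$ and climb. With your partition $\alpha_{*}=\alpha_0<\cdots<\alpha_n=\alpha$ the norms $\|B^{*}\|_{\alpha_{l-1}\alpha_l}$ are simply undefined (the bound (\ref{z33d}) requires $\alpha_{l-1}>\alpha_l$), the estimate $\|k_0\|_{\alpha_{*}}$ on the right is too weak to drive the telescoping, and your claim that $B^{\Delta}k^{(l-1)}_s\in\mathcal{K}_{\alpha_l}$ with $\alpha_l>\alpha_{l-1}$ is false. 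This is easily repaired by reversing the chain, but as written the inequalities do not hold.

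\textbf{The paper does not invoke Pazy's Corollary~3.3 for $S^{\odot}$.} That corollary requires the semigroup to be \emph{analytic}, and the paper never establishes analyticity of $\{S^{\odot}(t)\}$ on $\mathcal{Q}_\alpha$---only the $C_0$ property (Proposition~\ref{1pn}). Instead the paper writes down the iterated-integral kernels $K_l$ of (\ref{z37}) directly and verifies their continuity and differentiability by hand. The key device you are missing is the \emph{doubled} partition (\ref{z40}) with an auxiliary $\delta>0$: each $S^{\odot}$ factor is given its own small gap $\alpha_{2s}\to\alpha_{2s+1}$ (so that Proposition~\ref{1pn} yields norm-continuity in the time variables), while each $B^{*}$ factor uses the remaining gap $\alpha_{2s-1}\to\alpha_{2s}$ of size $\epsilon=(\alpha^{*}-\alpha-\delta)/l$. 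Differentiability in the outermost $t$ then comes from (\ref{AE}) rather than from analyticity. Without allocating separate gaps for the $S^{\odot}$ factors you cannot conclude that the integrands are continuous in $\mathcal{K}_\alpha$ (only in $\mathcal{Q}_\alpha$, where $k_0$ need not lie), and your inductive ``continuously differentiable, hence H\"older'' step does not get off the ground.
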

\begin{proof}
Let $k_0\in \mathcal{K}_{\alpha^*}$, $\alpha \in(\alpha_*, \alpha^*)$, $n\in \mathbb{N}$, and $l=1, \dots , n$ be fixed. Consider
\begin{equation}
 \label{z37}
 K_l (t, t_1, \dots , t_l):= S^{\odot}(t-t_1) B^*S^{\odot}(t_1-t_2)B^* \cdots S^{\odot}(t_{l-1}-t_l)B^* S^{\odot}(t_l) k_0,
\end{equation}
where the arguments $(t,t_1, \dots, t_l)$ belong to the set
\begin{equation}
 \label{cone}
\mathcal{T}_l :=\{ (t,t_1, \dots, t_l) : 0\leq t_l \leq \cdots \leq t_1 \leq t\}.
\end{equation}
In (\ref{z37}), we mean that the operators act in the following spaces, cf. (\ref{ACb}),
\begin{equation*}
 S^{\odot}(t_l) : \mathcal{K}_{\alpha_0} \to \mathcal{K}_{\alpha_1}, \quad S^{\odot}(t_{l-s}-t_{l-s+1}) : \mathcal{K}_{\alpha_{2s}}\to \mathcal{K}_{\alpha_{2s+1}}, \quad s = 1, \dots, l,
\end{equation*}
and, cf. (\ref{z33d}),
\begin{equation}
 \label{z39}
 B^* : \mathcal{K}_{\alpha_{2s-1}}\to \mathcal{K}_{\alpha_{2s}}, \quad s = 1, \dots, l.
\end{equation}
Here, for
a positive $\delta< \alpha^* - \alpha$, we set
\begin{eqnarray}
 \label{z40}
 \alpha_{2s} & = & \alpha^*- \frac{s}{l+1}\delta - s \epsilon, \qquad \epsilon = (\alpha^* - \alpha - \delta)/l, \\[.2cm]
 \alpha_{2s+1} & = & \alpha^*- \frac{s+1}{l+1}\delta - s \epsilon, \qquad s= 0,1, \dots, l. \nonumber
\end{eqnarray}
Note that $\alpha_0 = \alpha^*$ and $\alpha_{2l+1} = \alpha$, and hence $K_l(t,t_1, \dots, t_l)
 \in \mathcal{K}_\alpha$.
In view of Proposition~\ref{1pn} and (\ref{z33d}), $K_l$ is a continuous function of each of its variables on (\ref{cone}).
Furthermore, it is differentiable in $t\in (0,+\infty)$ in every $\mathcal{K}_{\alpha'}$, $\alpha' \in (\alpha_*, \alpha)$,
and the following holds, cf. (\ref{z34a}) and (\ref{AE}),
\begin{equation*}
\frac{d}{dt}K_l(t,t_1, \dots, t_l) = A^{\odot}K_l(t,t_1, \dots, t_l).
\end{equation*}
Now we set
\begin{equation}
 \label{z41}
k^{(n)}_t = S^{\odot}(t) k_0 + \sum_{l=1}^n \int_0^t \int_0^{t_1} \cdots \int_0^{t_{l-1}}K_l (t, t_1, \dots , t_l) dt_1 \cdots dt_l.
\end{equation}
For
\begin{equation}
 \label{z42}
 T_\delta := \frac{\alpha^* - \alpha - \delta}{\alpha^* - \alpha_*}T_*,
\end{equation}
the function $[0, T_\delta)\ni t \mapsto k^{(n)}_t \in \mathcal{K}_\alpha $ is continuous, whereas
$(0, T_\delta)\ni t \mapsto k^{(n)}_t \in \mathcal{K}_{\alpha'} $ is differentiable, and the following holds,
cf. (\ref{z26}),
\begin{equation}
 \label{z43}
\frac{d}{dt} k^{(n)}_t = A^{\odot} k^{(n)}_t + B^* k^{(n-1)}_t, \qquad k^{(n)}_t|_{t=0} = k_0.
\end{equation}
For $T<T_*$, let us show that there exists $\alpha \in (\alpha_*, \alpha^*)$ such that
the sequence $\{k^{(n)}_t\}_{n\in \mathbb{N}}$ converges in $\mathcal{K}_\alpha$ uniformly on
$[0,T]$. For this $T$, we pick $\alpha \in (\alpha_*, \alpha^*)$ and a positive $\delta < \alpha^* - \alpha$ such that
also $T< T_\delta$, see (\ref{z42}). As in (\ref{z29}), for $t\in [0,T]$ we get
\begin{eqnarray*}
\|k^{(n)}_t - k^{(n-1)}_t\|_\alpha & \leq & \int_0^t \int_0^{t_1} \cdots \int_0^{t_{n-1}}\|K_n (t, t_1, \dots , t_n)\|_{\alpha} dt_1 \cdots dt_n\\[.2cm]
& \leq & \frac{T^n}{n!}\|k_0\|_{\alpha^*} \prod_{s=1}^n \|B^* \|_{\alpha_{2s-1}\alpha_{2s}},
\end{eqnarray*}
where we have taken into account (\ref{ACb}) and (\ref{z39}), (\ref{z40}) with $l=n$. Then by means of (\ref{z33d}) we obtain
\begin{eqnarray}
\label{z44}
\|k^{(n)}_t - k^{(n-1)}_t\|_\alpha & \leq & \frac{T^n}{n!}\left(\frac{n}{e} \right)^n
 \left( \frac{\alpha^* - \alpha_*}{(\alpha^* - \alpha - \delta)T_*} \right)^n \|k_0\|_{\alpha^*}\\[.2cm]
& = &\frac{1}{n!}\left(\frac{n}{e} \right)^n \left(\frac{T}{T_\delta} \right)^n \|k_0\|_{\alpha^*}, \nonumber
\end{eqnarray}
which certainly yields the convergence to be proven. Now we take $\alpha'\in [\alpha_*, \alpha)$ and obtain the convergence of both sides of
(\ref{z43}) in $\mathcal{K}_{\alpha'}$ where both operators are considered as bounded operators acting from
$\mathcal{K}_{\alpha}$ to $\mathcal{K}_{\alpha'}$, see (\ref{ACb}) and (\ref{z33d}). This yields that the limit $k_t\in \mathcal{K}_{\alpha_*}$ of the sequence $\{k^{(n)}_t\}_{n\in \mathbb{N}}$ solves (\ref{20A}) with $L^\Delta$ given by (\ref{z33e}).
\end{proof}
\begin{remark}
 \label{z2rk}
From the proof given above one concludes that the evolution described by the problem (\ref{20A})
takes place in the scale of spaces $\{\mathcal{K}_{\alpha}\}_{\alpha \in [\alpha_*, \alpha^*]}$ in the sense that,
for every $t \in (0, T_*)$, there exists $\alpha_t \in (\alpha_*, \alpha^*)$ such that the solution $k_t$ lies in $\mathcal{K}_{\alpha_t}\subset \mathcal{K}_{\alpha_*}$.
\end{remark}

\subsection{The dual evolutions}
Recall that the duality between correlation functions and quasi-observables is established by the relation (\ref{19A}).
\begin{definition}
Let $\alpha_*$, $\alpha^*$, $T_*$ be as in Theorem~\ref{1tm}, and for $G_0 \in \mathcal{G}_{\alpha_*}$, let $G_t$ be the solution
of the problem (\ref{21}). For a given $k_0 \in \mathcal{K}_{\alpha^*}$, the dual evolution $k_0 \mapsto k_t^D$ is the weak$^*$-continuous map
$[0,T_*) \ni t \mapsto k_t^D \in \mathcal{K}_{\alpha_*}$ such that, for every $t\in[0,T_*)$, the following holds
\begin{equation}
 \label{z46}
 \langle \! \langle G_t, k_0 \rangle \! \rangle = \langle \! \langle G_0, k_t^D \rangle \! \rangle.
\end{equation}
Likewise, for $k_0 \in \mathcal{K}_{\alpha^*}$, let $k_t$ be the solution
of the problem (\ref{20A}), see Theorem~\ref{2tm}. For a given $G_0 \in \mathcal{G}_{\alpha_*}$, the dual evolution
$G_0 \mapsto G_t^D$ is the weak-continuous map
$[0,T_*) \ni t \mapsto G_t^D \in \mathcal{G}_{\alpha^*}$ such that, for every $t\in[0,T_*)$, the following holds
\begin{equation}
 \label{z47}
 \langle \! \langle G_0, k_t \rangle \! \rangle = \langle \!\langle G^D_t, k_0 \rangle \!\rangle.
\end{equation}
\end{definition}
Note that the solution of (\ref{20A}) need not coincide with $k_t^D$, and similarly, the solution
of (\ref{21}) need not be the same as $G_t^D$. It is even not obvious whether such dual evolutions exist since the topological dual
to $\mathcal{K}_{\alpha}$ is not $\mathcal{G}_\alpha$.
\begin{theorem}
 \label{3tm}
For any $G_0 \in \mathcal{G}_{\alpha_*}$ and any $k_0 \in \mathcal{K}_{\alpha^*}$, the dual evolutions $k_0 \mapsto k^D_t$ and $G_0 \mapsto G^D_t$ exist and are norm-continuous.
\end{theorem}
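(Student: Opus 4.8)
The plan is to construct the two dual evolutions explicitly from the already-established series representations, using the duality (\ref{19A}) together with the uniform bounds obtained in the proofs of Theorems~\ref{1tm} and~\ref{2tm}. First I would treat the map $k_0 \mapsto k_t^D$. Fix $\alpha_* < \alpha^*$ with $e^{\alpha^*}\theta < 1$ and let $t \in [0,T_*)$ be given. The natural candidate is obtained by transposing term-by-term the series (\ref{z28}) for $G_t$: that is, one writes
\begin{equation}
 \label{z48}
\langle\!\langle G_t, k_0\rangle\!\rangle = \langle\!\langle S(t)G_0, k_0\rangle\!\rangle + \sum_{l=1}^\infty \int_0^t\!\!\int_0^{t_1}\!\!\cdots\!\!\int_0^{t_{l-1}} \langle\!\langle S(t-t_1)B S(t_1-t_2)B\cdots S(t_{l-1}-t_l)B S(t_l)G_0,\, k_0\rangle\!\rangle\, dt_1\cdots dt_l,
\end{equation}
moves all operators onto $k_0$ by the adjoint relation $\langle\!\langle SG, k\rangle\!\rangle = \langle\!\langle G, S^\odot k\rangle\!\rangle$ (valid on the scale by Lemma~\ref{1lm} and the fact that $S^\odot(t)$ restricted from $\mathcal{K}_{\alpha'}$ maps into $\mathcal{K}_{\alpha}$, cf. (\ref{ACb})) and $\langle\!\langle BG, k\rangle\!\rangle = \langle\!\langle G, B^* k\rangle\!\rangle$, and thereby reads off $k_t^D$ as exactly the series (\ref{z41}) in the limit $n\to\infty$. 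In other words, I expect $k_t^D$ to coincide with the solution $k_t$ of (\ref{20A}) from Theorem~\ref{2tm}; the point is to check that the pairing identity (\ref{z46}) holds, which follows by interchanging the (absolutely convergent, by the estimates (\ref{z29}) and (\ref{z44})) sums with the bilinear pairing, using that each finite-order term equals its transpose.

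The second step is the symmetric construction of $G_0 \mapsto G_t^D$. Here I would start from the series (\ref{z41}) for $k_t$, pair it against $G_0 \in \mathcal{G}_{\alpha_*}$, and push all the operators $S^\odot$ and $B^*$ back onto $G_0$ using the same adjoint relations read in the opposite direction; since $(S^\odot)^* $ restricted appropriately is $S$ and $(B^*)$ is adjoint to $B$ on the scale, the resulting series is term-by-term identical to (\ref{z28}), so $G_t^D$ equals the solution $G_t$ of (\ref{21}). Again the only thing to verify is the pairing identity (\ref{z47}) and that the rearrangement of the series is legitimate, which is guaranteed by the geometric-type bounds $\|G^{(n)}_t - G^{(n-1)}_t\|_{\alpha} \leq \tfrac{1}{n!}(n/e)^n (t/T)^n\|G_0\|_{\alpha_*}$ and its $\mathcal{K}$-analogue, both already in hand.

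For the norm-continuity claim, I would argue as follows. Having identified $k_t^D = k_t$ and $G_t^D = G_t$ with the classical solutions, continuity in $\mathcal{K}_{\alpha_*}$ (resp. $\mathcal{G}_{\alpha^*}$) is already part of the conclusions of Theorems~\ref{2tm} and~\ref{1tm}: the solutions are continuous, indeed analytic in $t$, as maps into those spaces. More directly, each finite partial sum $k_t^{(n)}$ is norm-continuous on $[0,T_\delta)$ by Proposition~\ref{1pn} and the continuity of the iterated integrals, and the estimate (\ref{z44}) gives uniform convergence on compact subintervals, hence the limit is norm-continuous; the same reasoning with (\ref{z29}) handles $G_t^D$. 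One subtlety worth a sentence: the problem statement only demanded weak$^*$- (resp. weak-) continuity in the definition of the dual evolutions, so the theorem is asserting the stronger norm-continuity, and the above identification with the classical solutions is precisely what upgrades it.

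The main obstacle I anticipate is \emph{justifying the termwise transposition of the infinite series} — that is, showing that $\langle\!\langle G_t, k_0\rangle\!\rangle$ can be computed by first expanding $G_t$ in its convergent series, then applying $\langle\!\langle \cdot, k_0\rangle\!\rangle$ to each term, then transposing operators, and finally resumming. The pairing $\langle\!\langle\cdot,\cdot\rangle\!\rangle$ is continuous as a bilinear form from $\mathcal{G}_{\alpha}\times\mathcal{K}_{\alpha}$ to $\mathbb{R}$ (with $|\langle\!\langle G,k\rangle\!\rangle| \leq \|G\|_\alpha\|k\|_\alpha$), so once the series for $G_t$ converges in $\mathcal{G}_{\alpha}$ for a suitable $\alpha < \alpha^*$ (which it does, on $[0,T]$ with $T<T_*$, by the proof of Theorem~\ref{1tm}) and $k_0 \in \mathcal{K}_{\alpha^*} \subset \mathcal{K}_{\alpha}$, the interchange of $\langle\!\langle\cdot,k_0\rangle\!\rangle$ with $\sum_l$ is immediate; the genuinely delicate bookkeeping is matching the chosen intermediate indices $\alpha_l$ in (\ref{z24}) against those in (\ref{z40}) so that every adjoint $S(t-t_1)^* = S^\odot(t-t_1)$ and every $B^* $ is being applied between the right pair of spaces in the scale, and confirming that the transposed $l$-fold integrand is exactly $K_l$ of (\ref{z37}). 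This is a routine but attention-demanding re-indexing; no new analytic input beyond (\ref{ACb}), (\ref{z33d}), Lemma~\ref{1lm}, and the two summability estimates is needed.
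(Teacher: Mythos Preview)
Your proposal contains a genuine error in the identification of the transposed series. When you move the operators in the $l$-th term of (\ref{z28}) onto $k_0$ via the adjoint relations, you obtain
\[
\langle\!\langle S(t-t_1)B S(t_1-t_2)B\cdots B S(t_l)G_0,\, k_0\rangle\!\rangle
= \langle\!\langle G_0,\, S^{\odot}(t_l)B^* S^{\odot}(t_{l-1}-t_l)B^*\cdots B^* S^{\odot}(t-t_1) k_0\rangle\!\rangle,
\]
that is, the product of operators is \emph{reversed}. This is the paper's $K_l^D$ in (\ref{z36}), not $K_l$ in (\ref{z37}); accordingly the candidate for $k_t^D$ is the series (\ref{z49}), not (\ref{z41}). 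Since $S^{\odot}$ and $B^*$ do not commute, there is no reason for these two series to coincide, so your claim $k_t^D = k_t$ does not follow. In fact the paper proves the equality $k_t = k_t^D$ separately, as Proposition~\ref{newprop}, and only under the additional hypothesis (\ref{yy12}); it is \emph{not} a byproduct of Theorem~\ref{3tm}. The same remark applies symmetrically to $G_t^D$: transposing (\ref{z41}) yields the reversed product on $G_0$, which is not (\ref{z28}).

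The correct argument, as carried out in the paper, is to \emph{define} $k_t^{D,n}$ by the reversed series (\ref{z49}), observe that the norm estimate (\ref{z44}) holds verbatim for it (the bound depends only on the operator norms, not their order), conclude convergence and norm-continuity of the limit $k_t^D$, and then verify (\ref{z46}) via the duality computation (\ref{z50}) paired with the convergence $G_t^{(n)}\to G_t$. The existence and norm-continuity parts of your outline survive with this correction; what fails is only the identification with the classical solution.
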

\begin{proof}
First we prove the existence of $k^D_t$. For a given $k_0 \in \mathcal{K}_{\alpha^*}$ and a fixed $n\in \mathbb{N}$, let $\alpha$, $\delta$, and $l$ be as in the proof of Theorem~\ref{2tm}. Set
 \begin{eqnarray}
 \label{z36}
K^D_l(t, t_1 , \dots , t_{l}) & := & S^{\odot}(t_{l}) B^* S^{\odot}(t_{l-1} - t_{l}) B^* \cdots S^{\odot}(t_{1} - t_{2}) B^* \\[.2cm]
&& \times S^{\odot}(t - t_{1}) k_0, \nonumber
\end{eqnarray}
where the above operators act in the following spaces
\begin{eqnarray*}
& & S^{\odot}(t_s - t_{s+1}): \mathcal{K}_{\alpha_{2s}} \to \mathcal{K}_{\alpha_{2s+1}}, \quad \ s=0,1 , \dots , l-1, \\[.2cm] & & S^{\odot}(t - t_{1}): \mathcal{K}_{\alpha_0} \to \mathcal{K}_{\alpha_1}, \qquad S^{\odot}(t_l):\mathcal{K}_{\alpha_{2l}} \to \mathcal{K}_{\alpha_{2l+1}} ,
\end{eqnarray*}
and $B^*$ act as in (\ref{z39}). The numbers $\alpha_s$ are given by (\ref{z40}). Then we set, cf. (\ref{z41}),
\begin{equation}
 \label{z49}
k^{D,n}_t = S^{\odot}(t) k_0 + \sum_{l=1}^n \int_0^t \int_0^{t_1} \cdots \int_0^{t_{l-1}}K^D_l (t, t_1, \dots , t_l) dt_1 \cdots dt_l.
\end{equation}
Exactly as in the proof of Theorem~\ref{2tm} we obtain, cf. (\ref{z44}),
\[
\|k^{D,n}_t - k^{D,n-1}_t\|_\alpha \leq \frac{1}{n!}\left(\frac{n}{e} \right)^n \left(\frac{T}{T_\delta} \right)^n \|k_0\|_{\alpha^*}
\]
which yields that the sequence $\{k^{D,n}_t\}_{n\in \mathbb{N}}$ converges in
$\mathcal{K}_\alpha$ uniformly on $[0,T]$. Hence its limit, which we denote by
$k^{D}_t$, is a norm-continuous function from $[0,T_*)$ to $\mathcal{K}_\alpha$, and $\mathcal{K}_\alpha\hookrightarrow \mathcal{K}_{\alpha_*}$. Note that
$k^{D}_t\in \mathcal{K}_{\alpha_t}$ where $\alpha_t\in (\alpha_*, \alpha^*)$, cf. Remark~\ref{z2rk}.

For every $G\in \mathcal{G}_{\alpha_*}$, the map $\mathcal{K}_{\alpha_*} \ni k \mapsto \langle \! \langle G, k \rangle \! \rangle \in \mathbb{R}$
is continuous. Since each $K_l^D$ in (\ref{z49}) is in $\mathcal{K}_{\alpha_*}$, we have, cf. (\ref{z36}),
\begin{eqnarray}
 \label{z50}
& & \Bigl\langle \!\! \Bigl\langle G_0, \int_0^t \int_0^{t_1} \cdots \int_0^{t_{l-1}}K^D_l (t, t_1, \dots , t_l) dt_1 \cdots dt_l \Bigr\rangle \!\! \Bigr\rangle \\[.2cm]
& & \qquad = \int_0^t \int_0^{t_1} \cdots \int_0^{t_{l-1}}
\langle \! \langle G_0, K^D_l (t, t_1, \dots , t_l) \rangle \! \rangle dt_1 \cdots dt_l \nonumber \\[.2cm]
& & \qquad = \int_0^t \int_0^{t_1} \cdots \int_0^{t_{l-1}} \langle \! \langle S(t-t_1) B S(t_1-t_2) B \nonumber \\[.2cm]
& & \qquad \qquad\qquad\times \cdots \times S(t_{l-1} - t_l)B S(t_l) G_0,
k_0 \rangle \! \rangle dt_1 \cdots dt_l \nonumber.
\end{eqnarray}
Thereafter, by (\ref{z49}) we obtain, cf. (\ref{z28}),
\begin{equation*}
\langle \! \langle G_0, k^{D,n}_t \rangle \! \rangle = \langle \! \langle G_t^{(n)}, k_0 \rangle \! \rangle,
\end{equation*}
which holds for all $t\in [0,T_*)$ and $n\in \mathbb{N}$. Passing here to the limit $n\to \infty$ and taking into account
the norm convergences $G_t^{(n)} \to G_t$, see Theorem~\ref{1tm}, and
$k^{D,n}_t \to k^{D}_t$ established above, we arrive at (\ref{z46}).

To prove (\ref{z47}), for $t\in [0,T_*)$ and $n\in \mathbb{N}$, we consider, cf. (\ref{z36}),
\begin{eqnarray*}
G^{D,n}_t & = & S(t) G_0 + \sum_{l=1}^n \int_0^t \int_0^{t_1} \cdots \int_0^{t_{l-1}} S(t_l) B \\[.2cm]
&& \qquad\qquad\times S(t_{l-1} - t_l) B \cdots S(t_1 - t_2)B S(t-t_1) G_0 d t_1 \cdots dt_l.
\end{eqnarray*}
As in the proof of Theorem~\ref{1tm}, we show that the sequence of $G^{D,n}_t$, $n\in \mathbb{N}$,
converges in $\mathcal{G}_{\alpha^*}$, uniformly on compact subsets of
$[0,T_*)$. Let $G^{D}_t$ be its limit. By the very construction, and due to the
possibility of interchanging the integrations as in (\ref{z50}), we get
\[
\langle \! \langle G^{D,n}_t, k_0 \rangle\! \rangle = \langle \! \langle G_0, k_t^{(n)} \rangle \! \rangle,
\]
where $k_t^{(n)}$ is the same as in (\ref{z41}). Passing here to the limit $n\to \infty$ we arrive at (\ref{z47}).
\end{proof}
\begin{remark}
 \label{Grk}
As in Remark~\ref{z2rk}, from the above proof we conclude that, for each $t \in (0, T_*)$, there exists $\alpha_t \in (\alpha_*, \alpha^*)$
such that $G^D_t \in \mathcal{G}_{\alpha_t}\subset \mathcal{G}_{\alpha^*}$.
\end{remark}

\section{The Evolution of States}

Theorem \ref{2tm} does not ensure that the solutions $k_t$ are correlation functions. Below we prove this holds under the condition (\ref{yy12}). Recall that we also assume (\ref{z14}).

\subsection{The evolution of local densities}

Let a measure $\mu \in \mathcal{M}^1_{\rm fm} (\Gamma)$ be locally absolutely continuous
with respect to the Poisson measure $\pi$. In that, for each $\Lambda \in \mathcal{B}_{\rm b}(\mathbb{R}^d)$, the projection
$\mu^\Lambda$ is absolutely continuous with respect to $\pi^\Lambda$, and hence to $\lambda^\Lambda$, cf. (\ref{3A}).
Consider
\begin{equation}
 \label{y1}
R^\Lambda (\eta) := \mathbb{I}_{\Gamma_\Lambda}(\eta) \frac{d\mu^\Lambda}{d\lambda^\Lambda}(\eta) , \qquad \eta \in \Gamma_0.
\end{equation}
Clearly, $R^\Lambda$ is a positive element of the Banach space $L^1 (\Gamma_0, d \lambda)$ of unit norm.
We call it a {\it local density}.
The measure $\mu$ is characterized by the correlation measure (\ref{9A}), and thus by the correlation function (\ref{9AA}) which can be written in the form (\ref{9AAghf}), cf. Proposition 4.2 in \cite{Tobi},
\begin{equation}
 \label{y2}
 k^\Lambda(\eta) = k(\eta)\mathbb{I}_{\Gamma_\Lambda}(\eta) = \int_{\Gamma_0} R^\Lambda (\eta\cup \xi) \lambda (d \xi), \qquad \eta \in \Gamma_0.
\end{equation}
Note that
$k^\Lambda =  k^\Lambda \mathbb{I}_{\Gamma_\Lambda}$.

As in \cite{Tobi1}, we say that
a probability measure $\mu$ on $\mathcal{B}(\Gamma)$ obeys Dobrushin's exponential bound with a given $\alpha >0$, if
for any $\Lambda \in \mathcal{B}_{\rm b} (\mathbb{R}^d)$, there exists $C_\Lambda>0$ such that
\begin{equation}
 \label{yy10y}
\int_{\Gamma_\Lambda} \exp( \alpha |\eta|) \mu^\Lambda (d\eta) \leq C_\Lambda.
\end{equation}
For $\alpha >0$, we set
\begin{equation}
 \label{yy11}
b_\alpha (\eta) = \exp( \alpha |\eta|), \qquad  \eta \in \Gamma_0.
\end{equation}
Clearly, if $\mu$ obeys (\ref{yy10y}) with a given $\alpha>0$, then, for all $\Lambda \in \mathcal{B}_{\rm b} (\mathbb{R}^d)$,
\begin{equation}
 \label{y1A}
 R^\Lambda \in \mathcal{R}_{\alpha}:= L^1 (\Gamma_0, b_\alpha d \lambda).
\end{equation}
In this subsection, we study the evolution of local densities in the space $\mathcal{R}_{\alpha}$.

As was mentioned above, we cannot define $L$ as given in
(\ref{R20}) on any space of functions $F:\Gamma \to \mathbb{R}$. However, it is possible to do
in the case of
bounded measurable functions $F:\Gamma_0\to \mathbb{R}$, i.e., on the space $L^\infty (\Gamma_0, d \lambda)$. Set
\begin{equation}
 \label{yy2}
 \Xi (\eta) = E(\eta) + \langle a^+ \rangle |\eta|, \qquad \eta\in\Gamma_0.
\end{equation}
Then we rewrite (\ref{R20}) in the following form
\begin{eqnarray}\notag
(LF)(\eta)&=&-\Xi(\eta)F(\eta)+\sum_{x\in \eta }\bigl( m + E^-( x,\eta\setminus x)\bigr) F\left( \eta
\setminus x\right) \notag\\&&+\int_{\mathbb{R}^{d}}E^+\left( x,\eta\right) F\left(
\eta\cup x\right) dx,\qquad\eta\in \Gamma _{0}.\label{deltainf}
\end{eqnarray}
Let $R\in L^{1}\left( \Gamma _{0},d\lambda \right) $ be such that $\Xi R\in
L^{1}\left( \Gamma _{0},d\lambda \right) $. For such $R$ and for any $F\in L^{\infty}\left( \Gamma _{0},d\lambda \right)$, by (\ref{12A}) we get
\begin{eqnarray*}
\int_{\Gamma _{0}}\left( LF\right) \left( \eta \right) R\left( \eta
\right) d\lambda \left( \eta \right)
&=&-\int_{\Gamma _{0}}\Xi\left( \eta \right) F\left( \eta \right) R\left(
\eta \right) d\lambda \left( \eta \right)  \nonumber \\
&&+\int_{\Gamma _{0}}\int_{\mathbb{R}^{d}}\bigl( m + E^-( x,\eta)\bigr) F\left(
\eta \right) R\left( \eta \cup x\right) dxd\lambda \left( \eta \right)
\nonumber \\
&&+\int_{\Gamma _{0}}\sum_{x\in \eta }E^+\left( x,\eta \setminus x\right)
F\left( \eta \right) R\left( \eta \setminus x\right) d\lambda \left(
\eta \right) .
\end{eqnarray*}
Next, we define the following operator in $L^1 (\Gamma_0, d \lambda)$
\begin{eqnarray}
 \label{y3}
( L^\dagger R )( \eta) & = & (A_0 R)(\eta) + (B R)(\eta)
 := - \Xi (\eta) R(\eta)\\[.2cm] && + \int_{\mathbb{R}^d} (m + E^{-} (y, \eta)) R( \eta\cup y) dy +
\sum_{x\in \eta} E^{+} (x, \eta \setminus x) R(\eta \setminus x) ,  \nonumber
\end{eqnarray}
with
\begin{equation}
 \label{y4}
 {\rm Dom} (L^\dagger) = \bigl\{ R\in  L^1 (\Gamma_0, d \lambda) : \Xi R \in  L^1 (\Gamma_0, d \lambda)\bigr\}.
\end{equation}
Then, for any $F\in L^\infty (\Gamma_0, d\lambda)$, we have
\begin{equation}\label{adj}
\int_{\Gamma_0} LF\cdot R\,d\lambda = \int_{\Gamma_0} F\cdot L^\dagger R \,d\lambda.
\end{equation}
\begin{lemma}\label{y3tm}
Suppose that the following condition be satisfied
\begin{equation}
 \label{yy12}
 m > \langle a^+ \rangle.
\end{equation}
Then the closure of $L^\dagger$ given in (\ref{y3}) and (\ref{y4}) is the generator of a stochastic $C_0$-semigroup $\{S^\dagger(t)\}_{t\geq 0}$ of bounded linear operators in $ L^1 (\Gamma_0, d \lambda)$, which leave invariant each $\mathcal{R}_{\alpha}$ with $\alpha\leq \log m - \log \langle a^{+} \rangle$.  Moreover, the restrictions ${S}^\dagger_\alpha (t) := S^\dagger(t)|_{\mathcal{R}_{\alpha}}$, $t\geq 0$, constitute a positive $C_0$-semigroup in $\mathcal{R}_{\alpha}$, the generator ${L}^\dagger_\alpha$ of which is the restriction of $\bigl(L^\dagger, {\rm Dom} (L^\dagger)\bigr)$.
\end{lemma}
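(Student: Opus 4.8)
The plan is to realise $\bigl(L^\dagger,{\rm Dom}(L^\dagger)\bigr)$ as a \emph{positive} perturbation of a multiplication operator and to invoke the perturbation theory of positive semigroups on $L^1$-spaces, i.e.\ the circle of results behind Proposition~\ref{le:substoch} and \cite{TV}. First I would use the splitting $L^\dagger=A_0+B$ of (\ref{y3}): by (\ref{yy2}) and (\ref{16A}) one has $\Xi\geq 0$, so $A_0$ (multiplication by $-\Xi$) with the domain (\ref{y4}) generates the positive contraction $C_0$-semigroup $(e^{tA_0}R)(\eta)=e^{-t\Xi(\eta)}R(\eta)$ on $L^1(\Gamma_0,d\lambda)$ (strong continuity by dominated convergence). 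The operator $B$ is manifestly positive, and applying the integration rule (\ref{12A}) to its two terms separately — moving one point ``in'' in the first and ``out'' in the second, and using $\int_{\mathbb{R}^d}E^{+}(y,\eta)\,dy=\langle a^+\rangle|\eta|$, cf.\ (\ref{14A}), (\ref{15A}) — yields $\|BR\|_{L^1}\leq\int_{\Gamma_0}\Xi(\eta)|R(\eta)|\,d\lambda(\eta)=\|A_0R\|_{L^1}$ for $R\in{\rm Dom}(A_0)$. Hence ${\rm Dom}(B)\supseteq{\rm Dom}(L^\dagger)={\rm Dom}(A_0)$ and $B$ is $A_0$-bounded with relative bound at most one — precisely the borderline case that the Thieme--Voigt theory is designed for.

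Next I would record the conservativity identity. It follows from (\ref{adj}) with $F\equiv\mathbf 1\in L^\infty(\Gamma_0,d\lambda)$: since $(L\mathbf 1)(\eta)=0$ by (\ref{deltainf}) (equivalently (\ref{R20})), we get $\int_{\Gamma_0}(L^\dagger R)(\eta)\,d\lambda(\eta)=0$ for every $R\in{\rm Dom}(L^\dagger)$. Together with the positivity of $B$ this is the hypothesis of the perturbation theorem of \cite{TV} (of which Proposition~\ref{le:substoch} is a special case), so the closure $\overline{L^\dagger}$ generates a substochastic $C_0$-semigroup $\{S^\dagger(t)\}_{t\geq0}$ on $L^1(\Gamma_0,d\lambda)$. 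To upgrade ``substochastic'' to ``stochastic'', and simultaneously to handle the invariance of the weighted spaces, I would test $L^\dagger R$ against the weight $b_\alpha$ of (\ref{yy11}), $\alpha>0$. Here (\ref{12A}) applies again, now to nonnegative integrands, and gives $\int_{\Gamma_0}(L^\dagger R)(\eta)\,b_\alpha(\eta)\,d\lambda(\eta)=\int_{\Gamma_0}(Lb_\alpha)(\eta)\,R(\eta)\,d\lambda(\eta)$ for $R\geq 0$ with $R,\Xi R\in\mathcal{R}_\alpha$, where $(Lb_\alpha)(\eta)=e^{\alpha|\eta|}\bigl[(e^{-\alpha}-1)E(\eta)+(e^{\alpha}-1)\langle a^+\rangle|\eta|\bigr]$. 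Using $E(\eta)\geq m|\eta|$, cf.\ (\ref{16A}), and $e^{-\alpha}-1=-e^{-\alpha}(e^{\alpha}-1)$, this is $\leq -e^{\alpha|\eta|}(e^{\alpha}-1)\bigl(e^{-\alpha}m-\langle a^+\rangle\bigr)|\eta|\leq 0$ precisely when $e^{\alpha}\langle a^+\rangle\leq m$, i.e.\ $\alpha\leq\log m-\log\langle a^+\rangle$; this is the only place where the standing assumption (\ref{yy12}) enters. For $\alpha=0$ the same identity reproduces conservativity, while for $\alpha>0$ it provides the particle-number Lyapunov function needed to rule out instantaneous loss of mass, so that the minimal semigroup $\{S^\dagger(t)\}$ is in fact stochastic.

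For the $\mathcal{R}_\alpha$-statement, with $0<\alpha\leq\log m-\log\langle a^+\rangle$, I would rerun the whole construction on $\mathcal{R}_\alpha=L^1(\Gamma_0,b_\alpha\,d\lambda)$, cf.\ (\ref{y1A}): $A_0$ (still multiplication by $-\Xi$) again generates a positive contraction $C_0$-semigroup there, $B$ is positive and $A_0$-bounded by the computation above, and $\int_{\Gamma_0}\bigl((A_0+B)R\bigr)(\eta)\,b_\alpha(\eta)\,d\lambda(\eta)\leq 0$ for nonnegative $R$ in the appropriate domain; hence the theory of \cite{TV} produces a substochastic $C_0$-semigroup $\{\widetilde S_\alpha(t)\}_{t\geq0}$ on $\mathcal{R}_\alpha$ whose generator is the closure of $L^\dagger$ restricted to ${\rm Dom}(A_0)\cap\mathcal{R}_\alpha$. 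Both $\{S^\dagger(t)\}$ and $\{\widetilde S_\alpha(t)\}$ are given by the Dyson--Phillips series $\sum_{n\geq0}S_n(t)$, $S_0(t)=e^{tA_0}$, $S_{n+1}(t)=\int_0^te^{(t-s)A_0}BS_n(s)\,ds$, which converges monotonically by positivity; since $\mathcal{R}_\alpha\hookrightarrow L^1(\Gamma_0,d\lambda)$ continuously and the series have identical terms (viewed on functions), the $\mathcal{R}_\alpha$-series is termwise the restriction of the $L^1$-series and convergence in $\mathcal{R}_\alpha$ forces agreement of the limits. Thus $S^\dagger(t)$ leaves $\mathcal{R}_\alpha$ invariant and $S^\dagger_\alpha(t):=S^\dagger(t)|_{\mathcal{R}_\alpha}=\widetilde S_\alpha(t)$ is a positive $C_0$-semigroup on $\mathcal{R}_\alpha$; because the embedding is continuous and $\mathcal{R}_\alpha$ is invariant, its generator $L^\dagger_\alpha$ is the part of $\overline{L^\dagger}$ in $\mathcal{R}_\alpha$, and on the core ${\rm Dom}(A_0)\cap\mathcal{R}_\alpha$ it is given by the explicit formula (\ref{y3}) — i.e.\ $L^\dagger_\alpha$ is the asserted restriction of $\bigl(L^\dagger,{\rm Dom}(L^\dagger)\bigr)$.

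The main obstacle is exactly the passage from substochastic to \emph{stochastic} on $L^1(\Gamma_0,d\lambda)$: the bare conservativity identity $\int_{\Gamma_0}L^\dagger R\,d\lambda=0$ on the core does not by itself exclude ``explosion'' (escape of mass to configurations of arbitrarily large size), so one must verify the honesty criterion of \cite{TV}, and it is the hypothesis $m>\langle a^+\rangle$ that supplies the necessary ingredient — a Lyapunov function $b_\alpha$ with $Lb_\alpha\leq 0$ controlling the number of particles. Everything else — the positivity of $B$, its $A_0$-boundedness, the algebraic identities obtained from (\ref{12A}), the identification of the restricted semigroup through the Dyson--Phillips series, and the description of $L^\dagger_\alpha$ — is routine.
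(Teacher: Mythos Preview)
Your approach is correct and shares the same skeleton as the paper's: split $L^\dagger=A_0+B$ with $A_0$ the multiplication by $-\Xi$, verify that $B$ is positive and $A_0$-bounded, record the conservativity identity via (\ref{adj}) with $F=\mathbf 1$, and control everything by computing $Lb_\alpha$. Your formula $(Lb_\alpha)(\eta)=e^{\alpha|\eta|}\bigl[(e^{-\alpha}-1)E(\eta)+(e^\alpha-1)\langle a^+\rangle|\eta|\bigr]$ and the threshold $\alpha\leq\log m-\log\langle a^+\rangle$ are exactly what the paper obtains.

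The organisational difference is this. You argue in two passes: first use $Lb_\alpha\leq 0$ as a Lyapunov bound to exclude explosion and upgrade the minimal semigroup to stochastic, then rerun the Thieme--Voigt construction on $\mathcal{R}_\alpha$ and identify the two semigroups term-by-term via the Dyson--Phillips series. The paper instead invokes a single sharper statement from \cite{TV} (recorded as Proposition~\ref{thTV} and specialised in Lemma~\ref{thTV-mult}) whose hypothesis is not merely $\int(L^\dagger R)b_\alpha\,d\lambda\leq 0$ but the quantitatively stronger
\[
\int_{\Gamma_0}(BR)\,b_\alpha\,d\lambda \;\leq\; \int_{\Gamma_0}(c+\Xi)R\,b_\alpha\,d\lambda \;-\;\varepsilon\int_{\Gamma_0}\Xi R\,d\lambda
\]
for some $c,\varepsilon>0$; the extra $-\varepsilon\int\Xi R\,d\lambda$ term is what delivers honesty and invariance of $\mathcal{R}_\alpha$ simultaneously. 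Verifying it costs only the observation that $\Xi(\eta)e^{-\alpha|\eta|}$ is bounded (by (\ref{AB}), (\ref{yy2}), (\ref{z11Aa})), so one can absorb $\varepsilon\Xi e^{-\alpha|\cdot|}$ into the constant $c$. Your Lyapunov inequality is the $\varepsilon=0$ case of this, and strictly speaking the Thieme--Voigt honesty criterion needs the positive $\varepsilon$; you flag this as ``the main obstacle'' but do not quite close it. Once you note the boundedness of $\Xi e^{-\alpha|\cdot|}$, your argument goes through, and your Dyson--Phillips identification is then a perfectly valid (if slightly longer) alternative to the paper's one-shot application of Proposition~\ref{thTV}.
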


 As in Section~\ref{Sec3}, we employ the perturbation theory for positive semigroups developed in \cite{TV}. To proceed further, we need some facts in addition to those preceding Proposition~\ref{le:substoch}. Recall that $X$ stands for $L^1(E, d\nu)$.

Let $\rho\in L^1_\mathrm{loc}(E,d\nu)$ be such that $p:=\essinf_{x\in E}\rho(x) >0$. We consider the Banach $X_{\rho}:=L^{1}\left( E,\rho \, d\nu \right) $ with norm $\left\Vert \cdot\right\Vert _{\rho }$. Clearly, $X_\rho \hookrightarrow X$, where the embedding is dense and continuous. The latter follows from the fact that $\left\Vert f\right\Vert \leq p^{-1}\left\Vert f\right\Vert _{\rho }$ for all $f\in X_{\rho }$. Next, for $X_{\rho }^{+}:=X_{\rho }\cap X^{+}$ we have that $X_{\rho }^{+}$ is dense in $X^{+}$ and $X_{\rho }=X_{\rho }^{+}-X_{\rho }^{+}$. Note that, $\left\Vert f+g\right\Vert _{\rho }=\left\Vert f\right\Vert _{\rho }+\left\Vert g\right\Vert _{\rho }$ for any $f,g\in X_{\rho }^{+}$.

Let $(A_0, D(A_0))$ be the generator of a positive $C_0$-semigroup $\{S_0(t)\}_{t\geq 0}$ of contractions on $X$. Then we set $\check{S}_{0}\left( t\right)= S_0 (t)|_{X_\rho}$, $t\geq 0$, and assume that the following holds:
\begin{enumerate}
\item [(a)] The operators $S_{0}\left( t\right)$, $t\geq 0$, leave $X_{\rho }$ invariant.
\item [(b)] $\{\check{S}_{0}\left( t\right)\}_{t\geq 0}$ is a $C_{0}$-semigroup on $X_{\rho}$.
\end{enumerate}
By Proposition II.2.3 of \cite{EN},
the generator $\check{A}_{0}$ of the semigroup $\{\check{S}_{0}\left( t\right)\}_{t\geq 0}$ is the part of $A_{0}$. Namely, $\check{A}_{0}f=A_{0}f$ on the domain
\[
D(\check{A}_{0})=\left\{ f\in D\left( A_{0}\right) \cap X_{\rho }:A_{0}f\in
X_{\rho }\right\} .
\]%
Set $D^{+}(\check{A}_{0})=D(\check{A}_{0})\cap X_{\rho }^{+}$. The next statement is an adaptation of Proposition~2.6 and Theorem~2.7 of \cite{TV}.
\begin{proposition}\label{thTV}
Let conditions (a) and (b) above hold, and $-A_0$ be a positive linear operator in $X$. Suppose also that, cf. (\ref{cond:substoch}),
\begin{equation*}
 \int_{E}\bigl( ( A_{0}+P) f\bigr) \left( x\right) \nu \left( d
x\right) = 0,
\end{equation*}
 where $P$ is such that $P\left( D(\check{A}_{0})\right) \subset X_{\rho }$. Finally, assume that there exist $c>0,\varepsilon >0$ such that, for all $f\in D^{+}(\check{A}_{0})$, the following estimate holds
\[
\int_{E}\bigl( \left( A_{0}+P\right) f\bigr) \left( x\right) \rho \left(
x\right) \nu \left(d x\right) \leq c\int_{E} f\left( x\right)
\rho \left( x\right) \nu \left( dx\right) +\varepsilon
\int_{E}\left( A_{0}f\left( x\right) \right) \nu \left(d x\right) .
\]
Then the closure $\bigl( A,D(A)\bigr) $ of the operator $%
\bigl( A_{0}+P,D\left( A_{0}\right) \bigr) $ is the generator of a
stochastic semigroup $\{S\left( t\right)\}_{ t\geq 0}$ on $X$. This semigroup
leaves the space $X_{\rho }$ invariant and induces a positive $C_{0}$%
-semigroup, $\check{S}\left( t\right) $, on $X_{\rho }$ with generator $\bigl(\check{A},D(\check{A})\bigr)$, which is the restriction of $\bigl( A_{0}+P,D\left(
A_{0}\right) \bigr) $ on $X_{\rho }$. Moreover, the operator $\bigl( A,D(A)\bigr) $ is
the closure of $\bigl(\check{A},D(\check{A})\bigr)$ in $X$.
\end{proposition}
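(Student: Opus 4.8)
The plan is to follow the construction of Thieme and Voigt \cite{TV}, organizing the argument into three stages: generation of a substochastic semigroup on $X$ by a minimal-semigroup limit, upgrading substochasticity to stochasticity, and finally the invariance of $X_\rho$. For the first stage I would exploit that the hypothesis $\int_E((A_0+P)f)\,d\nu = 0$ gives, a fortiori, $\int_E((A_0+P)f)\,d\nu \le 0$ for all $f\in D^+(A_0)$, so Proposition~\ref{le:substoch} applies: for every $r\in[0,1)$ the operator $(A_0+rP,D(A_0))$ generates a substochastic $C_0$-semigroup. Since $P$ is positive and $-A_0$ is positive, the conservativity condition yields $\|PR(\lambda,A_0)g\|\le\|g\|$ for $g\in X^+$, so the resolvent of $A_0+rP$ is the norm-convergent, positive Neumann series $R(\lambda,A_0)\sum_{n\ge0}\bigl(rPR(\lambda,A_0)\bigr)^n$, which increases monotonically in the cone as $r\uparrow1$ while staying bounded by $\lambda^{-1}$ in norm. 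Hence the strong monotone limit $R_\lambda:=\lim_{r\uparrow1}R(\lambda,A_0+rP)$ exists on $X^+$ and extends to a positive contractive pseudo-resolvent obeying the resolvent identity; identifying it, via Hille--Yosida for positive contraction semigroups, with $R(\lambda,A)$ for a generator $(A,D(A))$ produces a substochastic semigroup $\{S(t)\}_{t\ge0}$ whose generator $A$ is the closure of $A_0+P$.

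Second, I would upgrade substochastic to stochastic, which is where the equality hypothesis becomes indispensable. Integrating $\lambda R_\lambda f-(A_0+P)R_\lambda f=f$ against $d\nu$ and using $\int_E((A_0+P)g)\,d\nu=0$ on the domain shows $\lambda\int_E R_\lambda f\,d\nu=\int_E f\,d\nu$ for $f\in X^+$, i.e.\ the limit resolvent conserves mass, equivalently $\|S(t)f\|=\|f\|$ for $f\in X^+$. The subtle point — and the reason the weighted moment estimate is needed beyond the bare equality — is to guarantee that no mass escapes in the monotone passage $r\uparrow1$, so that the generator is \emph{precisely} the closure of $A_0+P$ and not a proper, mass-losing extension; this is the content of Theorem~2.7 of \cite{TV}.

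Third, for the invariance of $X_\rho$ I would use the weighted estimate. Because $-A_0$ is positive, $\int_E(A_0f)\,d\nu\le0$ for $f\ge0$, so the term $\varepsilon\int_E(A_0f)\,d\nu$ is nonpositive and may be discarded to obtain $\int_E((A_0+P)f)\rho\,d\nu\le c\|f\|_\rho$ for all $f\in D^+(\check{A}_0)$. Writing $g=R_\lambda f$ with $f\in X_\rho^+$ gives $\int_E((A_0+P)g)\rho\,d\nu=\lambda\|g\|_\rho-\|f\|_\rho$, so the estimate becomes the dissipativity bound $(\lambda-c)\|R_\lambda f\|_\rho\le\|f\|_\rho$ on $X_\rho^+$ for $\lambda>c$, uniformly in $r$ and hence after the limit. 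Therefore $R_\lambda(X_\rho^+)\subset X_\rho^+$ with $\|R_\lambda f\|_\rho\le(\lambda-c)^{-1}\|f\|_\rho$, and by the exponential formula $S(t)X_\rho\subset X_\rho$ with $\|S(t)\|_\rho\le e^{ct}$. The restrictions $\check{S}(t)=S(t)|_{X_\rho}$ are then bounded on $X_\rho$; strong continuity on $X_\rho$ follows from the dense continuous embedding $X_\rho\hookrightarrow X$, conditions (a) and (b), and the uniform exponential bound, so $\{\check{S}(t)\}_{t\ge0}$ is a positive $C_0$-semigroup. Its generator $\check{A}$ is the part of $A$ in $X_\rho$, acting as $A_0+P$ on $D^+(\check{A}_0)$, and since $X_\rho$ is $\|\cdot\|_X$-dense and $D(\check{A}_0)$ is a core, $A$ is recovered as the $X$-closure of $\check{A}$, as claimed.

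The main obstacle is the conservativity step: routine Hille--Yosida delivers only a substochastic semigroup, and the genuinely hard part is to show that the minimal semigroup produced by the $r\uparrow1$ limit is mass-conserving and that its generator is exactly the closure of $A_0+P$. This is precisely where both quantitative hypotheses — the equality $\int_E((A_0+P)f)\,d\nu=0$ and the $\varepsilon$-weighted moment estimate controlling the escape of mass along the approximation — must be combined, following the scheme of \cite{TV}.
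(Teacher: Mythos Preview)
The paper does not prove this proposition at all: it is stated explicitly as ``an adaptation of Proposition~2.6 and Theorem~2.7 of \cite{TV}'' and is simply quoted from that reference without argument. Your proposal is therefore not being compared against a proof in the paper but against a bare citation; you go further than the paper by actually outlining the Thieme--Voigt scheme, and your outline is broadly faithful to it.

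That said, a couple of points in your sketch deserve tightening if you intend it as more than a pointer to \cite{TV}. In stage~3 you write $g=R_\lambda f$ for $f\in X_\rho^+$ and then apply the weighted estimate, which is only assumed for $f\in D^+(\check{A}_0)$; one must first check that the resolvent (or its approximants $R(\lambda,A_0+rP)$) maps $X_\rho^+$ into $D^+(\check{A}_0)$, and this is exactly where the hypotheses $P(D(\check{A}_0))\subset X_\rho$ and conditions (a), (b) are used. Also, you discard the term $\varepsilon\int_E(A_0f)\,d\nu$ in stage~3 as nonpositive, but in the Thieme--Voigt argument this term is not merely a harmless negative remainder: it is the quantitative ingredient that rules out mass defect in the minimal semigroup and forces the generator to be the closure of $A_0+P$ rather than a strict extension. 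You flag this correctly at the end, but the role of $\varepsilon$ is in stage~2 (conservativity), not just stage~3.
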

We shall use the version of Proposition~\ref{thTV} in which $A_0$ is a
multiplication operator. Let $a:E\rightarrow \mathbb{R}_{+}$ be a measurable nonnegative function on $E$. Set
\[
\left( A_{0}f\right) \left( x\right) =-a\left( x\right) f\left( x\right), \quad x\in E, \qquad D\left( A_{0}\right) :=\left\{ f\in X:af\in X\right\} .
\]
Clearly, $-A_{0}$ is a positive operator in $X$.
Then, by, e.g., Lemma II.2.9 in \cite{EN}, $\bigl( A_{0},D\left( A_{0}\right)
\bigr) $ is the generator of the $C_{0}$-semigroup
composed by the (positive) multiplication operators
$S_{0}\left( t\right) =\exp \left\{ -ta\left( x\right) \right\}$, $t\geq 0$. For any $f\in X_{\rho }$, we have $\| S_0(t)f\|_{\rho}\leq \| f\|_{\rho}$; hence, $S_{0}\left(
t\right) $ leaves $X_{\rho }$ invariant. By,
e.g., Proposition I.4.12 and Lemma II.2.9 in \cite{EN}, the restrictions $\check{S}_{0}\left( t\right):= S_0(t)|_{X_\rho}$, $t\geq 0$, constitute a $C_{0}$-semigroup in $X_{\rho }$ with generator $\check{A}_{0}$ which acts as
$\check{A}_{0}f=A_{0}f$ on the domain
\[
D(\check{A}_{0})=\left\{ f\in X_{\rho }:af\in X_{\rho }\right\} \subset
D\left( A_{0}\right) .
\]
\begin{lemma}\label{thTV-mult}
Let $P:D(A_{0})\rightarrow X$ be a positive linear operator such that
\begin{equation*}
\int_{E}\left( Pf\right) \left( x\right) \nu \left( dx\right)
=\int_{E}a\left( x\right) f\left( x\right) \nu \left(d x\right) ,\quad ~~f\in
D^{+}(A_{0}).
\end{equation*}
Suppose also that there exist $c>0,\varepsilon >0$ such that, for all
$f\in D^{+}(\check{A}_{0})$, the following holds
\begin{align}
\int_{E}\left( Pf\right) \left( x\right) \rho \left( x\right) \nu \left(d
x\right) \leq &\int_{E}\left( c+a\left( x\right) \right) f\left( x\right)
\rho \left( x\right) \nu \left(d x\right) \notag\\&-\varepsilon \int_{E}a\left(
x\right) f\left( x\right) \nu \left(d x\right) . \label{newineq}
\end{align}%
Then the statements of Proposition~\ref{thTV} hold.
\end{lemma}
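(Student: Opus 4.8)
The plan is to deduce Lemma~\ref{thTV-mult} directly from Proposition~\ref{thTV}, applied in the special case where $A_0$ is the multiplication operator by $-a$. Most of the required hypotheses have already been verified in the discussion preceding the lemma: conditions (a) and (b) hold because the multiplication semigroup $S_0(t)=\exp\{-ta(\cdot)\}$ manifestly leaves $X_\rho$ invariant and its restrictions $\check{S}_0(t)$ form a $C_0$-semigroup on $X_\rho$ with generator $\check{A}_0 f=-af$ on $D(\check{A}_0)=\{f\in X_\rho:af\in X_\rho\}$; and $-A_0$, being multiplication by the nonnegative function $a$, is a positive operator on $X$. What remains is to convert the two integral hypotheses of the lemma into the two integral hypotheses of Proposition~\ref{thTV} and to check that $P$ does not lead out of $X_\rho$ on $D(\check{A}_0)$.

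First I would record the elementary identities $\int_E (A_0 f)\,d\nu=-\int_E a f\,d\nu$ and $\int_E (A_0 f)\rho\,d\nu=-\int_E a f\rho\,d\nu$, valid for $f\in D^+(A_0)$ and $f\in D^+(\check{A}_0)$ respectively. Combining the first with the assumption $\int_E (Pf)\,d\nu=\int_E af\,d\nu$ gives $\int_E\bigl((A_0+P)f\bigr)\,d\nu=0$ for all $f\in D^+(A_0)$, which is the mass-conservation condition of Proposition~\ref{thTV}. Next I would check $P\bigl(D(\check{A}_0)\bigr)\subset X_\rho$: for $f\in D^+(\check{A}_0)$ the function $Pf$ is nonnegative, and by (\ref{newineq}),
\[
\int_E (Pf)\,\rho\,d\nu \;\leq\; c\int_E f\,\rho\,d\nu+\int_E af\,\rho\,d\nu,
\]
whose right-hand side is finite since $f\in X_\rho$ and $af\in X_\rho$ (because $f\in D(\check{A}_0)$); hence $Pf\in X_\rho^+$, and by linearity of $P$ together with $D(\check{A}_0)=D^+(\check{A}_0)-D^+(\check{A}_0)$ we obtain $P\bigl(D(\check{A}_0)\bigr)\subset X_\rho$.

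Finally, for $f\in D^+(\check{A}_0)$ I would combine the second identity above with (\ref{newineq}):
\[
\int_E\bigl((A_0+P)f\bigr)\rho\,d\nu=-\int_E af\,\rho\,d\nu+\int_E(Pf)\,\rho\,d\nu\leq c\int_E f\,\rho\,d\nu-\varepsilon\int_E af\,d\nu=c\int_E f\,\rho\,d\nu+\varepsilon\int_E (A_0 f)\,d\nu,
\]
which is exactly the dissipativity-type estimate required by Proposition~\ref{thTV}. With all its hypotheses verified, the conclusions of Proposition~\ref{thTV} — the closure of $(A_0+P,D(A_0))$ generates a stochastic semigroup on $X$ leaving $X_\rho$ invariant, inducing a positive $C_0$-semigroup on $X_\rho$ whose generator is the restriction of $(A_0+P,D(A_0))$, and $(A,D(A))$ being the closure in $X$ of that restriction — carry over verbatim, which is precisely the assertion of Lemma~\ref{thTV-mult}.

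There is no deep obstacle here: the lemma merely repackages Proposition~\ref{thTV} in the form in which it will be applied to prove Lemma~\ref{y3tm}. The only points demanding a little care are that $P$ is a priori only defined on $D(A_0)\supset D(\check{A}_0)$ with values in $X$, so one must genuinely invoke (\ref{newineq}) — not merely positivity of $P$ — to see that its restriction to $D(\check{A}_0)$ stays in $X_\rho$; and that the strictness $c>0$, $\varepsilon>0$ carried through is what Proposition~\ref{thTV} demands.
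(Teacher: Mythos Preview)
Your proof is correct and follows essentially the same approach as the paper: both reduce the lemma to Proposition~\ref{thTV} and identify the only nontrivial hypothesis to check as $P\bigl(D(\check{A}_0)\bigr)\subset X_\rho$, deducing it from (\ref{newineq}) via the decomposition $f=f^+-f^-$ with $f^\pm\in D^+(\check{A}_0)$. The paper is terser, noting that the remaining hypotheses of Proposition~\ref{thTV} are immediate from the preceding discussion, while you spell out the mass-conservation and dissipativity identities explicitly; both arguments are the same in substance.
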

\begin{proof}
To apply Proposition~\ref{thTV} we should only show that $P\left( D(\check{A}_{0})\right) \subset X_{\rho }$.
Let us show that
it follows from \eqref{newineq}. Indeed, for $f\in D(\check{A}_{0})$, we have that both $f$ and $af$ are in $X_{\rho }$.
Set $f^{+}=\max \left\{ f;0\right\} $, $f^{-}=-\min \left\{ f;0\right\} $. Then $f^{\pm }\in X_{\rho }^{+}$ and $\left\vert f^{\pm }\right\vert
\leq \left\vert f\right\vert $, which yields $af^{\pm }\in X_{\rho }$.
Hence, $%
f^{\pm }\in D^{+}(\check{A}_{0})$, and therefore, by \eqref{newineq},
\begin{equation}\label{Bfpm-est}
\int_{E}\left( Pf^{\pm }\right) \left( x\right) \rho \left( x\right) \nu
\left(d x\right) <\infty .
\end{equation}
Since $f=f^{+}-f^{-}$ and $P$ is positive, we have by \eqref{Bfpm-est}
\begin{eqnarray*}
\left\Vert Pf\right\Vert _{X_{\rho }} &=&\int_{E}\bigl\vert (
Pf^{+}) ( x) - ( Pf^{-}) (x)
\bigr\vert \rho (x) \nu \left( d x\right) \\
&\leq &\int_{E}\Bigl( \bigl\vert ( Pf^{+}) \left( x\right)
\bigr\vert +\bigl\vert ( Pf^{-}) \left( x\right) \bigr\vert
\Bigr) \rho \left( x\right) \nu \left(d x\right) \\
&=&\int_{E}\bigl( ( Pf^{+}) \left( x\right) +( Pf^{-})
\left( x\right) \bigr) \rho \left( x\right) \nu \left(d x\right) <\infty .
\end{eqnarray*}
\end{proof}
\begin{proof}[Proof of Lemma~\ref{y3tm}]
For any $R\in {\rm Dom} (L^\dagger)$, by (\ref{12A}), we have
\begin{eqnarray*}
\int_{\Gamma _{0}}\left\vert \left( BR\right) \left( \eta \right)
\right\vert \lambda(d \eta)  &\leq &\int_{\Gamma _{0}}\int_{%
\mathbb{R}^{d}}\bigl( m + E^-( x,\eta)\bigr)\left\vert R\left( \eta \cup
x\right) \right\vert dx \lambda \left( d\eta \right) \\
&&+\int_{\Gamma _{0}}\sum_{x\in \eta }E^+\left( x,\eta\setminus x\right)
\left\vert R\left( \eta\setminus x\right) \right\vert \lambda \left(d
\eta\right) \\
&=&\int_{\Gamma _{0}}\Xi\left( \eta \right) \left\vert R\left( \eta
\right) \right\vert \lambda \left(d \eta \right) <\infty .
\end{eqnarray*}%
Then $B:{\rm Dom} (L^\dagger)\to L^{1}\left( \Gamma _{0},d \lambda
\right) $. Clearly, $B$ is positive, and by (\ref{adj}) we have that, for any positive
$R\in D\left( A_{0}\right) $,
\[
\int_{\Gamma _{0}}\left( L^{\dagger }R\right) \left( \eta \right) \lambda
\left( d\eta \right) =\int_{\Gamma _{0}}\left( L1\right) \left( \eta
\right) R\left( \eta \right) \lambda \left( d\eta \right) =0,
\]%
and hence,
\[
\int_{\Gamma _{0}}\left( BR\right) \left( \eta \right) \lambda \left(d
\eta \right) =\int_{\Gamma _{0}}\Xi\left( \eta \right) R\left( \eta
\right) \lambda \left(d \eta \right) .
\]
Now we apply Lemma~\ref{thTV-mult} with $P=B$ and $\rho=b_\alpha\geq1$, cf. (\ref{yy11}). Recall, that $\check{A}_{0}$ is
given by $\left( \check{A}_{0}R\right) \left( \eta \right) =-\Xi\left(
\eta \right) R\left( \eta \right) $ on the domain
\[
D(\check{A}_{0})=\left\{ R\in L^{1}\left( \Gamma _{0},b_\alpha \,d \lambda \right)
:\Xi R\in L^{1}\left( \Gamma _{0},b_\alpha \,d \lambda \right) \right\} .
\]%
Then, for any $0\leq R\in D(\check{A}_{0})$, we have%
\begin{eqnarray*}
&&\int_{\Gamma _{0}}\left( BR\right) \left( \eta \right) b_{\alpha}\left(
\eta \right) \lambda \left( d\eta \right) \\
&=&\int_{\Gamma _{0}}\left( L^{\dagger}R\right) \left( \eta \right) b_{\alpha}
\left( \eta \right) \lambda \left(d \eta \right) +\int_{\Gamma
_{0}}\Xi\left( \eta \right) R\left( \eta \right) b_{\alpha}\left( \eta
\right) \lambda \left( d\eta \right) \\
&=&\int_{\Gamma _{0}}R\left( \eta \right) \bigl( L b_{\alpha}\bigr) \left(
\eta \right) \lambda \left(d \eta \right) +\int_{\Gamma _{0}}\Xi\left(
\eta \right) R\left( \eta \right) b_{\alpha}\left( \eta \right) \lambda
\left(d \eta \right) ,
\end{eqnarray*}%
where we have used \eqref{adj} both sides of which are finite, see \eqref{main-est} below.
According to Lemma~\ref{thTV-mult}, we have to pick positive
$c$ and $\varepsilon$ such that
\begin{equation}\label{main-est}
\int_{\Gamma _{0}}\bigl( Lb_{\alpha}\bigr) \left( \eta \right) R\left( \eta
\right) \lambda \left( d\eta \right) \leq \int_{\Gamma _{0}}\left[ cb_{\alpha}
\left( \eta \right) -\varepsilon \Xi\left( \eta \right) \right] R\left(
\eta \right) \lambda \left(d \eta \right),
\end{equation}
holding for any positive $R\in D(\check{A}_{0})$.
By \eqref{deltainf} and \eqref{16A}, we get
\[
\bigl(Lb_{\alpha}\bigr)\left( \eta \right) =-\Xi (\eta)e^{\alpha \left\vert
\eta \right\vert }+e^{\alpha \left\vert
\eta \right\vert }e^{-\alpha }E\left( \eta \right) +e^{\alpha
\left\vert \eta \right\vert }e^{\alpha }\langle a^+\rangle |\eta|.
\]%
Hence, (\ref{main-est}) holds if, for ($\lambda$-almost) all $\eta\in \Gamma_0$, we have that
\[
e^{\alpha \left\vert \eta \right\vert }e^{-\alpha }E\left( \eta
\right) +e^{\alpha \left\vert \eta \right\vert }e^{\alpha }\langle a^+\rangle |\eta| \leq \left( c+\Xi\left(
\eta \right) \right) e^{\alpha \left\vert \eta \right\vert }-\varepsilon
\Xi\left( \eta \right),
\]%
which is equivalent to
\begin{equation}\label{exp-est}
\varepsilon \Xi\left( \eta \right) e^{-\alpha \left\vert \eta \right\vert }+(e^{\alpha }-1)\bigl(\langle a^+\rangle |\eta|-e^{-\alpha}E\left( \eta \right)\bigr) \leq c.
\end{equation}
For a given $\alpha>0$ and any $c>0$, by \eqref{yy2}, \eqref{16A}, \eqref{AB}, and \eqref{z11Aa},
it follows that
\[
\varepsilon \Xi\left( \eta \right) e^{-\alpha \left\vert \eta \right\vert }\leq c, \qquad \eta\in\Gamma_0,
\]
for some $\varepsilon>0$. Next, by \eqref{16A} the second term in the left-hand side of \eqref{exp-est} is non-positive whenever $\langle a^+\rangle \leq e^{-\alpha} m$, which holds for sufficiently small $\alpha>0$ in view of (\ref{yy12}).
\end{proof}

\subsection{Dual local evolution}

\label{SSec6.1}

Our aim now is to construct the evolution dual to that of $R^\Lambda \mapsto S_\alpha^\dagger (t) R^\Lambda$ obtained in
Lemma~\ref{y3tm}. Let $\mathcal{F}_\alpha$ be the dual space to $\mathcal{R}_{\alpha}$ as in (\ref{y1A}).
It is a weighted $L^\infty$ space on $\Gamma_0$
with measure $\lambda$ and norm
\begin{equation}
 \label{A7}
\|F\|_{\alpha} = \esssup_{\eta\in \Gamma_0} |F(\eta)| \exp(-\alpha |\eta|).
\end{equation}
Let $\widetilde{L}^\dagger_\alpha$ be the operator dual to $L^\dagger_\alpha = L^\dagger|_{\mathcal{R}_{\alpha}}$ as in Lemma~\ref{y3tm}.
Then the action of $\widetilde{L}^\dagger_\alpha$ is described in (\ref{R20}).
Let us show that, for any $\alpha' < \alpha$,
\begin{equation}
 \label{A8}
 \mathcal{F}_{\alpha'} \subset {\rm Dom} (\widetilde{L}^\dagger_\alpha).
\end{equation}
By (\ref{A7}) we have that $|F(\eta)| \leq \|F\|_{\alpha'} \exp(\alpha' |\eta|)$. Then
\begin{eqnarray*}
\| \widetilde{L}^\dagger_\alpha F \|_\alpha & \leq & \|F\|_{\alpha'} \esssup_{\eta\in \Gamma_0} \left(E(\eta) + \langle a^{+} \rangle |\eta| \right)
\exp\left( - (\alpha - \alpha')|\eta|\right) \\[.2cm] && + \, e^{-\alpha'} \|F\|_{\alpha'} \esssup_{\eta\in \Gamma_0} E(\eta)
\exp\left( - (\alpha - \alpha')|\eta|\right)\\[.2cm] && + \, e^{\alpha'} \|F\|_{\alpha'} \esssup_{\eta\in \Gamma_0}
\langle a^{+} \rangle |\eta| \exp\left( - (\alpha - \alpha')|\eta|\right),
\end{eqnarray*}
which can be rewritten in the form
\begin{equation}
\label{A9}
\| \widetilde{L}^\dagger_\alpha F \|_\alpha \leq \|F\|_{\alpha'}(1+ e^{\alpha'}) \Delta_+ (\alpha - \alpha') + \|F\|_{\alpha'}(1+ e^{-\alpha'})
 \Delta_- (\alpha - \alpha'),
\end{equation}
where, for $\beta >0$,
\begin{eqnarray*}
\Delta_+ (\beta) &:= & \esssup_{\eta\in \Gamma_0} \langle {a}_{+} \rangle |\eta| e^{-\beta |\eta|}, \\[.2cm]
\Delta_{-} (\beta) & := & \esssup_{\eta\in \Gamma_0} E(\eta)e^{-\beta |\eta|}. \nonumber
\end{eqnarray*}
Let $\mathcal{L}_\alpha$ stand for the closure of ${\rm Dom} (\widetilde{L}^\dagger_\alpha)$ in $\mathcal{F}_\alpha$. Note that
$\mathcal{L}_\alpha$ is a proper subspace of $\mathcal{F}_\alpha$. Set
\begin{equation*}
\mathcal{L}^{\odot}_\alpha =\{ F\in {\rm Dom} (\widetilde{L}^\dagger_\alpha ) : \widetilde{L}^\dagger_\alpha F \in \mathcal{L}_\alpha\}.
\end{equation*}
For $t\geq 0$, let $\widetilde{S}^{\odot}_\alpha(t)$ be the restriction, to $\mathcal{L}_\alpha$, of the operator dual to $S^\dagger_\alpha(t)$. By Theorem 10.4 in \cite{Pazy}, the operators $\widetilde{S}^{\odot}_\alpha(t)$, $t\geq 0$, constitute a $C_0$-semigroup on $\mathcal{L}_\alpha$, generated by $\widetilde{L}^\dagger_\alpha|_{\mathcal{L}^{\odot}_\alpha}$. The latter operator, which is the part of $\widetilde{L}^\dagger_\alpha$ in $\mathcal{L}_\alpha$, will be denoted by $\widetilde{L}^{\odot}_\alpha$. Note that, in view of (\ref{A8}) and (\ref{A9}), for any $\alpha'<\alpha$, the action of $\widetilde{L}^{\odot}_\alpha$ on $F\in \mathcal{F}_{\alpha'}$ is given by (\ref{R20}). Moreover, for any $\alpha'' < \alpha' < \alpha$, $\widetilde{L}^{\odot}_\alpha$ acts from $\mathcal{F}_{\alpha''}$ to $\mathcal{F}_{\alpha'}$, both considered as subsets of $\mathcal{L}^{\odot}_\alpha$.

For $\alpha'< \alpha$ and $F_0 \in \mathcal{F}_{\alpha'}$, we set
\begin{equation}
 \label{A12}
F_t = \widetilde{S}^{\odot}_\alpha(t) F_0, \qquad t>0.
\end{equation}
Then, see, e.g., page 5 in \cite{Pazy},
\begin{equation}
 \label{A13}
F_t = F_0 + \int_0^t \widetilde{L}^{\odot}_\alpha F_s d s.
\end{equation}

\subsection{The main statement}

We recall that any $k\in \mathcal{K}_{\alpha}$ is in fact a sequence of $k^{(n)}\in L^\infty ((\mathbb{R}^{d})^n)$, $n\in \mathbb{N}_0$, such that
\begin{equation*}
 \sup_{n\in \mathbb{N}}e^{\alpha n} \|k^{(n)}\|_{ L^\infty((\mathbb{R}^{d})^n)} < \infty, \qquad \alpha \in \mathbb{R},
\end{equation*}
see (\ref{z300}) and (\ref{z30}). By Proposition~\ref{rhopn}, such $k\in \mathcal{K}_{\alpha}$ is a correlation function of
a unique $\mu \in \mathcal{M}_{\rm fm}^1 (\Gamma)$ whenever $k^{(0)}=1$ and
\begin{equation}
 \label{y6}
 \langle \!\langle G,k \rangle \!\rangle \geq 0, \qquad \quad {\rm for} \ \ \ {\rm all} \ \ \ G\in B_{\rm bs}^+(\Gamma_0),
\end{equation}
see (\ref{9AY}) and (\ref{9AZ}). For $\alpha \in \mathbb{R}$, we set
\begin{equation*}
 \mathcal{M}_\alpha (\Gamma) = \{ \mu \in \mathcal{M}_{\rm fm}^1(\Gamma): k_\mu \in \mathcal{K}_\alpha\},
\end{equation*}
where $\mu$ and $k_\mu$ are as in (\ref{9AA}).
\begin{theorem}
 \label{r1tm}
Let $\theta$, $\alpha_*$, $\alpha^*$, and $T_*$ be as in Theorem~\ref{1tm}. Let also (\ref{yy12}) hold, and let $k_0 \in \mathcal{K}_{\alpha^*}$ be the correlation function of $\mu_0$, and $k_t$ be the solution of (\ref{20A}) with $k_t|_{t=0} = k_0$, as in Theorem~\ref{2tm}. Then, there exists $\mu_t\in \mathcal{M}_{\alpha_*}(\Gamma)$ such that $k_{\mu_t} = k_t$. In other words, the evolution $k_0 \mapsto k_t$ uniquely determines the evolution of the corresponding states
\[
\mathcal{M}_{\alpha^*} (\Gamma) \ni \mu_0 \mapsto \mu_t\in \mathcal{M}_{\alpha_*}(\Gamma), \qquad t>0.
\]
\end{theorem}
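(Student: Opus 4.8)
The plan is to verify, for each $t\in[0,T_*)$, the hypotheses of Proposition~\ref{rhopn} for the measure $\rho_t$ with $d\rho_t/d\lambda=k_t$, and then to take for $\mu_t$ the measure it produces. Since $k_t\in\mathcal{K}_{\alpha_*}$ by Theorem~\ref{2tm}, the bound (\ref{9AW}) holds with the constant (hence locally integrable) $C=\max\{1,e^{-\alpha_*}\|k_t\|_{\alpha_*}\}$, which will give \emph{uniqueness} of $\mu_t\in\mathcal{M}^1_{\rm fm}(\Gamma)$ and, again because $k_t\in\mathcal{K}_{\alpha_*}$, the inclusion $\mu_t\in\mathcal{M}_{\alpha_*}(\Gamma)$. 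Also $k_t^{(0)}\equiv1$: evaluating (\ref{20A}) at $\eta=\emptyset$ and using $E(\emptyset)=E^{\pm}(x,\emptyset)=0$ gives $\tfrac{d}{dt}k_t(\emptyset)=0$. So the whole matter reduces to proving the positivity
\[
 \langle\!\langle G,k_t\rangle\!\rangle\ \ge\ 0\qquad\text{for all }G\in B_{\rm bs}^+(\Gamma_0)\text{ and all }t\in[0,T_*),
\]
after which well-definedness of $\mu_0\mapsto\mu_t$ follows from the uniqueness parts of Theorem~\ref{2tm} and Proposition~\ref{rhopn}.

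To prove this positivity I would approximate $\mu_0$ from within the state space by measures supported on finite configurations, on which Lemma~\ref{y3tm} supplies a genuine Markov (stochastic) evolution. Since $k_0$ is sub-Poissonian, $\mu_0$ is locally absolutely continuous with respect to $\pi$ and satisfies Dobrushin's bound $\int_{\Gamma_\Lambda}e^{\alpha|\eta|}\mu_0^\Lambda(d\eta)=\sum_n\tfrac{(e^\alpha-1)^n}{n!}\int_{\Lambda^n}k_0^{(n)}<\infty$ for every $\alpha>0$; hence its local densities $R_0^\Lambda$ of (\ref{y1}) lie in $\mathcal{R}_{\alpha}\cap\mathrm{Dom}(L^\dagger)$ for $0<\alpha\le\log m-\log\langle a^+\rangle$ (here $\int_{\Gamma_0}|\eta|^2R_0^\Lambda\,d\lambda=\int_\Gamma|\gamma_\Lambda|^2\mu_0(d\gamma)<\infty$, and (\ref{yy12}) is used; recall (\ref{z14}) is standing). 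By Lemma~\ref{y3tm}, $R_t^\Lambda:=S^\dagger(t)R_0^\Lambda$ is for each $t\ge0$ a probability density on $\Gamma_0$, i.e.\ the $\lambda$-density of a probability measure $\mu_t^\Lambda$ on $\Gamma$ concentrated on $\Gamma_0$; by (\ref{y2}) its correlation function is $q_t^\Lambda:=K^*R_t^\Lambda$, $q_t^\Lambda(\eta)=\int_{\Gamma_0}R_t^\Lambda(\eta\cup\xi)\lambda(d\xi)$. Being a genuine correlation function, $q_t^\Lambda$ obeys the positivity of Proposition~\ref{rhopn}: $\langle\!\langle G,q_t^\Lambda\rangle\!\rangle=\int_\Gamma(KG)(\gamma)\mu_t^\Lambda(d\gamma)\ge0$ for all $G\in B_{\rm bs}^+(\Gamma_0)$, all $\Lambda$ and all $t$.

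It then remains to let $\Lambda\uparrow\mathbb{R}^d$. Because $L^\Delta$ is by construction the dual of $\widehat{L}=K^{-1}LK$ while $L^\dagger$ is the dual of $L$ (see (\ref{adj})), the map $K^*$ intertwines the two generators, so $q_t^\Lambda$ solves $\tfrac{d}{dt}q_t^\Lambda=L^\Delta q_t^\Lambda$ with $q_0^\Lambda=k_0\,\mathbb{I}_{\Gamma_\Lambda}\in\mathcal{K}_{\alpha^*}$; comparing the Dyson series of $S^\dagger(t)$ for the splitting $L^\dagger=A_0+B$ of (\ref{y3}) --- to which $K^*$ is applied term by term --- with (\ref{z41}), one identifies $q_t^\Lambda$ with the solution of (\ref{20A}) furnished by Theorem~\ref{2tm} for the datum $q_0^\Lambda$, i.e.\ $q_t^\Lambda$ is represented by (\ref{z41}) with $k_0$ replaced by $q_0^\Lambda$. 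Now $q_0^\Lambda\to k_0$ pointwise with $\|q_0^\Lambda\|_{\alpha^*}\le\|k_0\|_{\alpha^*}$; the finite-order terms of (\ref{z41}) are finite iterates of $S^{\odot}(\cdot)$ and $B^*$, which by dominated convergence carry sequences that converge $\lambda$-a.e.\ and are bounded in $\|\cdot\|_{\alpha^*}$ into $\lambda$-a.e.\ convergent ones, while the tail is bounded by (\ref{z44}) uniformly in $\Lambda$; hence $q_t^\Lambda(\eta)\to k_t(\eta)$ for $\lambda$-a.a.\ $\eta$ and each $t<T_*$. Since for $G\in B_{\rm bs}^+$ the $q_t^\Lambda$ are dominated on $\mathrm{supp}\,G$ by a fixed multiple of $e^{-\alpha|\cdot|}$, we get $\langle\!\langle G,q_t^\Lambda\rangle\!\rangle\to\langle\!\langle G,k_t\rangle\!\rangle$, and the positivity passes to the limit.

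The main obstacle is exactly the bridge just described: rigorously identifying $q_t^\Lambda=K^*S^\dagger(t)R_0^\Lambda$ with the correlation-function evolution of Theorem~\ref{2tm} emanating from $q_0^\Lambda$ --- that is, intertwining the $L^1$-type local-density semigroup of Lemma~\ref{y3tm} with the $L^\infty$-type evolution (\ref{20A}) through $K^*$ (note $q_t^\Lambda$ need not itself lie in $\mathcal{K}$, only in a weighted $L^1$ space, so the identification must go through the mixed semigroup/Ovcyannikov Dyson series rather than a naive Taylor expansion, the latter diverging) --- together with the passage $\Lambda\uparrow\mathbb{R}^d$, which cannot be carried out in norm (the data $q_0^\Lambda$ do not converge in $\|\cdot\|_{\alpha^*}$) and must be effected pointwise using the uniform bounds of Theorem~\ref{2tm}. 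An equivalent repackaging of the positivity step is through the sun-dual observable semigroup $\widetilde{S}^{\odot}_\alpha(t)$ of Section~\ref{SSec6.1}: it is positivity-preserving, being induced by the dual of the stochastic semigroup $S^\dagger(t)$, and one has $\langle\!\langle G,q_t^\Lambda\rangle\!\rangle=\langle\,(KG)|_{\Gamma_0},R_t^\Lambda\rangle_{L^1}=\bigl\langle\,\widetilde{S}^{\odot}_\alpha(t)\bigl((KG)|_{\Gamma_0}\bigr),R_0^\Lambda\bigr\rangle_{L^1}$ with $(KG)|_{\Gamma_0}\in\mathcal{F}_{\alpha'}$ of polynomial growth for every $\alpha'>0$, whence the right-hand side is manifestly nonnegative; the limit $\Lambda\uparrow\mathbb{R}^d$ is then taken as above.
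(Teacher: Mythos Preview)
Your overall strategy---approximate $\mu_0$ by localized states, evolve the local densities with the stochastic semigroup of Lemma~\ref{y3tm}, pull back via $K^*$ to get positive ``correlation functions'' $q_t^\Lambda$, then let $\Lambda\uparrow\mathbb{R}^d$---is exactly the paper's. The gap is precisely where you locate it: the identification of $q_t^\Lambda$ with the solution of (\ref{20A}) started at $q_0^\Lambda$, and your proposed resolution does not go through.

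The Dyson series comparison fails because the two splittings do not intertwine under $K^*$. The semigroup $S^\dagger(t)$ is built from $L^\dagger=A_0+B$ with $A_0$ the multiplication by $-\Xi$, cf.\ (\ref{y3}); the series (\ref{z41}) is built from $L^\Delta=A^\Delta+B^\Delta$ with $A^\Delta$ generating $S^\odot(t)$. While $K^*L^\dagger=L^\Delta K^*$ holds at the generator level, one has $K^*A_0\neq A^\Delta K^*$ (just compute: $K^*A_0R$ involves $\Xi(\eta\cup\xi)$ under the $\xi$-integral, which is not $A^\Delta(K^*R)$), so term-by-term comparison of the two Dyson expansions is not available. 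Moreover the Thieme--Voigt construction of $S^\dagger(t)$ is a monotone limit, not a norm-convergent Dyson series, since $B$ in (\ref{y3}) is only relatively bounded with bound~$1$.

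The paper resolves this by an additional truncation you omit: it replaces $R_0^{\Lambda}$ by $R_0^{\Lambda,N}(\eta)=R_0^\Lambda(\eta)\,I_N(\eta)$, cf.\ (\ref{y8}). Then $q_0^{\Lambda,N}$ has only finitely many nonzero components and lies in $\mathcal{H}^1_{\rm fin}\cap\mathcal{H}^\infty_{\rm fin}$, a set which $\widehat{L}$ maps into itself (raising $N$ by one, see (\ref{y20})). Both evolutions---the one coming from $S^\dagger(t)$ via $K^*$ and the one from Theorem~\ref{2tm}---satisfy the same weak equation against $G\in\mathcal{H}^1_{\rm fin}\cap\mathcal{H}^\infty_{\rm fin}$, and setting $\phi(t,G)$ to be their difference one gets $\phi(t,G)=\int_0^t\phi(s,\widehat{L}G)\,ds$, hence $\frac{d^n}{dt^n}\phi(0,G)=0$ for all $n$; this is exactly the ``naive Taylor expansion'' you dismissed, made legitimate by the $N$-cutoff. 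For the limit $\Lambda\uparrow\mathbb{R}^d$, $N\to\infty$ the paper does not use pointwise convergence through (\ref{z41}) (your dominated-convergence claim for $S^\odot$ and $B^*$ acting on $L^\infty$-type spaces is not justified) but instead invokes the dual evolution $G_0\mapsto G_t^D$ of Theorem~\ref{3tm}: by (\ref{z47}), $\langle\!\langle G_0,k_t-k_t^{\Lambda,N}\rangle\!\rangle=\langle\!\langle G_t^D,k_0-q_0^{\Lambda,N}\rangle\!\rangle$, and since $G_t^D\in\mathcal{G}_{\alpha^*}$ is an honest $L^1$-object, ordinary dominated convergence handles the right-hand side (Appendix).
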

\begin{proof}
The main idea of the proof is to show that $k_t$ can be approximated in a certain sense by a sequence of `correlation functions', for which (\ref{y6}) holds. To this end we use two sequences
$\{\Lambda_n\}_{n\in \mathbb{N}} \subset \mathcal{B}_{\rm b}(\mathbb{R}^d)$
and $\{N_l\}_{l\in \mathbb{N}} \subset \mathbb{N}$. Both are increasing, and $\{\Lambda_n\}_{n\in \mathbb{N}}$ is exhausting, which means
that each $\Lambda\in \mathcal{B}_{\rm b}(\mathbb{R}^d)$ is contained in $\Lambda_n$ with big enough $n$.

Given $\mu_0 \in \mathcal{M}_{\alpha^*}$, let $k_0 \in \mathcal{K}_{\alpha^*}$ be such that $k_{\mu_0}= k_0$. Recall that this means that the projections $\mu^\Lambda$ are absolutely continuous with respect to $\lambda$, see (\ref{9AA}) and (\ref{y1}). For this $\mu_0$, and for $\Lambda_n$ and $N_l$ as above, we set
\begin{equation}
 \label{y8}
 R^{\Lambda_n, N_l}_0 (\eta) = R^{\Lambda_n}_0 (\eta) I_{N_l} (\eta),
\end{equation}
where $R^{\Lambda_n}_0$ is the local density as in (\ref{y1}), and
\begin{equation}
 \label{y9}
 I_{N} (\eta) := \left\{ \begin{array}{ll} 1, \quad \ &{\rm if} \ |\eta|\leq N;\\[.3cm] 0, \quad \ &{\rm otherwise}. \end{array}\right.
\end{equation}
Noteworthy, $R^{\Lambda_n, N_l}_0$ is a positive element of $L^1(\Gamma_0, d\lambda)$
with $\|R^{\Lambda_n, N_l}_0\|_{L^1(\Gamma_0, d\lambda)}\leq 1$,
and $R^{\Lambda_n, N_l}_0\in \mathcal{R}_{\alpha}$ for any $\alpha >0$. Indeed, cf. (\ref{y1A}),
\[
\Bigl\Vert R^{\Lambda_n, N_l}_0\Bigr\Vert_{\mathcal{R}_{\alpha}} = \sum_{m=0}^{N_l} \frac{r_m}{m!} e^{\alpha m},
\qquad r_m:= \Bigl\Vert\bigl(R^{\Lambda_n, N_l}_0\bigr)^{(m)}\Bigr\Vert_{L^1(\mathbb{R}^{md})}.
\]
Then, for any $\alpha >0$ and any $t\geq 0$, we can apply ${S}_\alpha^\dagger(t)$, as in Theorem~\ref{y3tm}, and obtain
\begin{equation}
 \label{y10}
 R^{\Lambda_n, N_l}_t = {S}^\dagger_\alpha(t) R^{\Lambda_n, N_l}_0 \in \mathcal{R}^{+}_{\alpha}:=
 \{ R\in \mathcal{R}_{\alpha} : R\geq 0\},
\end{equation}
which yields, cf. (\ref{A13}),
\begin{equation*}
 R^{\Lambda_n, N_l}_t = R^{\Lambda_n, N_l}_0 + \int_0^t {L}^\dagger_\alpha R^{\Lambda_n, N_l}_s ds .
\end{equation*}
For $G_0\in B_{\rm bs}^+(\Gamma_0)$, see (\ref{9AY}), let us consider
\begin{equation}
 \label{A16}
F_0 (\eta) = \sum_{\xi\subset \eta} G_{0} (\xi).
\end{equation}
Since $G_0(\xi) =0$ for all $\xi$ such that $|\xi|$ exceeds some $N(G_0)$, see (\ref{6A}), we have that
\begin{equation}
 \label{y120}
 |F_0 (\eta)| \leq (1+|\eta|)^{N(G_0)} C(G_0),
\end{equation}
for some $ C(G_0)>0$, and hence $F_0\in \mathcal{F}_\alpha$ for any $\alpha >0$. Therefore, the map $\mathcal{R}_{\alpha} \ni R \mapsto \langle \! \langle
 F_0 , R \rangle \! \rangle$ is continuous, and thus we can write, see (\ref{A9}),
 \begin{eqnarray}
  \label{y12}
\langle \! \langle F_0 ,R^{\Lambda_n}_t \rangle \! \rangle & = &  \langle \! \langle F_0 ,R^{\Lambda_n}_0 \rangle \! \rangle
+ \int_0^t  \langle \! \langle F_0 ,{L}^\dagger_\alpha R^{\Lambda_n}_s \rangle \! \rangle ds \\[.2cm]
& = &  \langle \! \langle F_0 ,R^{\Lambda_n}_0 \rangle \! \rangle + \int_0^t
\langle \! \langle \widetilde{L}^\dagger_\alpha F_0 ,R^{\Lambda_n}_s \rangle \! \rangle ds . \nonumber
\end{eqnarray}
Now we set, cf. (\ref{y2}),
\begin{equation}
 \label{y13}
 q^{\Lambda_n, N_l}_t (\eta) = \int_{\Gamma_0} R^{\Lambda_n, N_l}_t (\eta \cup \xi) \lambda (d \xi), \qquad t\geq 0.
\end{equation}
For $G_0\in B_{\rm bs}^+(\Gamma_0)$ and any $t\geq 0$, by (\ref{12A}) and (\ref{A16}) we have
\begin{eqnarray}
 \label{y14}
\langle \! \langle G_0 ,q^{\Lambda_n, N_l}_t \rangle \! \rangle & = &
\int_{\Gamma_0} \int_{\Gamma_0} G_0(\eta)R^{\Lambda_n, N_l}_t (\eta \cup \xi) \lambda (d\xi)
\lambda (d \eta) \\[.2cm]& = &
\int_{\Gamma_0} \biggl( \,\sum_{\xi \subset \eta} G_0 (\xi)\biggr) R^{\Lambda_n, N_l}_t (\eta )
 \lambda (d \eta)\nonumber \\[.2cm] & = &
 \langle \! \langle F_0 ,R^{\Lambda_n, N_l}_t \rangle \! \rangle , \nonumber
\end{eqnarray}
which in view of (\ref{9AY}) and (\ref{y10}) yields
\begin{equation}
 \label{y15}
\langle \! \langle G_0 ,q^{\Lambda_n, N_l}_t \rangle \! \rangle \geq 0.
\end{equation}
Applying again (\ref{12A}), for $\alpha >0$ we obtain, cf. (\ref{y14}),
\begin{eqnarray*}
\int_{\Gamma_0} e^{\alpha|\eta|} q^{\Lambda_n, N_l}_t (\eta) \lambda (d \eta) & = &
\int_{\Gamma_0} \biggl( \, \sum_{\xi \subset \eta} e^{\alpha |\xi|} \biggr)R^{\Lambda_n, N_l}_t (\eta ) \lambda (d \eta)\\[.2cm]
& = & \int_{\Gamma_0}\left( 1 + e^\alpha\right)^{|\eta|}R^{\Lambda_n, N_l}_t (\eta ) \lambda (d \eta).
\end{eqnarray*}
Since both $q^{\Lambda_n, N_l}_t$ and $R^{\Lambda_n, N_l}_t$ are positive and $R^{\Lambda_n, N_l}_t$ is in $\mathcal{R}_{\alpha'}$ for any $\alpha'>0$, the latter yields that, for any $\alpha>0$ and $t\geq 0$,
\begin{equation*}
q^{\Lambda_n, N_l}_t \in \mathcal{R}_{\alpha}.
\end{equation*}
As was already mentioned, our aim is to show that, in a weak sense, $q^{\Lambda_n, N_l}_t$ converges to $k_t$ as in Theorems~\ref{2tm}. Note that $k_t$ belongs to $\mathcal{K}_{\alpha_*}$, which is a completely different space than $\mathcal{R}_{\alpha_*}$, see (\ref{z300}).

To proceed further, we need to define the action of powers of $\widehat{L}$ as in (\ref{z})
on suitable sets of functions, which include $B_{\rm bs}(\Gamma_0)$. Recall that any function
$h: \Gamma_0 \to \mathbb{R}$ is a sequence of symmetric functions $h^{(n)} : (\mathbb{R}^{d})^n \to \mathbb{R}$, $n\in \mathbb{N}_0$, where
$h^{(0)}$ is a constant function. Let $\mathcal{H}_{\rm fin}$ be the set of measurable
$h: \Gamma_0 \to \mathbb{R}$, for each of which there exists $N(h)\in \mathbb{N}_0$ such that $h^{(n)} = 0$ whenever
$n> N(h)$. Then we set
\begin{eqnarray}
 \label{y17}
\mathcal{H}^1_{\rm fin} & = & \{h \in \mathcal{H}_{\rm fin}: h^{(n)} \in L^1 \left((\mathbb{R}^{d})^n\right), \quad {\rm for} \ \ n\leq N(h)\},\\[.2cm]
\mathcal{H}^\infty_{\rm fin} & = & \{h \in \mathcal{H}_{\rm fin}: h^{(n)} \in L^\infty \left((\mathbb{R}^{d})^n\right),
\quad {\rm for} \ \ n\leq N(h)\}.\nonumber
\end{eqnarray}
Note that
\begin{equation}
 \label{y18}
B_{\rm bs} (\Gamma_0) \subset \mathcal{H}^1_{\rm fin} \cap \mathcal{H}^\infty_{\rm fin},
\end{equation}
and, for any $\alpha >0$ and $\alpha' \in \mathbb{R}$,
\begin{equation}
 \label{y19}
\mathcal{H}^1_{\rm fin} \subset \mathcal{R}_{\alpha}, \qquad \mathcal{H}^\infty_{\rm fin} \subset \mathcal{K}_{\alpha'}.
\end{equation}
Furthermore, cf. (\ref{7A}) and (\ref{y120}), for any $\alpha>0$,
\begin{equation}
 \label{y190}
K : \mathcal{H}^\infty_{\rm fin} \to \mathcal{F}_\alpha.
\end{equation}
Let $A$ and $B$ be as in (\ref{z}) and (\ref{z1}), (\ref{z2}). Then, for $G\in \mathcal{H}^1_{\rm fin}\cap \mathcal{H}^\infty_{\rm fin}$ and $n\in \mathbb{N}_0$,
we have, see (\ref{14A}),
\begin{eqnarray*}
 \bigl\Vert \left( A G\right)^{(n)} \bigr\Vert_{L^\infty((\mathbb{R}^{d})^n)}
 & \leq & \left(nm + n^2 \|a^{-}\| \right) \bigl\Vert G^{(n)}\bigr\Vert_{L^\infty((\mathbb{R}^{d})^n)}\\[.2cm]
& & \ \qquad \qquad \qquad \qquad \qquad + n \langle a^{+} \rangle \bigl\Vert G^{(n+1)}\bigr\Vert_{L^\infty((\mathbb{R}^{d})^n)} , \\[.3cm]
 \bigl\Vert \left( B G\right)^{(n)} \bigr\Vert_{L^\infty((\mathbb{R}^{d})^n)} & \leq & n (n-1) \|a^{-}\|\bigl\Vert G^{(n+1)}\bigr\Vert _{L^\infty((\mathbb{R}^{d})^n)}\\[.2cm]
& & \ \qquad \qquad \qquad \qquad \qquad + n \langle a^{+} \rangle \bigl\Vert G^{(n)}\bigr\Vert _{L^\infty((\mathbb{R}^{d})^n)} , \\[.3cm]
 \bigl\Vert \left( A G\right)^{(n)} \bigr\Vert _{L^1((\mathbb{R}^{d})^n)}& \leq & \left(nm + n^2 \|a^{-}\| \right) \bigl\Vert G^{(n)}\bigr\Vert _{L^1((\mathbb{R}^{d})^n)} \\[.2cm]
& & \ \qquad \qquad \qquad \qquad \qquad + n \| a^{+} \| \bigl\Vert G^{(n+1)}\bigr\Vert_{L^1((\mathbb{R}^{d})^n)} , \\[.3cm]
\bigl\Vert \left( B G\right)^{(n)} \bigr\Vert_{L^1((\mathbb{R}^{d})^n)} & \leq & n \langle a^{-}
 \rangle \bigl\Vert G^{(n-1)}\bigr\Vert _{L^1((\mathbb{R}^{d})^n)}\\[.2cm]
 & & \ \qquad \qquad \qquad \qquad \qquad + n \| a^{+} \| \bigl\Vert G^{(n)}\bigr\Vert _{L^1((\mathbb{R}^{d})^n)} .
\end{eqnarray*}
Thus, $\widehat{L}$ given in (\ref{z}) can be defined on both sets (\ref{y17}) and
\begin{equation}
 \label{y20}
N({\widehat{L}G}) = N(G) + 1, \qquad \widehat{L}:\mathcal{H}^1_{\rm fin} \cap \mathcal{H}^\infty_{\rm fin}
 \to \mathcal{H}^1_{\rm fin} \cap \mathcal{H}^\infty_{\rm fin}.
\end{equation}
Now we fix $\Lambda_n$ and $N_l$, and let $\mu_0$ be in $\mathcal{M}_{\alpha^*}(\Gamma)$.
Then for $\eta \in \Gamma_{\Lambda_n}$, $k_0(\eta)$ is given by (\ref{9AA}), and hence, see (\ref{y13}), (\ref{y8}), and (\ref{y2}),
\begin{equation}
 \label{y170}
q^{\Lambda_n, N_l}_0 (\eta)
 \leq \int_{\Gamma_0} R_0^{\Lambda_n} (\eta \cup \xi) \lambda (d \xi) = k_0 (\eta), \quad \eta \in \Gamma_{\Lambda_n},
\end{equation}
which can readily be extended to all $\eta \in \Gamma_0$. Thus, $q^{\Lambda_n, N_l}_0 \in \mathcal{K}_{\alpha^*}$.
For $t\in [0, T_*)$, let $k_t^{\Lambda_n, N_l}$ be the solution of the problem (\ref{20A}) with
$k_t^{\Lambda_n, N_l}|_{t=0} = q_0^{\Lambda_n, N_l}$, as in Theorem~\ref{2tm}. Then, for $t\in (0,T_*)$,
\begin{equation*}
k_t^{\Lambda_n, N_l} = q_0^{\Lambda_n, N_l} + \int_0^t L^\Delta k_s^{\Lambda_n, N_l} ds,
\end{equation*}
where $L^\Delta$ is defined in (\ref{z33e}). In view of (\ref{y19}), for any
$G\in \mathcal{H}^1_{\rm fin} \cap \mathcal{H}^\infty_{\rm fin}$, we then have
\begin{equation}
 \label{y22}
\langle \! \langle G,k_t^{\Lambda_n, N_l} \rangle \! \rangle = \langle \! \langle G, q_0^{\Lambda_n, N_l} \rangle \! \rangle
 + \int_0^t \langle \! \langle \widehat{L} G,k_s^{\Lambda_n, N_l} \rangle \! \rangle ds.
\end{equation}
At the same time, for such $G$, we have that $KG$ is in each $\mathcal{F}_\alpha$, $\alpha >0$, cf. (\ref{y190}),
and hence, see (\ref{y12}) and (\ref{y14}),
\begin{equation*}
\langle \! \langle G,q_t^{\Lambda_n, N_l} \rangle \! \rangle = \langle \! \langle G, q_0^{\Lambda_n, N_l} \rangle \! \rangle
 + \int_0^t \langle \! \langle \widetilde{L}^\dagger_\alpha K G,R_s^{\Lambda_n, N_l} \rangle \! \rangle ds.
\end{equation*}
As was mentioned at the beginning of Subsection~\ref{SSec6.1}, the action of
$\widetilde{L}^\dagger_\alpha$ is described in (\ref{R20}). Thus, by (\ref{22}) we obtain from the latter
\begin{equation}
 \label{y24}
\langle \! \langle G,q_t^{\Lambda_n, N_l} \rangle \! \rangle = \langle \! \langle G, q_0^{\Lambda_n, N_l} \rangle \! \rangle
 + \int_0^t \langle \! \langle \widehat{L} G,q_s^{\Lambda_n, N_l} \rangle \! \rangle ds.
\end{equation}
For $G$ as in (\ref{y22}) and (\ref{y24}), we set
\begin{equation}
 \label{y25}
\phi (t, G) = \langle \! \langle G,k_t^{\Lambda_n, N_l} \rangle \! \rangle - \langle \! \langle G,q_t^{\Lambda_n, N_l} \rangle \! \rangle.
\end{equation}
Then
\begin{equation*}
\phi (t, G) = \int_0^t \phi (s, \widehat{L}G) ds, \qquad \phi (0, G) =0.
\end{equation*}
For any $n\in \mathbb{N}$, the latter yields
\begin{equation}
 \label{y27}
 \frac{d^n}{dt^t} \phi(t, G) = \phi (t, \widehat{L}^n G).
\end{equation}
In view of (\ref{y20}), $\phi (t,G)$ is infinitely differentiable on $(0,T_*)$, and
\[
\frac{d^n}{dt^t} \phi(0, G) =0, \qquad {\rm for} \ \ {\rm all} \ \ n
\in \mathbb{N}_0.
\]
Thus, $\phi (t,G)\equiv 0$, and hence, for all $G_0\in B_{\rm bs}^+(\Gamma_0)$, we have, see (\ref{y15}) and (\ref{y18}),
\begin{equation}
 \label{y28}
 \langle \! \langle G_0,q_t^{\Lambda_n, N_l} \rangle \! \rangle = \langle \! \langle G_0,k_t^{\Lambda_n, N_l} \rangle \! \rangle \geq 0.
\end{equation}
Now let $k_t$ be the solution of (\ref{20A}) with $k_t|_{t=0} = k_0$. In Appendix, we prove that, for any $G\in B_{\rm bs}(\Gamma_0)$,
\begin{equation}
 \label{y33}
 \langle \! \langle G,k_t \rangle \! \rangle = \lim_{n\to \infty} \lim_{l\to \infty} \langle \! \langle G,k_t^{\Lambda_n, N_l} \rangle \! \rangle,
\end{equation}
point-wise on $[0, T_*)$.
Then by (\ref{y28}) we get that, for each $t\in (0,T_*)$ and any $G\in B_{\rm bs}^+(\Gamma_0)$,
\[
\langle \! \langle G, k_t \rangle \! \rangle \geq 0,
\]
which together with the fact that $k_t\in \mathcal{K}_{\alpha^*}$ by Proposition~\ref{rhopn} yields
that $k_t$ is the correlation function for a certain unique $\mu_t \in
\mathcal{M}_{\alpha_*}(\Gamma)$.
\end{proof}
\begin{remark}
 \label{DRrk}
Theorem~\ref{r1tm} holds true for $m=0$ and $a^{+} \equiv 0$, which can be seen from (\ref{exp-est}).
\end{remark}

\begin{proposition}\label{newprop}
Let the conditions of Theorem~\ref{r1tm} hold.
Then $k_t$, as in Theorem~\ref{2tm}, and $k^D_t$, as in Theorem~\ref{3tm}, coincide for all $t\in [0,T_*)$, whenever $k^D_0 = k_0$.
\end{proposition}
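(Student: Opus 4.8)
The plan is to realise $k_t$ and $k^D_t$ as limits of the \emph{same} sequence of approximants. By the proof of Theorem~\ref{2tm}, $k_t=\lim_{n\to\infty}k^{(n)}_t$ with $k^{(n)}_t$ as in (\ref{z41}), the limit being taken in $\mathcal{K}_{\alpha}$ and uniform on each $[0,T]$, $T<T_*$, for a suitable $\alpha=\alpha(T)\in(\alpha_*,\alpha^*)$. By the proof of Theorem~\ref{3tm}, started from the same initial datum $k^D_0=k_0\in\mathcal{K}_{\alpha^*}$, one has $k^D_t=\lim_{n\to\infty}k^{D,n}_t$ with $k^{D,n}_t$ as in (\ref{z49}), with exactly the same mode of convergence (the bound there for $\|k^{D,n}_t-k^{D,n-1}_t\|_\alpha$ coincides with the right-hand side of (\ref{z44})). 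Hence it suffices to show
\[
 k^{(n)}_t=k^{D,n}_t\qquad\text{for all }n\in\mathbb{N}\text{ and }t\in[0,T_*);
\]
letting $n\to\infty$ then gives $k_t=k^D_t$ in $\mathcal{K}_{\alpha}\hookrightarrow\mathcal{K}_{\alpha_*}$, and, $T<T_*$ being arbitrary, for every $t\in[0,T_*)$.

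For the displayed identity I would compare (\ref{z41}) and (\ref{z49}) summand by summand. The $l=0$ terms coincide ($S^{\odot}(t)k_0$). For $l\ge1$, the $l$-th summands are the Bochner integrals over the simplex $\mathcal{T}_l$ of (\ref{cone}) of the continuous $\mathcal{K}_\alpha$-valued functions $K_l(t,t_1,\dots,t_l)$ and $K^D_l(t,t_1,\dots,t_l)$, given by (\ref{z37}) and (\ref{z36}). In the $K^D_l$-integral I would perform the affine substitution $u_i:=t-t_{l+1-i}$, $i=1,\dots,l$; being the composition of the coordinate reversal $(t_1,\dots,t_l)\mapsto(t_l,\dots,t_1)$ with $v\mapsto t-v$, it is a bijection of $\mathcal{T}_l$ onto itself whose Jacobian has absolute value $1$. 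Both products $K_l$ and $K^D_l$ have the structure $S^{\odot}B^{*}S^{\odot}B^{*}\cdots B^{*}S^{\odot}$ with $l+1$ factors $S^{\odot}$ separated by $l$ factors $B^{*}$, so it is enough to match the arguments of the $S^{\odot}$'s. Reading (\ref{z36}) from left to right, these arguments in $K^D_l(t,t_1,\dots,t_l)$ are $t_l,\ t_{l-1}-t_l,\ \dots,\ t_1-t_2,\ t-t_1$, while in $K_l(t,u_1,\dots,u_l)$, from (\ref{z37}), they are $t-u_1,\ u_1-u_2,\ \dots,\ u_{l-1}-u_l,\ u_l$; the substitution $u_i=t-t_{l+1-i}$ carries the first list onto the second (telescoping). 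Therefore $K^D_l(t,t_1,\dots,t_l)=K_l(t,u_1,\dots,u_l)$, whence $\int_{\mathcal{T}_l}K^D_l\,dt_1\cdots dt_l=\int_{\mathcal{T}_l}K_l\,du_1\cdots du_l$, which is precisely the $l$-th summand of (\ref{z41}). Summing over $l$ proves $k^{D,n}_t=k^{(n)}_t$, and passing to the limit yields the assertion. (The same change of variables applied on the quasi-observable side gives $G^{(n)}_t=G^{D,n}_t$, hence $G_t=G^D_t$ as well.)

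The only delicate point — and the main obstacle, though a mild one — is that the two operator products in (\ref{z37}) and (\ref{z36}) are a priori routed along \emph{different} chains of scale spaces $\mathcal{K}_{\alpha_s}$, cf.\ (\ref{z40}), so the operator identity $K^D_l=K_l$ has to be read in a space containing all of them. This is harmless: the operators $S^{\odot}(s)$ are all restrictions of the single family $\{S(s)^{*}\}_{s\ge0}$ and are mutually consistent on the scale $\{\mathcal{K}_\alpha\}$, while $B^{*}$ agrees with $B^{\Delta}$ of (\ref{z33e}) on each $\mathcal{K}_{\alpha'}$ with $\alpha'>\alpha$; thus any product $S^{\odot}(b_1)B^{*}S^{\odot}(b_2)\cdots B^{*}S^{\odot}(b_{l+1})$ with $b_j\ge0$ defines one and the same map from $\mathcal{K}_{\alpha^*}$ into $\mathcal{K}_{\alpha_*}$ (in particular, one and the same $\lambda$-a.e.\ function on $\Gamma_0$ when applied to $k_0$), independently of the intermediate routing, and the identity above is legitimate. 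Alternatively, one can close the argument without explicit mention of the scale: having $k^{D,n}_t=k^{(n)}_t$, combine it with the relation $\langle\!\langle G_0,k^{D,n}_t\rangle\!\rangle=\langle\!\langle G^{(n)}_t,k_0\rangle\!\rangle$ already obtained in the proof of Theorem~\ref{3tm}, and pass to the limit, to get $\langle\!\langle G_0,k_t\rangle\!\rangle=\langle\!\langle G_t,k_0\rangle\!\rangle$ for every $G_0\in\mathcal{G}_{\alpha_*}$; that is, $k_t$ satisfies the defining relation (\ref{z46}) of $k^D_t$, and since $B_{\rm bs}(\Gamma_0)\subset\mathcal{G}_{\alpha_*}$ and the pairing (\ref{19A}) with $B_{\rm bs}(\Gamma_0)$ separates the points of $\mathcal{K}_{\alpha_*}$, this forces $k_t=k^D_t$ for all $t\in[0,T_*)$.
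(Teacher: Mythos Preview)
Your argument is correct and is a genuinely different---and considerably shorter---route than the one in the paper. The key observation that the substitution $u_i=t-t_{l+1-i}$ carries the simplex $\mathcal{T}_l$ onto itself with unit Jacobian and transforms the list of $S^{\odot}$-arguments in $K^D_l$ into that of $K_l$ is exactly right; since both products have the common structure $S^{\odot}B^{*}S^{\odot}\cdots B^{*}S^{\odot}$, this yields $K^D_l(t,t_1,\dots,t_l)=K_l(t,u_1,\dots,u_l)$ and hence $k^{D,n}_t=k^{(n)}_t$ termwise. Your handling of the ``routing'' issue is adequate and, in fact, slightly overcautious: for a fixed $l$ the chain (\ref{z40}) used to define $K_l$ in (\ref{z37}) and $K^D_l$ in (\ref{z36}) is the \emph{same} chain $\alpha_0>\alpha_1>\dots>\alpha_{2l+1}$, and after the change of variables the two products apply the very same operators in the very same order through that chain, so no extra consistency argument is really needed.

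By contrast, the paper does not compare $k^{(n)}_t$ and $k^{D,n}_t$ directly. It revisits the local-density machinery of Lemma~\ref{y3tm} and the approximants $q^{\Lambda_n,N_l}_t$, builds a second comparison function $\phi(t,h)=\langle\!\langle\widetilde G_t,h\rangle\!\rangle-\langle\!\langle G_t,h\rangle\!\rangle$ on $\mathcal{H}^1_{\rm fin}\cap\mathcal{H}^\infty_{\rm fin}$, shows all its $t$-derivatives vanish at $0$, and then passes through the double limit (\ref{y33}). That approach genuinely uses the hypothesis (\ref{yy12}) (via the stochastic semigroup $S^\dagger$), whereas your change-of-variables proof needs only (\ref{z14}) and (\ref{z16}); so you have in fact established the identity $k_t=k^D_t$ under strictly weaker assumptions than those stated in the proposition. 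What the paper's longer route buys is a unified treatment with Theorem~\ref{r1tm}: the same approximation by $q^{\Lambda_n,N_l}_t$ simultaneously delivers positivity of $k^D_t$ and the identification with $k_t$. Your argument is cleaner for the identification alone.
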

\begin{proof}
As in the proof of Theorem~\ref{r1tm}, we are going to show that $k_t^D$ can be approximated by the same sequence of `correlation functions' (\ref{y13}). For $G_0\in B_{\rm bs}(\Gamma_0)$, let $F_0$ be as in (\ref{A16}). Since $F_0$ is polynomially bounded, see
(\ref{y120}), we have that $F_0 \in \mathcal{F}_{\alpha'}$ for any $\alpha' < \alpha$, where $\alpha$ is as in Theorem~\ref{y3tm}.
 Then we can apply (\ref{A12}) and obtain (\ref{A13}). For fixed $\Lambda_n$ and $N_l$, $R_0^{\Lambda_n, N_l}$
 is in any $\mathcal{R}_{\alpha'}$, and hence the map
 \[
 \mathcal{F}_\alpha \ni F \mapsto \langle \! \langle F, R_0^{\Lambda_n, N_l} \rangle \! \rangle \in \mathbb{R}
\]
is continuous. Since the Bochner integral in (\ref{A13}) is in $\mathcal{F}_\alpha$, we have
\begin{equation}
 \label{A15}
\langle \! \langle F_t , R_0^{\Lambda_n, N_l} \rangle \! \rangle = \langle \! \langle F_0 , R_0^{\Lambda_n, N_l}\rangle \! \rangle + \int_0^t
\langle \!\langle F_s , \widetilde{L}^\dagger_\alpha R_0^{\Lambda_n, N_l} \rangle \! \rangle ds.
\end{equation}
On the other hand,
\begin{equation*}
\widetilde{G}_t (\eta) := \sum_{\xi \subset \eta} (-1)^{|\eta \setminus \xi|} F_t(\xi)
\end{equation*}
is in $\mathcal{F}_\beta$, $\beta = \log ( 1 + e^\alpha)$. Thus, we can rewrite (\ref{A15}) in the form
\begin{equation}
 \label{A18}
\langle \! \langle \widetilde{G}_t , q_0^{\Lambda_n, N_l} \rangle \! \rangle = \langle \! \langle G_0 , q_0^{\Lambda_n, N_l} \rangle \! \rangle + \int_0^t
\langle \! \langle \widetilde{G}_s , {L}^\Delta q_0^{\Lambda_n, N_l} \rangle \! \rangle ds.
\end{equation}
For the evolution $G_0 \mapsto {G}_t\in\mathcal{G}_{\alpha^*}$ described by Theorem~\ref{1tm}, in a similar way we have
\begin{equation}
 \label{A19}
\langle \!\langle {G}_t , q_0^{\Lambda_n, N_l} \rangle \! \rangle = \langle \! \langle G_0 , q_0^{\Lambda_n, N_l}\rangle \! \rangle + \int_0^t
\langle \! \langle {G}_s , {L}^\Delta q_0^{\Lambda_n, N_l} \rangle \! \rangle ds.
\end{equation}
It is easy to see that $N(q_0^{\Lambda_n, N_l}) = N_l$ and, for any $m\leq N_l$, cf. (\ref{y170}),
\begin{eqnarray*}
\Bigl\Vert\bigl(q_0^{\Lambda_n, N_l}\bigr)^{(m)}\Bigr\Vert_{L^\infty(\mathbb{R}^{md})} & \leq & \bigl\Vert k_0^{(m)}\bigr\Vert_{L^\infty(\mathbb{R}^{md})},\\[.2cm]
\Bigl\Vert\bigl(q_0^{\Lambda_n, N_l}\bigr)^{(m)}\Bigr\Vert_{L^1(\mathbb{R}^{md})} & \leq & \sum_{s=0}^{N_l - m} \frac{1}{s!} \Bigl\Vert\bigl(R_0^{\Lambda_n}\bigr)^{(m+s)}\Bigr\Vert_{L^1(\Lambda^{m})}.
\end{eqnarray*}
Hence,
\begin{equation*}
q_0^{\Lambda_n, N_l} \in \mathcal{H}^1_{\rm fin} \cap \mathcal{H}^\infty_{\rm fin}.
\end{equation*}
Similarly as in (\ref{y20}), one can show that
\[
L^\Delta : \mathcal{H}^1_{\rm fin} \cap \mathcal{H}^\infty_{\rm fin} \to \mathcal{H}^1_{\rm fin} \cap \mathcal{H}^\infty_{\rm fin}.
\]
Now, for $h\in \mathcal{H}^1_{\rm fin} \cap \mathcal{H}^\infty_{\rm fin}$, we introduce, cf. (\ref{y25}),
\begin{equation*}
\phi(t, h) = \langle \! \langle \widetilde{G}_t , h \rangle \! \rangle - \langle \! \langle {G}_t , h \rangle \! \rangle,
\end{equation*}
for which by (\ref{A19}) and (\ref{A18}) we get
\begin{equation*}
\phi(t, h) = \int_0^t \phi(s, L^\Delta h) ds, \qquad \phi(0,h) =0.
\end{equation*}
Employing the same arguments as in (\ref{y27}), (\ref{y28}) we then obtain
\begin{equation}
 \label{y37}
 \langle \!\langle \widetilde{G}_t , q_0^{\Lambda_n, N_l} \rangle \! \rangle = \langle \!\langle {G}_t , q_0^{\Lambda_n, N_l} \rangle \! \rangle.
\end{equation}
On the other hand, by (\ref{z46}) we have
\begin{equation*}
 \langle \!\langle {G}_t , q_0^{\Lambda_n, N_l} \rangle \! \rangle = \langle \!\langle {G}_0 , \tilde{k}_t^{\Lambda_n, N_l} \rangle \! \rangle,
\end{equation*}
where the evolution $q_0^{\Lambda_n, N_l} \mapsto \tilde{k}_t^{\Lambda_n, N_l}$ is described by Theorem~\ref{3tm}.
At the same time, see (\ref{y14}),
\begin{eqnarray*}
 \langle \!\langle \widetilde{G}_t , q_0^{\Lambda_n, N_l} \rangle \! \rangle & = &
  \langle \!\langle F_t , R_0^{\Lambda_n, N_l} \rangle \! \rangle = \langle \!\langle F_0 , R_t^{\Lambda_n, N_l} \rangle \! \rangle \\[.2cm]
  & = & \langle \!\langle {G}_0 , q_t^{\Lambda_n, N_l} \rangle \! \rangle , \nonumber
\end{eqnarray*}
where $q_t^{\Lambda_n, N_l}$ is the same as in (\ref{y13}) and (\ref{y28}). Then (\ref{y37}) can be rewritten
\begin{equation*}
 \langle \!\langle {G}_0 , \tilde{k}_t^{\Lambda_n, N_l} \rangle \! \rangle = \langle \!\langle {G}_0 , q_t^{\Lambda_n, N_l} \rangle \! \rangle ,
\end{equation*}
which holds for all $G_0 \in B_{\rm bs}(\Gamma_0)$. Then, by (\ref{y28}) we have that, for all $G_0 \in B_{\rm bs}(\Gamma_0)$,
\begin{equation*}
 \langle \!\langle {G}_0 , \tilde{k}_t^{\Lambda_n, N_l} \rangle \! \rangle = \langle \!\langle {G}_0 , k_t^{\Lambda_n, N_l} \rangle \! \rangle ,
\end{equation*}
 and, for $G_0 \in B^{+}_{\rm bs}(\Gamma_0)$,
\begin{equation*}
 \langle \!\langle {G}_0 , \tilde{k}_t^{\Lambda_n, N_l} \rangle \! \rangle \geq 0.
\end{equation*}
At the same time, by (\ref{z46}) we have
\begin{eqnarray*}
& & \langle \! \langle G_0,{k}^D_t \rangle \! \rangle - \langle \! \langle G_0,\tilde{k}_t^{\Lambda_n, N_l} \rangle \! \rangle =
\langle \! \langle G_t,k_0 \rangle \! \rangle - \langle \! \langle G_t,q_0^{\Lambda_n, N_l} \rangle \! \rangle \\[.2cm] & & \qquad
= \int_{\Gamma_0} G_t(\eta) k_0 (\eta) \bigl( 1 - \mathbb{I}_{\Gamma_{\Lambda_n}} (\eta)\bigr)\lambda (d \eta) \nonumber
\\[.2cm] & & \qquad\qquad +\int_{\Gamma_0}G_t(\eta) \Bigl[ k_0 (\eta)\mathbb{I}_{\Gamma_{\Lambda_n}} (\eta) - q_0^{\Lambda_n, N_l} (\eta)\Bigr]\lambda (d \eta). \nonumber
\end{eqnarray*}
Then exactly as in (\ref{y33}) we obtain
\begin{equation*}
\langle \! \langle G_0,k^D_t \rangle \! \rangle = \lim_{n\to \infty} \lim_{l\to \infty}
\langle \! \langle G_0,q_t^{\Lambda_n, N_l} \rangle \! \rangle
= \lim_{n\to \infty} \lim_{l\to \infty}
\langle \! \langle G_0,\tilde{k}_t^{\Lambda_n, N_l} \rangle \! \rangle,
\end{equation*}
which holds for any $G_0 \in B_{\rm bs}(\Gamma_0)$. Thus, for all $G_0 \in B^{+}_{\rm bs}(\Gamma_0)$,
\begin{eqnarray*}
& {\rm (a)} & \forall G_0 \in B_{\rm bs}(\Gamma_0) \quad \langle \! \langle G_0,k^D_t \rangle \! \rangle
= \langle \! \langle G_0,k_t \rangle \! \rangle, \\[.2cm]
& {\rm (b)} & \forall G_0 \in B^{+}_{\rm bs}(\Gamma_0) \quad
\langle \! \langle G_0,k^D_t \rangle \! \rangle \geq0.
\end{eqnarray*}
The latter property yields that $k^D_t$ is a correlation function. To complete the proof we have to show that (a) implies $k_t= k^D_t$. In the topology induced from $\Gamma$, each $\Gamma_\Lambda$, $\Lambda \in \mathcal{B}_{\rm b}(\mathbb{R}^d)$, is a Polish space. Let $\mathcal{C}_\Lambda$ be the set of all bounded continuous $G: \Gamma_\Lambda \to \mathbb{R}$. Since (a) holds for all $G\in \mathcal{C}_\Lambda\cap B_{\rm bs}(\Gamma_0)$, the projections of the correlation measures (\ref{9A}) corresponding to $k_t$ and $k^D_t$ on each $\mathcal{B}(\Gamma_\Lambda)$ coincide, see, e.g. Proposition 1.3.27 in \cite{A}. This yields $k_t=k^D_t$, and hence completes the proof.
 \end{proof}

\section{Concluding Remarks}
\label{SCon}

In Theorem~\ref{1tm}, we have shown that the evolution of quasi-observables exists
with the only condition that (\ref{z14}) holds. However, this evolution is restricted in time
and takes place in the scale of Banach spaces (\ref{z3}), (\ref{z5}).
In \cite{Dima}, the analytic semigroup that defines the evolution of quasi-observables was constructed.
Thus, the evolution $G_0 \mapsto G_t$ defined by this semigroup takes place in one space $\mathcal{G}_\alpha$ for all $t>0$.
However, this result was obtained under the additional condition that in our notations takes the form
\begin{equation}
 \label{Conc}
m > 4 \bigl(\langle a^{-} \rangle e^{-\alpha} + \langle a^{+} \rangle\bigr),
\end{equation}
by which the constant mortality should dominate not only the dispersion but also the competition.
In our Theorem~\ref{1tm} the value of $m$ can be arbitrary and even equal to zero.

As was shown in \cite{DimaN2}, under conditions similar to (\ref{z14}) and (\ref{Conc})
the corresponding semigroup evolution $k_0 \mapsto k_t$ exists in a proper Banach subspace of $\mathcal{K}_\alpha$, cf. (\ref{z300}).
In our Theorem~\ref{2tm}, we construct the evolution $k_0 \mapsto k_t$ in the scale of spaces (\ref{z31}), restricted in time,
but under the condition of (\ref{z14}) only. Hence, it holds for any $m\geq 0$.
Note that the problem of whether $k_t$ for $t>0$ are correlation functions of probability measures on $\mathcal{B}(\Gamma)$,
provided $k_0$ is, has not been studied yet in the literature. In our Theorem~\ref{r1tm}, we prove that this
evolution corresponds to the evolution of probability measures if
$m > \langle a^{+} \rangle$ holds in addition to (\ref{z14}), cf.~Remark~\ref{DRrk}.

\appendix

\section{Proof of \eqref{y33}}

For fixed $t\in (0,T_*)$ and $G_0\in B_{\rm bs}(\Gamma_0)$, by (\ref{z47}) we have
\begin{equation}
 \label{y28a}
 \langle \! \langle G_0,k_t \rangle \! \rangle - \langle \! \langle G_0,k_t^{\Lambda_n, N_l} \rangle \! \rangle = \langle \! \langle G_t^D,k_0 \rangle \! \rangle - \langle \! \langle G_t^D,q_0^{\Lambda_n, N_l} \rangle \! \rangle = \mathcal{I}_n^{(1)}+
 \mathcal{I}_{n,l}^{(2)},
\end{equation}
where we set
\begin{eqnarray*}
 \mathcal{I}_n^{(1)}
 &= & \int_{\Gamma_0} G_t^D(\eta) k_0 (\eta) \bigl( 1 - \mathbb{I}_{\Gamma_{\Lambda_n}} (\eta)\bigr)\lambda (d \eta),
\\[.2cm] \mathcal{I}_{n,l}^{(2)} & = & \int_{\Gamma_0}G_t^D(\eta) \Bigl[ k_0 (\eta)\mathbb{I}_{\Gamma_{\Lambda_n}} (\eta) - q_0^{\Lambda_n, N_l} (\eta)\Bigr]\lambda (d \eta). \nonumber
\end{eqnarray*}
Let us prove that, for an arbitrary $\varepsilon >0$,
\begin{equation}
 \label{y29a}
 |\mathcal{I}_n^{(1)}|< \varepsilon/2,
\end{equation}
for sufficiently big $\Lambda_n$. Recall that $k_0$ is a correlation function, and hence is positive.
Taking into account that
\[
 \mathbb{I}_{\Gamma_\Lambda} (\eta) = \prod_{x\in \eta} \mathbb{I}_\Lambda (x),
\]
we have
\begin{eqnarray}
\label{Ap1}
|\mathcal{I}_n^{(1)}| & \leq & \int_{\Gamma_0} \left\vert G^D_t(\eta)\right\vert k_0 (\eta)
(1- \mathbb{I}_{\Gamma_{\Lambda_n}}(\eta)) \lambda (d \eta) \\[.2cm] & = & \sum_{p=1}^\infty \frac{1}{p!}
 \int_{(\mathbb{R}^d)^p}\left\vert(G_t^D)^{(p)} (x_1, \dots x_p)\right\vert k_0^{(p)}(x_1, \dots x_p)\nonumber \\[.2cm]
&& \times J_{\Lambda_n} (x_1, \dots x_p) d x_1 \cdots d x_p,\nonumber
\end{eqnarray}
where
\begin{eqnarray}
 J_\Lambda (x_1 , \dots , x_p) &:=& 1 - \mathbb{I}_\Lambda (x_1) \cdots \mathbb{I}_\Lambda (x_p) \nonumber\\[.2cm]
&=&  \mathbb{I}_{\Lambda^c} (x_1) \mathbb{I}_\Lambda (x_2)\cdots \mathbb{I}_\Lambda (x_p) +
\mathbb{I}_{\Lambda^c} (x_2) \mathbb{I}_\Lambda (x_3)\cdots \mathbb{I}_\Lambda (x_p) \nonumber \\[.2cm]& & +
\cdots + \mathbb{I}_{\Lambda^c} (x_{p-1}) \mathbb{I}_{\Lambda} (x_p) + \mathbb{I}_{\Lambda^c} (x_p),
 \nonumber \\[.2cm] & \leq& \sum_{s=1}^p \mathbb{I}_{\Lambda^c} (x_{s}), \label{Ap2}
\end{eqnarray}
and $\Lambda^c := \mathbb{R}^d \setminus \Lambda$.
Taking into account that $k_0\in \mathcal{K}_{\alpha^*}$, cf. (\ref{z300}) and (\ref{z30}), by (\ref{Ap2}) we obtain in (\ref{Ap1})
\begin{equation}
 \label{Ap3}
 |\mathcal{I}_n^{(1)}| \leq \|k_0\|_{\alpha^*} \sum_{p=1}^\infty \frac{p}{p!} e^{-\alpha^* p} \int_{\Lambda_n^c} \int_{(\mathbb{R}^d)^{p-1}}
\Bigl\vert(G_t^D)^{(p)} (x_1, \dots x_p)\Bigr\vert d x_1 \cdots d x_p.
\end{equation}
For $t$ as in (\ref{y28a}), one finds $\alpha< \alpha^*$ such that
$G_t^D \in \mathcal{G}_{\alpha}$, see Remark~\ref{Grk}. For this $\alpha$ and $\varepsilon$ as in (\ref{y29a}), we pick $\bar{p}\in \mathbb{N}$
such that
\begin{equation}
\label{Ap4}
\sum_{p=\bar{p}+1}^\infty \frac{e^{-\alpha p}}{p!}
 \int_{(\mathbb{R}^d)^p}\left\vert(G_t^D)^{(p)} (x_1, \dots x_p)\right\vert d x_1 \cdots d x_p <  \frac{ \varepsilon e(\alpha^* - \alpha)}{ 4 \|k_0\|_{\alpha^*}}.
\end{equation}
Then we apply (\ref{Ap4}) and the following evident estimate
\[
p e^{-\alpha^* p} \leq e^{-\alpha p} / e(\alpha^* - \alpha),
\]
and obtain in (\ref{Ap3}) the following
\begin{eqnarray*}
|\mathcal{I}_n^{(1)}| & < & \frac{\|k_0\|_{\alpha^*}}{e(\alpha^* - \alpha)} \sum_{p=1}^{\bar{p}} \frac{p}{p!} e^{-\alpha^* p} \int_{\Lambda_n^c} \int_{(\mathbb{R}^d)^{p-1}}
\left\vert(G_t^D)^{(p)} (x_1, \dots x_p)\right\vert d x_1 \cdots d x_p + \frac{\varepsilon}{4}.
\end{eqnarray*}
Here the first term contains a finite number of summands, in each of which $(G_t^D)^{(p)} $ is in $L^1 ((\mathbb{R}^d)^p)$.
Hence, it can be made strictly smaller than $\varepsilon/4$ by picking big enough
$\Lambda_n$, which yields (\ref{y29a}).

Let us show the same for the second integral in (\ref{y28a}). Write, see (\ref{y2}), (\ref{y13}), (\ref{y8}), and (\ref{y9}),
\begin{eqnarray*}
\mathcal{I}^{(2)}_{n,l} & = & \int_{\Gamma_0} G_t^D(\eta) \int_{\Gamma_0} R^{\Lambda_n}_0 (\eta \cup \xi) \left[1- I_{N_l} (\eta \cup \xi)\right]\lambda (d\eta) \lambda (d \xi)\\ & = & \int_{\Gamma_0} F_t(\eta)R^{\Lambda_n}_0 (\eta ) \left[1- I_{N_l} (\eta)\right]\lambda (d\eta) \nonumber \\[.2cm]
& = & \sum_{m=N_l+1}^\infty \frac{1}{m!} \int_{\Lambda_n^m}\left( R^{\Lambda_n}_0 \right)^{(m)} (x_1 , \dots , x_m) F^{(m)}_t (x_1 , \dots , x_m)
dx_1 \cdots dx_m, \nonumber
\end{eqnarray*}
where
\[
F_t(\eta) := \sum_{\xi\subset \eta} G_t^D (\xi),
\]
and hence
\begin{equation}
 \label{y31}
F^{(m)}_t (x_1 , \dots , x_m)= \sum_{s=0}^m \ \sum_{\{i_1, \dots , i_s\} \subset \{1, \dots , m\}} \bigl( G_t^D\bigr)^{(s)}(x_{i_1}, \dots , x_{i_s}).
\end{equation}
By (\ref{y2}), for $x_i\in \Lambda_n$, $i=1, \dots , m$, we have
\[
k_0^{(m)}(x_1 , \dots , x_m) = \sum_{s=0}^\infty \int_{\Lambda_n^s} \bigl( R^{\Lambda_n}_0 \bigr)^{(m+s)} (x_1 , \dots , x_m, y_1, \dots y_s) dy_1\cdots d y_s,
\]
from which we immediately get that
\begin{equation*}
\bigl( R^{\Lambda_n}_0 \bigr)^{(m)} (x_1 , \dots , x_m) \leq k_0^{(m)}(x_1 , \dots , x_m) \leq e^{-\alpha^*m}\|k_0\|_{\alpha^*},
\end{equation*}
since $k_0 \in \mathcal{K}_{\alpha^*}$. Now let $\Lambda_n$ be such that (\ref{y29a}) holds.
Then we can have
\begin{equation}\label{y32a}
|\mathcal{I}^{(2)}_{n,l}| < \varepsilon /2,
\end{equation}
holding for big enough $N_l$ if
$e^{-\alpha^* |\cdot|} F_t$ is in $L^1 (\Lambda_n, d \lambda)$. By (\ref{y31}),
\begin{eqnarray*}
& & \sum_{p=0}^\infty \frac{1}{p!} e^{-\alpha^* p} \int_{\Lambda_n^p} |F^{(p)} (x_1 , \dots , x_p)| dx_1 \cdots dx_p \\[.2cm]
& & \qquad \leq \sum_{p=0}^\infty
 \sum_{s=0}^p \frac{1}{s! (p-s)!} e^{-\alpha^* s} \Bigl\Vert\bigl( G_t^D\bigr)^{(s)}\Bigr\Vert_{L^1(\Gamma_0, d\lambda)}e^{-\alpha^* (p-s)} [\ell(\Lambda_n)]^{p-s}\\[.2cm]
& & \qquad = \|G^D_t\|_{\alpha^*} \exp\Bigl( e^{-\alpha^*}\ell(\Lambda_n) \Bigr),
\end{eqnarray*}
where $\ell(\Lambda_n)$ is the Lebesgue measure of $\Lambda_n$. This yields (\ref{y32a}) and thereby also (\ref{y33}).

\section*{Acknowledgment}
This paper is partially supported by the DFG through the SFB 701
``Spektrale Strukturen und Topologische Methoden in der Mathematik''
and through the research project 436 POL 113/125/0-1.

\end{document}